\documentclass[11pt]{article}
\usepackage[letterpaper,margin=1.00in]{geometry}
\usepackage{amsmath, amssymb, amsthm, amsfonts}
\usepackage{bbm}
\usepackage{amscd}
\usepackage{mathrsfs}

\usepackage{comment} 
\usepackage{ifthen}
\usepackage{tikz}
\usetikzlibrary{positioning,decorations.pathreplacing}
\usepackage{ bbold }
\usepackage{graphicx}
\usepackage{color}
\usepackage{algorithm}
\usepackage[noend]{algpseudocode}
\usepackage{epstopdf}
\usepackage{wrapfig}
\usepackage{paralist}
\usepackage{wasysym}
\usepackage[textsize=tiny]{todonotes}

\usepackage{listings} 


\usepackage{pdflscape} 

\usepackage[most]{tcolorbox}
\tcbset{colback=gray!10, colframe=gray, boxrule=0.5pt, arc=2pt, left=4pt, right=4pt, top=4pt, bottom=4pt}

\usepackage{framed}
\usepackage[framemethod=tikz]{mdframed}
\usepackage[bottom]{footmisc}
\usepackage{enumitem}
\setitemize{noitemsep,topsep=3pt,parsep=3pt,partopsep=3pt}
\usepackage[font=small]{caption}
\usepackage{xspace}

\usepackage{thmtools} 
\usepackage{thm-restate} 


\usepackage{hyperref}
\hypersetup{
    unicode=false,          
    colorlinks=true,        
    linkcolor=red,          
    citecolor=darkgreen,        
    filecolor=magenta,      
    urlcolor=cyan           
}

\newtheorem{theorem}{Theorem}[section]
\newtheorem{lemma}[theorem]{Lemma}
\newtheorem{meta-theorem}[theorem]{Meta-Theorem}
\newtheorem{claim}[theorem]{Claim}

\usepackage[capitalize, nameinlink,noabbrev]{cleveref}

\crefname{theorem}{Theorem}{Theorems}
\crefname{proposition}{Proposition}{Propositions}
\crefname{observation}{Observation}{Observations}
\crefname{lemma}{Lemma}{Lemmas}
\crefname{claim}{Claim}{Claims}
\crefname{problem}{Problem}{Problems}
\crefname{conjecture}{Conjecture}{Conjectures}
\crefname{question}{Question}{Questions}
\crefname{example}{Example}{Examples}
\crefname{fact}{Fact}{Facts}

\definecolor{darkgreen}{rgb}{0,0.5,0}

\usepackage{algcompatible}
\algnewcommand\algorithmicswitch{\textbf{switch}}
\algnewcommand\algorithmiccase{\textbf{case}}

\algdef{SE}[SWITCH]{Switch}{EndSwitch}[1]{\algorithmicswitch\ #1\ \algorithmicdo}{\algorithmicend\ \algorithmicswitch}%
\algdef{SE}[CASE]{Case}{EndCase}[1]{\algorithmiccase\ #1}{\algorithmicend\ \algorithmiccase}%
\algtext*{EndSwitch}%
\algtext*{EndCase}%

\newcommand{\eps}{\varepsilon}

\newcommand{\local}{$\mathsf{LOCAL}\xspace$ }

\renewcommand{\P}{\textrm{P}}

\newcommand{\poly}{\operatorname{poly}}

\renewcommand{\phi}{\varphi}

\newcommand{\E}{\mathbb{E}}
\renewcommand{\Pr}{\P}

\renewcommand{\paragraph}[1]{\vspace{0.15cm}\noindent {\bf #1}:}


\newcommand{\FullOrShort}{full}
\ifthenelse{\equal{\FullOrShort}{full}}{
  
  \newcommand{\fullOnly}[1]{#1}
  \newcommand{\shortOnly}[1]{}

  }{

    \newcommand{\fullOnly}[1]{}
    \newcommand{\IncludePictures}[1]{}
   
  }


\renewcommand{\Pr}{\P}
\usepackage{appendix}
\title{Towards True Work-Efficiency in Parallel Derandomization: \\ MIS, Maximal Matching, and Hitting Set}

\begin{document}
\date{}
\author{Mohsen Ghaffari \\ \small MIT \\ \small ghaffari@mit.edu \and Christoph Grunau \\ \small ETH Zurich \\ \small cgrunau@ethz.ch }
\maketitle

\begin{abstract} Derandomization is one of the classic topics studied in the theory of parallel computations, dating back to the early 1980s. Despite much work, all known techniques lead to deterministic algorithms that are not work-efficient. For instance, for the well-studied problem of maximal independent set---e.g., [Karp, Wigderson STOC'84; Luby STOC' 85; Luby FOCS'88]---state-of-the-art deterministic algorithms require at least $m \cdot \poly(\log n)$ work, where $m$ and $n$ denote the number of edges and vertices. Hence, these deterministic algorithms will remain slower than their trivial sequential counterparts unless we have at least $\poly(\log n)$ processors. 

In this paper, we present a generic parallel derandomization technique that moves exponentially closer to work-efficiency. 
The method iteratively rounds fractional solutions representing the randomized assignments to integral solutions that provide deterministic assignments, while maintaining certain linear or quadratic objective functions, and in an \textit{essentially work-efficient} manner. As example end-results, we use this technique to obtain deterministic algorithms with $m \cdot \poly(\log \log n)$ work and $\poly(\log n)$ depth for problems such as maximal independent set, maximal matching, and hitting set. 
\end{abstract}

   \thispagestyle{empty}

{   \newpage
    \hypersetup{linkcolor=blue}
    \tableofcontents
    \setcounter{page}{0}
    \thispagestyle{empty}
}

\newpage
\setcounter{page}{1}
\section{Introduction and Related Work}
All known techniques for derandomizing parallel algorithms come with the drawback of increasing the work by a $\poly(\log n)$ factor. For instance, for the well-studied problem of maximal independent set, which was the main target of classic works by Karp and Wigderson~\cite{karp1984fast}, Luby~\cite{luby1985simple,luby1988removing}, and Alon, Babai, and Itai~\cite{alon86}, the currently best known deterministic parallel algorithms require at least $m \poly(\log n)$ work. Hence, unless we have at least $\poly(\log n)$ processors, these parallel algorithms are embarrassingly slower than their trivial sequential counterparts. This paper is centered on addressing this limitation. We present a novel, generic parallel derandomization technique via gradually rounding fractional solutions. We show its power by using it to obtain deterministic parallel algorithms---for maximal independent set, maximal matching, and hitting set---with work $m \poly(\log\log n)$, thus moving exponentially closer to true work efficiency. We next review the relevant context and state of the art, and then state our results.

\subsection{Context}
\paragraph{Parallel Model---Work, Depth, and Work-Efficiency} We follow the standard \textit{work-depth} model~\cite{jaja1992introduction, blelloch1996programming}, where the algorithm runs on $p$ processors with write and read access to a shared memory. In the case of multiple concurrent writes to a memory location, we assume that an arbitrary one takes place. In any algorithm $\mathcal{A}$, its depth $D(\mathcal{A})$ is the longest chain of computational steps in $\mathcal{A}$, each of which depends on the previous ones. In other words, this is the time that it would take the algorithm to run even if we were given an infinite number of processors. The work $W(\mathcal{A})$ is the total number of computational steps in $\mathcal{A}$. The time $T_{p}(\mathcal{A})$ to run the algorithm on $p$ processors clearly satisfies $T_{p}(\mathcal{A})\geq \max\{D(\mathcal{A}), W(\mathcal{A})/p\}$. By Brent's principle~\cite{brent1974parallel}, we also know that $T_{p}(\mathcal{A}) \leq D(\mathcal{A})+ W(\mathcal{A})/p$. 

The objective in parallel computations is to devise algorithms that run faster than their sequential counterparts, and ideally, we want a speed-up proportional to the number of processors $p$. In particular, this requires the parallel algorithm to have a work bound equal/close to the best known sequential algorithm. Algorithms whose work bound matches the sequential counterpart are called \textit{work-efficient}. A relaxation is \textit{nearly work-efficient} algorithms, whose work bound can be a $\poly(\log n)$ factor larger. These algorithms need at least $\poly(\log n)$ processors to match (or beat) the speed of single-processor computation.

We comment that much of the early work in the theory of parallel computation, in the early 1980s and before, prioritized the depth and allowed large polynomial work bounds. These results assumed a large polynomial number of processors. This was in line with the general state of algorithmic research at the time, focusing on polynomial time computation with less emphasis on the exact polynomial. However, starting in the late 1980s and certainly by the 1990s~\cite{jaja1992introduction}, the focus moved more and more on the work bound. The aim is to attain parallel algorithms that yield speed up---compared to sequential algorithms---for a relatively small number of processors (certainly much below linear, but ideally even close to constants). Indeed, this is the primary focus in recent research, and over the past decade, there have been numerous exciting results on achieving work-efficient or nearly work-efficient parallel algorithms for various problems. See, e.g.,  \cite{fineman2018nearly, jambulapati2019parallel, blelloch2020parallelism, li2020faster, andoni2020parallel, cao2020efficient,dhulipala2021theoretically,anderson2021parallel,rozhovn2022undirected,rozhovn2022deterministic, ghaffari2023work,ghaffari2024work}.

\medskip
\paragraph{State of the Art in Parallel Derandomization}
We focus here on the maximal independent set (MIS) problem, which was the target of early work on parallel derandomization, and remains a central problem in the area. Karp and Wigderson~\cite{karp1984fast} gave the first (deterministic) parallel algorithm for MIS, and along the way commented that a simpler part of it provides a randomized algorithm. Their deterministic algorithm has $\tilde{O}(n^3)$ work and $\poly(\log n)$ depth. At the core of their algorithm is a parallel derandomization of the following nature: Their argument shows that there exists an independent set with a certain high score, by doing an averaging over all $t$-subsets of a (large) set of vertices. Hence, a randomly selected $t$-subset would also achieve a similarly high score (in expectation, or after a few repetitions). To make this deterministic, they show that the space of all $t
$-subsets, which can have $n^{\Theta(t)}$ options, can be replaced with a much more compact space of certain combinatorial designs, with only $\tilde{O}(n)$ options, while retaining the averaging argument. Hence, one can check all of these options in parallel and obtain a deterministic algorithm.

Luby~\cite{luby1985simple} made the above derandomization much more generic, casting it as a derandomization for randomized algorithms with pairwise independence. He gave an elegant randomized parallel MIS algorithm with expected $O(m)$ work and $\poly(\log n)$ depth, and he showed that the analysis (of each round) can be done using merely pairwise independence. Hence, instead of the naive exponential-size space of $n$ independent random variables, the space of randomness can be limited to $O(n^2)$ points which are pairwise independent. One can check all of these in parallel and deterministically, and that results in a deterministic parallel MIS algorithm with $O(m \cdot n^2)$ work. Alon, Luby, and Babai~\cite{alon86} independently presented similar algorithms for MIS, along with the same derandomization technique for pairwise-independent randomized algorithms with work overhead $\Omega(n^2)$, and its generalization to $d$-wise for constant $d$ which has work overhead $n^{\Theta({d})}$.

In a subsequent paper, Luby~\cite{luby1988removing} noted that the main drawback of the above techniques is the blow up in the work bound. He explicitly phrased it in terms of the number of processors needed. To mitigate this issue, Luby gave an elegant parallel derandomization technique by doing a high-dimensional binary search: instead of simply running the randomized parallel algorithm for all the $\Theta(n^2)$ possible seeds of pairwise independent randomness, he proposed doing a binary search in a well-structured pairwise space, in $\Theta(\log n)$ stages. Said differently, he showed how one can view the seed space as a $\Theta(\log n)$-bit string, such that one can fix these bits one by one. For each bit, one has to compute the expectation of the desired score functions, conditioned on the bits fixed so far, and then choose the next bit so as to maintain the score. This is the most generic technique known to date for derandomizing parallel algorithms, which can usually retain work-efficiency up to $\poly(\log n)$ factors. It has played a pivotal role in other fundamental deterministic parallel algorithms. See, e.g., the work of Berger, Rompel, Shor~\cite{berger1989efficient} on parallel deterministic algorithms for set cover, and those of Motwani, Naor, Naor~\cite{motwani1989probabilistic} and Berger and Rompel~\cite{berger1989simulating} on parallel deterministic algorithms for set discrepancy and lattice approximation problems. However, the method inherently needs a $\poly(\log n)$ increase in the work bound. In particular, it involves $\Omega(\log n)$ stages of a (non-trivial, if not complex) binary search, and each of these stages uses at least linear work. This state-of-the-art gives the following disappointing impression: if we want a deterministic algorithm, we have to have at least $\poly(\log n)$ processors, or otherwise we are better off using the naive sequential algorithms (with just one processor). Our work is an attempt at understanding and bypassing this limitation.

\subsection{Our Results}
We give a novel parallel derandomization technique whose \textit{work overhead} is exponentially smaller, being $\poly(\log\log n)$ instead of $\poly(\log n)$. Our approach is quite different than the above derandomizations, and can be viewed as a simpler gradual \textit{rounding} algorithm in the following sense: The approach starts by viewing the randomized algorithms as a fractional solution, with a certain quadratic score function, and rounds the solution step by step, each time by a $2$ factor, while approximately retaining the score. As a part of the rounding, it also ensures that the problem remaining for the next rounding step is a constant factor smaller. Hence, over all iterations, this yields a deterministic solution that is essentially work-efficient. The following statement presents our concrete result for the benchmark problem of maximal independent set:

\begin{restatable}{theorem}{mis}\textnormal{(\textbf{Maximal Independent Set Algorithm})}
\label{thm:MIS}
    There is a deterministic parallel algorithm that, given any $n$-node $m$-edge graph $G=(V, E)$, computes a maximal independent set of it using $O((m+n) \poly(\log \log n))$ work and $\poly(\log n)$ depth.  
\end{restatable}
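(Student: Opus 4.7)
My plan is to derandomize the classical randomized MIS algorithm of Luby round by round, using the gradual rounding paradigm advertised in the introduction as the per-round derandomizer, and then show that the per-round work bounds telescope to $O((m+n)\poly(\log\log n))$. In each round, every vertex $v$ of the current graph $G'$ marks itself independently with probability $p_v = \Theta(1/d_{G'}(v))$; a marked vertex that is locally maximal joins the MIS, and then joined vertices, their neighbors, and all incident edges are removed. Following Luby's pairwise-independence analysis, I would attach to each round a quadratic score $\Phi(\mathbf{X}) = \sum_{\{u,v\}\in E(G')} \phi_{uv}(X_u,X_v)$ on markings $\mathbf{X}\in\{0,1\}^{V(G')}$ such that $\Phi(\mathbf{X})$ lower-bounds the number of edges removed and $\E[\Phi]=\Omega(|E(G')|)$. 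Derandomizing a round then reduces to deterministically producing $\mathbf{X}$ with $\Phi(\mathbf{X})=\Omega(|E(G')|)$.

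To derandomize one round, I would apply gradual rounding to the fractional marking vector $\mathbf{p}\in[0,1]^{V(G')}$, producing a sequence $\mathbf{p}^{(0)}=\mathbf{p},\mathbf{p}^{(1)},\ldots,\mathbf{p}^{(T)}\in\{0,1\}^{V(G')}$ in which (i) every coordinate is at most doubled per stage, $p^{(t+1)}_v\in\{0,\min(2p^{(t)}_v,1)\}$; (ii) the quadratic score, interpreted as an expectation over the residual randomness, is preserved between consecutive stages up to a $(1-1/\polylog n)$ multiplicative factor; and (iii) the number of still-fractional coordinates shrinks by a constant factor per stage. After $T=O(\log n)$ stages the vector is integral with $\Phi$ value $\Omega(\E[\Phi])$. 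The crucial work-efficiency ingredient is that a single rounding step can be implemented via a pairwise-independent distribution with an $O(\log\log n)$-bit seed: since one stage only needs to preserve $\Phi$ up to a $(1-1/\polylog n)$ factor, bucketing vertices by their current probability reduces the required second-moment concentration to a regime where a short seed suffices. Conditional expectations over this seed then cost $\poly(\log\log n)$ work per active edge at depth $\poly(\log n)$. Because the active problem shrinks geometrically in coordinate count, and one can show the associated active-edge count also shrinks geometrically in expectation, the total work of a round is $\sum_{t\geq 0} 2^{-t}|E(G')| \cdot \poly(\log\log n) = O(|E(G')|\,\poly(\log\log n))$.

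To assemble the rounds, property (ii) guarantees that each round removes a constant fraction of edges, so there are $O(\log n)$ rounds and the sizes $|E(G_i)|$ form a geometrically decreasing sequence with $\sum_i |E(G_i)| = O(m)$. Summing the per-round bound yields total work $O((m+n)\,\poly(\log\log n))$ and total depth $O(\log n)\cdot\poly(\log n)=\poly(\log n)$; isolated and very low-degree vertices are handled in a simple $O(n\,\poly(\log\log n))$ post-processing step that accounts for the additive $n$-term.

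The step I expect to be the main obstacle is (ii): maintaining the quadratic score to within a $(1-1/\polylog n)$ factor across $\Theta(\log n)$ doubling stages while using only an $O(\log\log n)$-bit seed per stage. Each doubling step inflates the second moment of the per-edge contributions, and the multiplicative losses compound across stages, so I would need a careful decomposition of $\Phi$ that bucketizes vertices and edges by current probability, tracks a "heavy" contribution with conditional expectations on the short pairwise seed, and controls a "light" contribution by direct Chebyshev-type concentration. Simultaneously maintaining (iii)---constant-factor shrinkage of the active set at every stage---inside this decomposition, without leaking an extra $\log n$ factor into the work, is the core technical challenge and is where the bulk of the proof effort will go.
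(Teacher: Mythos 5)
Your outer structure---derandomizing Luby round by round via a quadratic score, with geometric edge decay across rounds---matches the paper, but the mechanism you propose for a single rounding stage has a genuine gap. You want each doubling stage to be driven by conditional expectations over an $O(\log\log n)$-bit pairwise-independent seed. A sample space that is pairwise independent for $k$ non-constant binary variables must have size at least $k+1$, so the $\poly(\log n)$ sample points that an $O(\log\log n)$-bit seed provides can be pairwise independent only within buckets of $\poly(\log n)$ variables; across buckets the variables are essentially deterministic functions of one another, and the quadratic score $\Phi$ contains cross-bucket edge terms whose value under such a distribution bears no relation to the fully independent expectation. This size lower bound is exactly why Luby's 1988 scheme needs $\Theta(\log n)$ seed bits and hence $\Theta(\log n)$ binary-search stages, i.e., the $\poly(\log n)$ work overhead you are trying to remove. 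The paper sidesteps seeds entirely: its Rounding Lemma (\Cref{lem:local_rounding}) first computes a defective coloring with $O(1/\eps)$ colors in which monochromatic edges carry at most an $\eps$ fraction of the total cost, and then applies the method of conditional expectations directly to the $0/1$ variables, one color class at a time (nodes of the same color interact only through edges already written off as lost), with no independence structure required.

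Your accuracy schedule is the second place the plan breaks. You demand $(1-1/\polylog n)$ preservation of $\Phi$ at every stage. In the bucketed second-moment framework, relative accuracy $\gamma$ forces bucket size $b=\poly(1/\gamma)$, and the auxiliary conflict graph handed to the rounding step has $\Theta(mb)$ edges, so with $\gamma=1/\polylog n$ even the first stage costs $m\,\poly(\log n)$ work. The paper instead lets the accuracy start at a constant and improve geometrically ($\gamma^{(i)}\approx 0.99^{i}$ in the low-probability regime), matched against a guaranteed constant-factor shrinkage of the residual instance that is enforced by separate potential terms bounding $|S|$ and the surviving edge counts; the per-stage costs then form a geometric series, while the aggregate loss $\prod_i(1\pm 2\gamma^{(i)})$ and $\sum_i \gamma^{(i)}$ stay bounded by small constants because $\sum_i\gamma^{(i)}=O(1)$. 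You would need to adopt both ideas---coloring-based greedy rounding in place of short seeds, and the geometrically improving accuracy schedule tied to instance shrinkage---for the argument to go through.
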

We get a similar algorithm for the maximal matching problem:

\begin{restatable}{theorem}{mm}\textnormal{(\textbf{Maximal Matching Algorithm})}
\label{thm:MM}
    There is a deterministic parallel algorithm that, given any $n$-node $m$-edge graph $G=(V, E)$, computes a maximal matching of it using $O((m+n) \poly(\log \log n))$ work and $\poly(\log n)$ depth. 
\end{restatable}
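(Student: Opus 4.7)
The plan is to lift the gradual-rounding framework behind Theorem~\ref{thm:MIS} from maximal independent set to maximal matching, losing only constant factors in both work and depth. At the top level I would use an Israeli--Itai style randomized matching algorithm: in each phase every active vertex $v$ proposes along an incident edge drawn from a distribution $p_{v,\cdot}$ supported on its surviving neighbors, and an edge $\{u,v\}$ joins the matching exactly when $u$ proposes $v$ and $v$ proposes $u$. The expected contribution of an edge $e=\{u,v\}$ is the bilinear quantity $p_{u,e}p_{v,e}$, so the ``matched-edge'' score is a sum of bilinear terms, which slots directly into the quadratic objective that the rounding machinery of Theorem~\ref{thm:MIS} is built to handle. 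A standard degree-based analysis shows that, with appropriately chosen initial distributions, one such phase removes a constant fraction of the edges incident to active vertices; therefore $O(\log n)$ phases suffice to produce a maximal matching, and the edge counts telescope geometrically across phases.

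Inside one phase I would rerun the halving procedure from the MIS proof: the fractional distributions $p_{v,\cdot}$ start with support size $\deg(v)$, and we halve the support of every vertex's distribution step by step. At each halving step the pessimistic-estimator / conditional-expectation calculation of Theorem~\ref{thm:MIS} should certify that the aggregate bilinear score drops by only a small multiplicative factor, so that after $O(\log \Delta)$ halvings every active vertex has collapsed to a single deterministic proposal while a constant fraction of the fractional score survives. Combined with the geometric shrinkage between phases, this delivers the claimed total work $O((m+n)\poly(\log\log n))$ and depth $\poly(\log n)$.

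The main obstacle I anticipate is that, unlike the MIS objective---whose quadratic potential is essentially localized to a single star---the bilinear score for matching couples the distributions of two \emph{different} vertices, so the pessimistic estimator at a halving step is no longer ``owned'' by one vertex. The cleanest workaround I see is to alternate two types of half-rounds: in a \emph{left} half-round we fix the distributions at one endpoint of every edge (say, the higher-ID endpoint) and round only the other endpoint's distribution, at which point the score is linear in the variables being rounded; a \emph{right} half-round then does the symmetric thing. Each half-round is precisely the linear rounding problem that the framework of Theorem~\ref{thm:MIS} handles in $\poly(\log\log n)$ amortized work per edge, and composing the two directions costs only a constant factor in the retained score. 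With this bilinear-to-linear decoupling in hand, the shattering, residual-graph, and recursion bookkeeping all mirror the MIS argument and deliver Theorem~\ref{thm:MM}.
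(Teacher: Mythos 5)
There is a genuine gap at the very first step, before any derandomization issues arise: the randomized base algorithm you chose does not make the per-phase progress you claim. Under the mutual-proposal rule, the expected number of matched edges is $\sum_{e=\{u,v\}} p_{u,e}p_{v,e} \le \tfrac{1}{2}\sum_{w}\sum_{e\ni w} p_{w,e}^2 \le \tfrac{1}{2}\sum_w \max_{e \ni w} p_{w,e}$, which is small for any product-form proposal distributions unless they are already nearly deterministic (i.e., unless you already essentially know a large matching). Concretely, in $K_{d,d}$ with uniform proposals each edge is mutually proposed with probability $1/d^2$, so only one edge is matched in expectation and only an $O(1/d)$ fraction of the edges is removed; with $d=\sqrt{n}$ a phase removes an $O(1/\sqrt{n})$ fraction, so $O(\log n)$ phases do not suffice and the telescoping of both work and depth collapses. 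No rounding scheme can rescue this, since the fractional score being preserved is already too small. The paper therefore does not derandomize a proposal scheme at all: it marks each edge incident to the current top degree class $V_d$ independently with probability $\Theta(1/d)$, cast as a hitting-set instance (\Cref{lem:hitting_set}) with $U$ the vertices, $V$ the edges of $E_d$, and importances equal to degrees, so that every vertex gets $O(1)$ selected incident edges and almost all of $V_d$ (by degree weight) gets at least one; the selected edges form a constant-degree subgraph from which a matching touching $\Theta(|V_d|\,d)$ edges is extracted by an $O(1)$-coloring of its line graph via \Cref{lem:coloring}. Processing degree classes from highest to lowest is also what allows the clean-up work to be charged to permanently removed edges.

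A secondary problem is the left/right decoupling. On a general graph each vertex owns a single proposal distribution covering both edges on which it is the lower-ID endpoint and edges on which it is the higher-ID endpoint, so a ``left half-round'' cannot round a vertex's variables while simultaneously holding that same vertex's variables fixed in its role as a right endpoint; and if you round the two restrictions of each vertex's distribution separately, a vertex can end up issuing two final proposals, so the mutually-proposed edges need not form a matching. That part is repairable with an extra conflict-resolution step (e.g., the coloring trick above), but the progress guarantee of the base algorithm is not.
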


As a core part of the above results, and as a subroutine that we think can find applications in a much wider range of problems, we give an essentially work-efficient deterministic parallel algorithm for the \textit{hitting set} problem, abstracted in the statement below. The general setup is a bipartite graph $G=(U\sqcup V, E)$, where selecting a subset $S$ that includes each node $v\in V$ with probability $p_v$ would hit a constant fraction of nodes $u\in U$ (if desired, weighted by a nonnegative importance $imp_u$ for each node), in the sense $|N_G(u)\cap S|\in [1, \Theta(1)]$. The statement shows that we can deterministically compute a set $S$ with similar guarantees.

\begin{theorem}\label{thm:hittingSet} \textnormal{(\textbf{Hitting Set Algorithm})} Consider an $n$-node $m$-edge bipartite graph $G=(U\sqcup V, E)$ and suppose that each $v$ has a probability (or fractional solution) $p_v\in [0,1]$ such that $\forall u\in U$, we have $\sum_{v\in N_G(u)} p_v \in \Theta(1).$ Suppose also that each node $u\in U$ has a real-valued importance $imp_{u}\in \mathbb{R}_{+}$. There is a parallel algorithm that, using $O((m+n) \poly(\log \log n))$ work and $\poly(\log n)$ depth, computes two subsets $S\subseteq V$ and $U_{good}\subseteq U$ such that 
\begin{itemize} 
\item $\sum_{u\in U_{good}} imp_u\geq (3/4)\sum_{u\in U} imp_u$, 
\item $\forall u\in U$, we have $|N_{G}(u)\cap S|\in \Theta(1)$.
\end{itemize}
\end{theorem}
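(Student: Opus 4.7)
The plan is to view the fractional assignment $\{p_v\}_{v\in V}$ as a starting point and round it to an integral $\{0,1\}$-valued indicator of $S$ through a sequence of $K = O(\log n)$ \emph{doubling rounds}. After first quantizing each $p_v$ to the nearest power of $1/2$ (losing only a constant factor in the target sums $\sum_{v\in N_G(u)} p_v$), each round independently replaces every currently active $p_v \in (0,1)$ by either $0$ or $2p_v$, keeping $\E[p'_v] = p_v$; values in $\{0,1\}$ are frozen and carried forward. After $K$ rounds every $p_v \in \{0,1\}$ and we output $S := \{v : p_v = 1\}$ together with the set $U_{good}$ of $u$ that never ``went bad'' during the rounding.

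To derandomize each round, I would combine pairwise-independent $\{0, 2p_v\}$-valued assignments with the quadratic potential
\[
\Phi \;=\; \sum_{u \in U_{active}} imp_u \cdot \Bigl(\sum_{v \in N_G(u)} p'_v \;-\; \sum_{v \in N_G(u)} p_v\Bigr)^2,
\]
whose expectation under pairwise independence equals $\sum_u imp_u \sum_{v \in N_G(u)} \Var(p'_v) = \sum_u imp_u \sum_{v \in N_G(u)} p_v^2$. A martingale-style telescoping across the $K$ rounds bounds the cumulative expected squared deviation per $u$ by $O\bigl(\sum_{v \in N_G(u)} p_v^{(0)}\bigr) = O(1)$, so by Markov I can afford, in each round, to move at most a small fraction of $u$'s (by $imp$-weight) into $U_{bad}$---namely those whose current sum has drifted too far---and the total $imp$-weight of $U_{bad}$ stays below $(1/4)\sum_u imp_u$. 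The surviving $u$'s retain $\sum_{v\in N_G(u)} p_v \in \Theta(1)$ throughout the process, so their final neighborhood intersection with $S$ lies in $\Theta(1)$.

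To make the per-round derandomization cost only $\poly(\log\log n)$ per active edge (rather than the $\poly(\log n)$ cost of a Luby-style binary search over the pairwise-independent seed), I would partition the active variables into groups of size $s = \poly(\log\log n)$ and, within each group, draw the assignment from a pairwise-independent distribution on a seed space of $O(s^2)$ points. Since $\Phi$ splits additively into a per-group quadratic part plus a cross-group \emph{linear} part (which has zero mean conditional on the already-fixed groups), each group's best seed can be chosen \emph{independently and in parallel} by enumerating its $O(s^2)$ seeds and aggregating its local contribution via prefix sums over incident edges. Crucially, in expectation---and hence after derandomization---the number of active variables halves per round, so the total work across the $K$ rounds telescopes geometrically to $O((m+n)\poly(\log\log n))$, while the depth is $K \cdot \poly(\log n) = \poly(\log n)$.

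The main obstacle is controlling two things simultaneously: (i) the \emph{bad mass} must shrink geometrically across rounds so its total is $\leq (1/4)\sum_u imp_u$ rather than a fixed constant per round, and (ii) the per-round derandomization must truly cost only $\poly(\log\log n)$ factors so that $K = O(\log n)$ rounds do not reintroduce a $\poly(\log n)$ blow-up. Reconciling these forces the group size $s$ and the per-round bad-threshold to be tuned rather delicately as functions of the round index and of the current active subproblem, and the justification that group-wise pairwise-independent derandomization genuinely suffices requires showing that the cross-group interaction terms in $\Phi$ concentrate well under a separate second-moment argument across the groups themselves. I expect this coupling---between the potential bookkeeping and the block-wise derandomization---to be the technically delicate heart of the proof.
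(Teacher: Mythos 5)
Your high-level architecture matches the paper's: quantize $p_v$ to powers of $1/2$, perform $O(\log n)$ doubling rounds each derandomized against a quadratic drift potential weighted by $imp_u$, discard a geometrically decaying $imp$-mass of bad $u$'s per round, and recover work-efficiency from geometric shrinkage of the active instance. However, the step you yourself flag as the "delicate heart" is in fact a genuine gap, and your proposed resolution does not work as stated. If all groups select their pairwise-independent seeds \emph{simultaneously and independently}, the cross-group terms in the expansion of $\bigl(\sum_{v\in N_G(u)}(p'_v-p_v)\bigr)^2$ are products $(p'_v-p_v)(p'_{v'}-p_{v'})$ with $v,v'$ in different groups; each such term has zero mean over the \emph{randomness} of either group, but once both groups have deterministically committed to the seeds minimizing their own local contributions, nothing prevents these cross terms from all being positive and dominating the potential. "Zero mean conditional on the already-fixed groups" only buys you something if groups are fixed \emph{sequentially}, which with $n/s$ groups destroys the depth bound. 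The paper's mechanism for exactly this problem is its Rounding Lemma: it first computes an $O(1/\eps)$-class defective coloring of the auxiliary conflict graph so that only an $\eps$ fraction of the quadratic cost mass is monochromatic (i.e., intra-class and hence uncontrolled), and then runs the conditional-expectation greedy sequentially over only $O(1/\eps)$ color classes. Some device of this kind --- reducing the number of sequential commitment stages to a bounded quantity while provably bounding the cost of the interactions you give up on --- is the missing idea, and your proposal currently asserts rather than supplies it.

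A second, smaller gap: you write that the number of active variables halves per round "in expectation---and hence after derandomization." This inference is invalid; the derandomization only preserves the objectives you explicitly encode, and a seed chosen to minimize the drift potential may keep far more than half the variables alive. The paper must add separate quadratic potentials (bucketing all of $V$, and bucketing the edge set) solely to force $|S|\le (2/3)|V|+\poly(\log n)$ and a corresponding decay in the edge count each round; without these, the per-round cost $(m+n)\poly(\log\log n)/\gamma^{\Theta(1)}$ does not telescope and you reacquire a $\poly(\log n)$ work overhead. Relatedly, the paper is forced to split the rounding into a low-probability and a high-probability regime precisely because the per-round additive error $\poly(1/\gamma)$ in bucket counts is only tolerable when rescaled by the (still small) current sampling probability; a single uniform analysis over all $O(\log n)$ rounds, as you sketch, does not keep the accumulated additive error at $O(1)$ in the final rounds.
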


\paragraph{A note on the work bounds} In the above theorem statements, we expressed the work bound as $O((m+n) \poly(\log \log n))$. Our write-up focuses on conveying the main ideas and thus we have prioritized algorithmic simplicity over the exact constant in the exponent of $\poly(\log\log n)$. We believe that an optimized variant of our algorithms, which repeats certain steps with more nuanced parameterizations, achieves a work bound of at most $O((m+n) (\log \log n)^{1.1})$. However, it remains a tantalizing question whether one can make the overhead substantially smaller than $\Theta(\log\log n)$, and even reach true work efficiency of $O(m+n)$.

\subsection{Our Technique in a Nutshell}
Instead of discussing our method for the hitting set, maximal independent set and matching problems, let us first use a much simpler warm up. We can showcase the basic rounding idea of our technique in the context of the max cut problem. 

\paragraph{Warm up---Max Cut} Given any $n$-node $m$-edge edge-weighted graph $G=(V, E)$ where $w(e)$ denotes the weight of edge $e$, the max cut problem asks for finding a set $S\subset V$ that maximizes $\sum_{e\in (S, V\setminus S)} w(e)$. Including each node $v\in V$ in $S$ with probability $1/2$ gives a cut $(S, V\setminus S)$ with expected weight $\sum_{e\in E} w(e)/2$. The max cut can be seen as maximizing $\sum_{\{v, v'\}\in E} (x_v + x_{v'} - 2x_v x_{v'})\cdot w(e)$ where $x_v\in \{0,1\}$, for which the \textit{fractional solution} $x_v=1/2$ for all $v\in V$ nicely represents the simple randomized algorithm and gives (fractional) cut value $\sum_{\{v, v'\}\in E} (x_v + x_{v'} - 2x_v x_{v'})\cdot w(e) = \sum_{e\in E} w(e)/2.$

There is a simple and classic way to turn the randomized/fractional solution into a deterministic sequential algorithm, using conditional expectations\footnote{Indeed, this max cut problem is used often in algorithmic courses and textbooks as the context for introducing the method of conditional expectations.}: We process vertices $V=\{1,2, \dots, n\}$ one by one, deciding for each $i\in V$ whether to put it in $S$ or $V\setminus S$ in a greedy way, such that we maximize the weight of the incident edges cut. That is, we put $i$ on the side that at most half of the total weight of edges from $i$ to $\{1, 2, \dots, i-1\}$ are in that side. However, this method does not yield a satisfactory parallel deterministic algorithm, as the process decides about the vertices one by one and thus has depth $\Omega(n)$. 

One can obtain a deterministic parallel algorithm by leveraging pairwise independence: notice that $\sum_{e\in E} w(e)/2$ is the expected weight of the cut in the randomized process, even if the choices of different vertices is pairwise independent (instead of mutually independent). Hence, there is a space of $O(n^2)$ randomness seeds that is good enough for this randomized algorithm~\cite{luby1985simple, alon86}. Checking all of these in parallel gives a deterministic parallel algorithm with $O(mn^2)$ work and $\poly(\log n)$ depth. Once can use Luby's method of binary search in a well-structured pairwise space~\cite{luby1988removing} to cut the work to $O(m \poly(\log n))$. We are not aware of a prior deterministic result with a better work bound. We next discuss a simple method that yields a deterministic algorithm with $O(m\log\log n)$ work and $\poly(\log n)$ depth.

\begin{center}
\begin{minipage}{0.95\textwidth}
\begin{mdframed}[backgroundcolor=black!10]
\vspace{3pt}
\begin{lemma}
There is deterministic parallel algorithm that, given any $n$-node $m$-edge edge-weighted graph $G=(V, E)$ where $w(e)$ denotes the weight of edge $e$ and any $\eps>0$, computes a set $S \subset V$ such that $cut_{G}(S, V\setminus S)\geq (1/2-\eps) \sum_{e\in E} w(e)$. The algorithm uses $O((m+n)\log\log n)$ work and $\poly(\log n/\eps)$ depth.
\end{lemma}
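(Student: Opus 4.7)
The plan is to adapt the paper's gradual-rounding paradigm to the max-cut setting. View the trivial randomized algorithm (place each $v\in V$ into $S$ independently with probability $1/2$) as the fractional solution $p_v=\tfrac{1}{2}$ for every $v\in V$, with multilinear cut potential
\[
\Phi(\vec p)=\sum_{\{u,v\}\in E} w(\{u,v\})\bigl(p_u+p_v-2p_up_v\bigr),
\]
which equals $W/2$ at the starting point, where $W=\sum_e w(e)$. The goal is to deterministically round $\vec p$ to an integral $\vec x\in\{0,1\}^V$ with $\Phi(\vec x)\ge(1/2-\eps)W$, which is exactly the weight of the cut induced by setting $S=\{v:x_v=1\}$.

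I would run $T=\Theta(\log\log n)$ phases of joint rounding. At phase $t$, let $V_t$ be the currently-fractional set (with $V_1=V$). Each phase (i) preserves $\Phi$ up to an additive $\eps W/T$ loss and (ii) shrinks $|V_t|$ super-polynomially, say from $|V_t|$ to roughly $|V_t|^{1/2}$, so that after $T=\log\log n$ phases the fractional set has constant size and can be finished by brute-force enumeration. The joint rounding in a phase is based on a pairwise-independent sample space over $\{0,1\}^{V_t}$: under any pairwise-independent assignment, every edge is cut with probability exactly $1/2$ and so $\E[\Phi]=\Phi_{\text{before}}$; by the probabilistic method, the best sample in the $\poly(|V_t|)$-size space preserves $\Phi$ (up to the allotted $\eps W/T$ slack that absorbs ``boundary'' contributions of edges to already-committed vertices). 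I locate this sample by the method of conditional expectations on the $O(\log|V_t|)$-bit seed of the pairwise-independent hash.

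The per-phase depth is $\poly(\log n/\eps)$---we aggregate edge contributions to $\Phi$ in $O(\log n)$ depth and we perform the conditional-expectation search in parallel across the seed bits using a pessimistic estimator that decomposes as a sum over edges. Summed over $T=\Theta(\log\log n)$ phases, the depth remains $\poly(\log n/\eps)$. For the work bound, I would aim at $O((m_t+n_t))$ amortized work per phase: combined over the telescoping residuals, this gives a total of $O((m+n)\log\log n)$ work, plus $O(m+n)$ for the final brute-force phase on the constant-size residual.

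The main obstacle is this last work claim: the conditional-expectation search over a seed of length $\Theta(\log n)$ naively costs $\Theta(\log n)$ iterations of $O((m_t+n_t))$ work each, leading to $O((m+n)\log n\log\log n)$ total work---worse than the classical Luby bound. To beat this, I would exploit the linearity of the pairwise-independent hash: each individual seed bit flips only a predictable, easily computable subset of the vertex bits, so after an $O(m_t+n_t)$ preprocessing per phase the contribution of each edge to the conditional expectation can be updated incrementally rather than recomputed from scratch, and the $O(\log n)$ iterations share the overall work budget. Verifying rigorously that this amortization yields $O(m_t+n_t)$ per phase, while correctly accounting for edges between currently-fractional and already-committed vertices and for the super-polynomial shrinkage of $|V_t|$, is the delicate technical ingredient on which the $O((m+n)\log\log n)$ bound ultimately depends.
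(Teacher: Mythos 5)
Your proposal has a genuine gap, and in fact the step it hinges on is precisely the obstacle the paper's technique is designed to avoid. Two concrete problems. First, the phase structure is not coherent as described: a point of a pairwise-independent sample space over $\{0,1\}^{V_t}$ assigns a bit to \emph{every} vertex of $V_t$, so after one conditional-expectation search over the seed all of $V_t$ is committed. You never specify a mechanism by which only $|V_t|^{1/2}$ vertices remain fractional, and without such a mechanism the telescoping that is supposed to give $O((m+n)\log\log n)$ total work has nothing to telescope over. Second, and more fundamentally, the conditional-expectation search over a $\Theta(\log |V_t|)$-bit seed is exactly Luby's high-dimensional binary search, which costs $\Omega((m_t+n_t)\log n)$ work: fixing each successive seed bit changes the conditional cut probability of essentially every edge whose endpoints are not yet both determined, so each of the $\Theta(\log n)$ stages must touch $\Omega(m_t)$ edges. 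Your proposed amortization ("each seed bit flips a predictable subset of vertex bits, so update incrementally") does not obviously reduce this, because the quantity to be maintained is a sum over edges of conditional expectations, each of which genuinely changes at every stage; you acknowledge yourself that you have not verified this step, and it is the entire content of the work bound.

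The paper's proof takes a different and much simpler route that sidesteps conditional expectations over seeds entirely. It first computes a \emph{defective} coloring with $O(1/\eps)$ colors in which the total weight of monochromatic edges is at most $\eps\sum_{e}w(e)$ (this is \Cref{lem:defective_coloring}, and it is the sole source of the $O((m+n)\log\log n)$ work). It then discards the monochromatic edges and processes the color classes in $O(1/\eps)$ sequential stages; in stage $i$, every vertex of color $i$ makes the greedy side choice that cuts at least half the weight of its edges to already-decided vertices. Since vertices of the same color share no surviving edge, these greedy choices are independent and can all be made in parallel with $O(m+n)$ work per stage and $O(\log n)$ depth. Each surviving edge is cut with the greedy guarantee when its later endpoint is processed, giving cut weight at least $(1/2)(1-\eps)\sum_e w(e) \geq (1/2-\eps)\sum_e w(e)$. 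If you want to salvage your approach, you would need either the paper's coloring idea or some other device that lets many vertices be rounded simultaneously with only $O(1)$ passes over the edge set per "generation" of vertices; the pairwise-independence route does not deliver that.
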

\smallskip
\begin{proof}[Proof Sketch]
We compute a $O(1/\eps)$-coloring of vertices such that the weight of monochromatic edges is at most $\eps\cdot \sum_{e\in E} w(e)$. This can be done with $O((m+n)\log\log n)$ work and $\poly(\log n/\eps)$ depth as we later describe in \Cref{lem:defective_coloring}. Then, we process the vertices in $O(1/\eps)$ stages. In each stage $i$, we make a greedy choice for each node of color $i$ based on its bichromatic edges connecting to vertices of colors $\{1, 2, \dots, i-1\}$. Note that this is done in parallel for all nodes of color $i$, but given the coloring, the edges impacted by different nodes are disjoint.     
\end{proof}
\vspace{1pt}
\end{mdframed}
\end{minipage}
\end{center}
The above gives a simple illustration of the rounding idea, (\textit{one step of}) rounding fractional $1/2$ solutions to $0$ or $1$ while (\textit{approximately}) retaining \textit{one} quadratic objective function. We next discuss a few high-level points about the rounding in our main results, which are far more involved.

\paragraph{Hitting Set, Maximal Matching, and Maximal Independent Set} Let us focus on the hitting set problem, which gives a simpler context for discussing the technique, and even just a special \textit{regular} case of it, in the following sense: Consider a bipartite graph $G=(U\cup V, E)$ and suppose that it is $U$-regular meaning that each node in $U$ has degree $d$. Thus, selecting each node $v\in V$ with probability $p_v=\Theta(1/d)$ gives, with probability at least $2/3$, a set $S\subset V$ such that, for nearly all nodes $u\in U$, we have $|N_{G}(u)\cap S| \in [1, \Theta(1)]$ and $|S|=\Theta(|V|/d).$ How can we compute such a set $S$ deterministically, and in parallel?

This can be seen as a multi-step rounding problem from the fractional solution $\mathbf{x}=(x_1, x_2, \dots, x _n)$ where $x_i=\Theta(1/d)$ for all $i\in V$, to an integral solution $\mathbf{y}=\{0, 1\}^{V}$, which gives the characteristic vector of $S$. There are a few notable differences here, compared to the max-cut problem: 
\begin{enumerate} 
\item[(1)] Here, we do not have a single and clear quadratic objective function as we had in the max cut case. Instead, we are trying to maintain a constraint for (nearly) all $u\in U$, namely that $\sum_{v\in N_{H}(u)} x_v$ is maintained (approximately) throughout the rounding of the vector $\mathbf{x}=(x_1, x_2, \dots, x _n)$ and we will eventually have $|N_{G}(u)\cap S| \in [1, \Theta(1)]$. 
\item[(2)] We have $\log d$ steps of $2$-factor rounding, instead of a single one, and this can be in general up to $\log n$ steps. To retain work-efficiency without a $\log n$ factor in the work bound, we need to ensure that after every rounding step, the size of the remaining rounding problem, and in particular its number edges, goes down by a constant factor.
\end{enumerate}
To address the above two points, and control many constraints at the same time, we use the following approach: we bundle the neighbors $N_{H}(u)$ of each node $u$ into bundles $B_1$, $B_2$, \dots, $B_{|N_H(u)|/b}$ of size $b$---ignoring divisability issues and losses for now. We then write one quadratic objective function to (approximately) enforce that, for the set $V'\subseteq V$ that remain nonzero after one rounding, for each bucket $B_i$, we have a small $(|B_i\cap V'|-b/2)^2$. Notice that when $V'$ is chosen randomly where each $V$ is placed in $V'$ with probability $1/2$, we have $\E[(|B_i\cap V'|-b/2)^2]=\Theta(b)$. What we do in derandomization is like putting a Chebyshev-like inequality/constraint in each bucket. If we manage to maintain that $(|B_i\cap V'|-b/2)^2 \leq \eps \cdot \Theta(b^2)$ for some $\eps < 1/b$, that means $|V'\cap B_i|\in 1/2 (1\pm \eps)|V \cap B_i|$. That is, the single step of rounding perturbed the number of neighbors inside the bucket only by a $1\pm \eps$, compared to the expectation. We add up all these objectives $(|B_i\cap V'|-b/2)^2$ over all nodes $u$ and all the relevant buckets, and instead of controlling each of them individually, we control their summation (with appropriate normalizations). Then with an averaging argument we can show that $1-\poly(\eps)$ fraction of nodes $u\in U$ (or more generally, such a fraction of their total importance weights) have such a controlled movement in nearly all of their buckets and $|N_{H}(u)\cap V'| \in 1/2 (1\pm \poly(\eps))|N_{H}(u)\cap V|$. With a similar bucketing idea (and additional quadratic objectives for each of the additional buckets, which are all added together with the previous ones), we enforce that the number of remaining edges goes down by a constant factor.

However, the above reveals two issues: 

\begin{enumerate} 
\item[(3)] Given that we have up to $\log n$ steps of rounding, to keep the overall losses at bay, we could not afford to have a constant $\eps$ relative loss in each step, as they add up. We would need to set $\eps$ to at most $1/\log n$, if we followed the vanilla approach. However, that would not be work efficient, because of the next item.
\item[(4)] In the above bucketing and quadratic formulas, we need to have $\eps\leq 1/b$. So, if we go with the seemingly necessary $\eps<1/\log n$ as mentioned above, we would have $b\geq \log n$. But the size of the quadratic formulas, one per bucket, and the virtual graph that would represent them for rounding purposes, is a $b$ factor larger than the original graph. Hence, even the graph representing one iteration's rounding would be of size $\Omega(m \log n)$, thus nullifying our work efficiency pursuit. 
\end{enumerate}
To combat these, we start with a constant $\eps$ for the first iteration, thus a constant $b=\Theta(1/\eps)$, and in the course of different rounding iterations, we geometrically decrease $\eps$ by a constant factor per iteration, leveraging the fact that the size of the graph remaining for the iteration also had a constant decay. Setting the factors appropriately, we can ensure that the overall approach remains essentially work-efficienct. 

The discussion above omitted a few additional points. Let us mention three here: (I) The fact that the general hitting set instance is not regular, and different nodes $v\in V$ have different $p_v$. (II) Especially because of the previous point, the error coming from the nodes left outside buckets (in different probability classes) needs to be also controlled. (III) For the MIS problem, the objective functions are more complex, and we will not be able to control them as tightly as the above process. The remedies to these issues, and thus also our overall approach, are more involved and we leave them for the technical sections.

\paragraph{Related Work in Distributed Computing} The general \textit{rounding idea} in our work, of viewing randomized algorithms as fractional solutions and then gradually rounding them, is inspired by the similar local rounding approach that underlies much recent progress in deterministic distributed graph algorithms~\cite{GhaffariK21, ghaffari2023netdecomp, faour2022local, ghaffari2024near}. However, the computational aspects are quite different between the two models and those distributed algorithms imply, at best, parallel algorithms with work bound $(m+n) \poly(\log n)$, which was known from 1980s results~\cite{luby1988removing}. 

In particular, even for rudimentary tasks such as coloring (\Cref{lem:coloring,lem:defective_coloring}), which are basic ingredients of our algorithm, the algorithms on the distributed side do not yield work-efficient deterministic parallel algorithms and would need $\Omega((m+n)(\log n))$ work. The distributed MIS/MM algorithms involve $\poly(\log n)$ iterations without enforcing a shrinkage in problem sizes, and thus their overall work bound is $\Omega((m+n)\poly(\log n))$. For work-efficiency in the parallel algorithm, we need to ensure a constant-factor shrinkage in the problem over different iterations, which requires much tighter control on the objective functions, as discussed above by quadratic objective functions over different buckets. In a sense, one can view this paper as nicely showing how this general gradual rounding idea can be imported from distributed algorithms to parallel algorithms, though with quite different instantiations and internal ingredients, to yield parallel derandomizations that essentially retain work efficiency.

\section{Preliminaries}
\subsection{Basic notations and graph representation format}\label{subsec:basics}
\paragraph{Notations} For any graph $G=(V, E)$, we use the notation $E(G)$ to denote the edges of the graph $G$, and for any subset $S\subseteq V$, the notation $G[S]$ denotes the subgraph induced by set $S$.

\paragraph{(Compact) Graph Representation} We assume that the input is an $n$-node $m$-edge graph $G=(V, E)$, provided as adjacency lists: an array $[1..n]$ for vertices and for each for $i\in [1..n]$, we are given $Adj[i]$ which is an array of length $deg(v)$ that describes the neighbors of $i$. Alternatively, we can assume that the graph is provided as a list of edges $E$. In that case, we can obtain the adjacency lists representation, using a standard deterministic polynomial-range integer sorting subroutine~\cite{bhatt1991improved} (later stated in \Cref{lem:SortBhatt}), using $O((m+n)\log\log n)$ work and $O(\log n)$ depth: We first duplicate each edge $\{u,v\}$ as $(u, v)$ and $(v,u)$ and then sort this list lexicographically, thus giving the adjacency list of each node as a contiguous block. By including for each node $v$ two additional items $(v, 0)$ and $(v, n+1)$ in the sort, we also have direct access to the beginning and end of this contiguous block of the adjacency list of $v$.

More generally, when we invoke a subroutine (abstracted by a lemma) on a subgraph $G'=(V', E')$, to maintain work-efficiency, we assume that the subgraph is represented in a compact form to the subroutine/lemma. In particular, we give a numbering $[1..n']$ for $n'=|V'|$ as the array of vertices. For that, we can perform a simple prefix sum on the array $A[1..n]$ with a value $A[i]=1$ for every vertex $i\in V'$ and $A[i]=0$, which gives to the node $i$ its new identifier $i'=\sum_{j=1}^{i} A[j]\in \{1, \dots, n'\}.$ This takes $O(n)$ work and $O(\log n)$ depth. 
We can perform a similar compaction for edges, assuming edges of $E'$ are marked in the adjacency lists $Adj[i]$, for each $i$: By scanning all entries of $Adj[i]$, we can read the new identifiers of the neighbors of $i$. Using an additional prefix sum for each retained vertex $i\in V'$ on $Adj[i]$ among edges marked to be in $E'$, we can compute the compact $Adj'[i]$ that only has neighbors of $i$ in $E'$, and this uses additional $O((m+n))$ work and $O(\log n)$ depth. If the edges in $E'$ are provided only as a list of edges, then we can apply the sorting subroutine mentioned before (\Cref{lem:SortBhatt}) to obtain compact adjacency lists for $G'$, using $O((m+n)\log\log n)$ work and $O(\log n)$ depth.

\paragraph{Precomputations for Prime Selection and Root Computations} In our coloring algorithms (\Cref{lem:coloring,lem:defective_coloring}), we use subroutines that find a prime number $p$ in a given range $[x, 2x] \subseteq [1, \Theta(n^{1/3})]$, where $n$ is the number of nodes in the original input graph, and we also compute $\sqrt{y}$ in $\mathbb{F}_p$ for $y\in \mathbb{F}_p$. We prepare tables for these with a one-time parallel computation, which provides all the primes in $[1, \Theta(n^{1/3})]$ and all the relevant roots. This can be done deterministically using no more than $O(n^{2/3})$ work and $O(\log n)$ depth (in these bounds, we can even afford to check if each smaller number divides each particular number in the range). In addition, for each prime $p$ in this range, we can find all the roots using $O(p)$ work and $O(\log n)$ depth, simply by computing $z^2$ for all $z\in \mathbb{F}_{p}$. Thus, in total, this is also at most $O(n^{2/3})$ work. Since our main theorem statements (e.g., for the MIS problem) have complexity $\Omega(n)$, these one-time $O(n^{2/3})$ work bounds are subsumed by the complexities of the other parts and we do not mention them explicitly again.  

\subsection{Sorting Subroutines}
In our algorithms, we make use of sorting as a basic subroutine. For many of our applications, the following simple lemma suffices.
\begin{lemma}[\textbf{Sorting Lemma}]
    \label{lem:sorting}
    Suppose that we have an input array $A[1..k]$ of $k$ integers, each in the range $\{1, 2, \dots, \lceil\log N \rceil\}$ (here $N$ is the number of nodes in the original graph.) There is a deterministic algorithm that outputs the sorted array $A[1..k]$, using $O(k\log\log N)$ work and $O(\log k \cdot \log\log N)$ depth. 
\end{lemma}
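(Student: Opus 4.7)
The plan is to apply a standard parallel radix sort, exploiting the fact that each input value fits in only $\lceil \log\log N\rceil+O(1)$ bits. Concretely, write each entry $A[j]$ as a binary string of length $b := \lceil \log_2(\lceil\log N\rceil+1)\rceil = O(\log\log N)$, and perform a most-significant-bit-first (or least-significant-bit-first) stable sort in $b$ passes, one bit per pass.

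In each pass, given the current array (say $B[1..k]$) sorted with respect to the higher-order bits processed so far, I would stably partition it by the next bit. To do this in parallel, form two indicator arrays $Z[1..k]$ and $O[1..k]$, where $Z[j]=1$ iff the current bit of $B[j]$ is $0$ and $O[j]=1-Z[j]$. Running prefix sums on $Z$ and $O$ takes $O(k)$ work and $O(\log k)$ depth by standard parallel prefix sum (with an additional final scatter that writes each $B[j]$ to its new destination in $O(1)$ additional depth). Concatenating the two groups yields the array sorted by one more bit while preserving the relative order within each group, i.e., the pass is stable, which is what makes LSB-first radix sort correct.

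Summing over the $b = O(\log\log N)$ passes gives $O(k \log\log N)$ total work and $O(\log k \cdot \log\log N)$ total depth, matching the claim. The only mildly delicate point is ensuring that the initial compaction of input bits (and the final read-back into the output format $A[1..k]$) does not itself exceed these bounds; but both are trivial linear-work, $O(\log k)$-depth scans. I don't anticipate any real obstacle here, since the analysis is just: (i) each parallel prefix sum is $O(k)$ work, $O(\log k)$ depth; (ii) the number of passes equals the bit-length of the keys, which is $O(\log\log N)$; (iii) stability of each bit pass implies correctness of the final sort.
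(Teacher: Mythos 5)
Your proposal is correct and is essentially the paper's own proof: both perform an LSB-first stable radix sort over the $O(\log\log N)$ bits of each key, implementing each stable bit-pass with prefix sums in $O(k)$ work and $O(\log k)$ depth. No meaningful differences.
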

\begin{proof}
We perform a variant of the classic radix sort algorithm, specified for our setup. In particular, we view each integer as a $\log\log N$ bit number and use $\log\log N$ iterations in the algorithm. In iteration $i$, we re-sort all entries $A[1], A[2], \dots, A[k]$ according to their $i^{th}$ least significant bit, and importantly using a stable sorting subroutine. Let $A[j]_i$ denote the $i^{th}$ least significant bit of $A[j]$. As standard, stability here means that for any two entries $A[j]$ and $A[j]'$ for $j<j'$, if $A[j]_i = A[j']_i$, the sorting will have $A[j]$ before $A[j']$ in the sorted order according to the $i^{th}$ bit. 

Let us discuss this stable sort for the $i^{th}$ bit. We first generate an array $C[1..k]$ where for each index $j$ such that $A[j]_i=0$, we enter $C[j]=1$ and otherwise we enter $C[j]=0$. Then we perform a prefix sum so that each index $j$ learns $i_j=\sum_{j'=1}^{j} C[j]$. For each $j$ such that $A[j]_0$, we write $A[j]$ in index $i_j$ of a temportary sorted array $B$, i.e., we set $B[i_j]=A[j]$. Let $I=\sum_{j'=1}^{k} C[k]$ denote the total number of ones in $C[1..k]$. Then, we do essentially the same process but to provide output indices for $A[j]$ such that $A[j]_i=1$. First, set $C[1..k]=1*[k]- C[1..k]$, i.e., replace zeros with ones and vice versa. Then, perform a new prefix sum on this updated $C[1..k]$, so that each index $j$ learns $i_j=\sum_{j'=1}^{j} C[j]$. Then, for each $j$ such that $A[j]_i=1$, write $A[j]$ in $B[I+i_j]$. At the end, we move the items from the temporary array $B[]$ to our main array $A[]$, by setting $A[1..k]=B[1..k]$, to prepare for the next iteration (or for the final output).

Per iteration, we spend $O(k)$ work and $O(\log(k))$ depth. Given that we have $O(\log\log N)$ iterations, the work and depth bounds follow.
\end{proof}

For moving between representations of a graph (e.g., from the list of all graph edges to adjacency lists, one per node), we use a stronger deterministic sorting stated below. We note that this lemma assumes that, in the case of concurrent writes, an arbitrary one takes place. 
\begin{lemma} [Bhatt et al.~\cite{bhatt1991improved}]
\label{lem:SortBhatt}
    Suppose that we have an input array $A[1..k]$ of $k$ integers, each in the range $\{0, 1, 2, \dots, N \}$. There is a deterministic parallel algorithm that sorts this array using $O(k\log\log N)$ work and $O(\log k + \log\log N)$ depth.
\end{lemma}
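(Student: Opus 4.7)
The plan is to derive the bound by combining three ingredients: a low-depth, linear-work CRCW bucket sort for integers in a range of size $O(k)$, a radix-sort framework that decomposes each integer into blocks of bits, and a recursive range-reduction that keeps the total number of passes at $O(\log\log N)$. First I would establish the base primitive: given $k$ integers in $[0,k]$, allocate buckets $B[0..k]$, let each index $j$ concurrently attempt to write into $B[A[j]]$, resolve the resulting tournaments to linearize duplicates via a balanced-tree reduction, and then concatenate buckets via a prefix sum. Under the stated ``arbitrary winner'' concurrent-write model, this can be carried out in $O(k)$ work and $O(\log k)$ depth, and the output can be made \emph{stable} by sorting on the pair $(A[j],j)$ rather than $A[j]$ alone.

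Given this primitive, the natural radix sort splits each integer into $\lceil\log N/\log k\rceil$ digits of $\log k$ bits and performs stable passes from the least significant digit upward. Each pass takes $O(k)$ work and $O(\log k)$ depth, for total work $O(k\log N/\log k)$. This matches the target $O(k\log\log N)$ only when $\log k\geq \log N/\log\log N$, so the real challenge is the regime $N\gg k^{\log\log N}$. To handle it, I would recurse on the \emph{range}: sorting integers in $[0,N]$ can be reduced to (i) sorting by the most significant half of the bits, followed by (ii) a stable sort by the least significant half, each of which is an instance on $k$ items with range $[0,\sqrt N\,]$. Iterating this halving drives the range below $k$ in $O(\log\log N)$ levels (at which point one bucket-sort pass suffices), and since the number of items is preserved at each level, the total work telescopes to $O(k\log\log N)$.

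The final hurdle, and the main obstacle, is the depth bound $O(\log k + \log\log N)$ rather than the naive $O(\log k\cdot \log\log N)$ implied by chaining the recursion. My plan would be to pipeline the recursive levels: once the bucketization at one level is complete, the next level's prefix sums and bucket-address computations can begin before that level's output-write finishes, so that the $O(\log k)$-depth segments overlap and only constant additional depth is incurred per subsequent level. Making this work requires (a) scheduling the intermediate stable sorts so their concurrent writes do not collide across pipelined stages, and (b) certifying that stability is preserved when the pipeline stitches partial results together; this careful scheduling is precisely the technical core of \cite{bhatt1991improved}, and since it is orthogonal to the derandomization contributions of the present paper, I would invoke that reference as a black box rather than re-deriving its details here.
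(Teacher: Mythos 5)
The paper offers no proof of this lemma at all: it is imported verbatim as a black-box result of Bhatt et al., and your final move---invoking that reference for the ``technical core''---is exactly what the paper does. So your bottom line is fine. However, the derivation you sketch on the way there contains a genuine quantitative error, and it is worth flagging so that the sketch is not mistaken for an actual proof. Your range-reduction step reduces sorting $k$ integers in $[0,N]$ to \emph{two} stable sorting instances of range $[0,\sqrt{N}\,]$ (one per half of the bits). The resulting recurrence is $T(N)=2\,T(\sqrt{N})+O(k)$, whose recursion tree has $\Theta(\log N/\log k)$ leaves, each costing $O(k)$ work; the total is $O(k\log N/\log k)$, i.e., exactly the cost of plain radix sort with $\log k$-bit digits, not the claimed telescoping to $O(k\log\log N)$. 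Splitting the digit positions recursively does not reduce the number of passes---it only reorganizes them. To genuinely get $O(\log\log N)$ levels one needs a Kirkpatrick--Reisch-style range reduction that produces a \emph{single} recursive instance of range $O(\sqrt{N})$ (group keys by their high halves, rename the distinct high and low halves by integers bounded by $k$ plus the number of groups, and recurse once), giving $T(N)=T(\sqrt{N})+O(k)$.

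Two smaller issues: your base primitive's stability argument (``sort on the pair $(A[j],j)$'') silently inflates the key range from $O(k)$ to $O(k^2)$, so you would need to say how ranks among equal keys are computed within the stated work/depth; and the pipelining argument for depth $O(\log k+\log\log N)$ is asserted rather than established. Since all of these are precisely the content of the cited reference, the clean resolution---and the one the paper takes---is to cite Bhatt et al.\ without attempting a derivation; if you do include a sketch, the range-reduction step needs to be the single-subproblem renaming version, not the binary split.
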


\subsection{Coloring, Defective Coloring, and Gradual Rounding}

\begin{lemma}[\textbf{Coloring Lemma}]
\label{lem:coloring}
    Given a graph $G=(V, E)$ with $n$ nodes and $m=|E|$ where each node has a unique identifier in $\{1, \dots n\}$, there is a deterministic parallel algorithm that computes an $O(\Delta^2)$ coloring of $G$, using $O((m+n)\log\log n)$ work and $\poly(\log n)$ depth. Here, $\Delta$ is an upper bound on the maximum outdegree, in a given orientation of edges of $G$ (and on the maximum degree, in the absence of such an orientation). 
\end{lemma}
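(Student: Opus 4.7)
I plan to prove \Cref{lem:coloring} by iterating a Linial-style color-reduction primitive based on cubic polynomial hashes over small prime fields, with primes drawn from the precomputed tables in the preliminaries. Start with the identifier assignment as a valid $n$-coloring. If no orientation is given, I first orient each edge from its larger-ID endpoint to its smaller one in $O(m+n)$ work and $O(\log n)$ depth, so that the maximum out-degree is at most $\Delta$. Each round shrinks the palette from $k$ to $p^2$ for $p := \max(\lceil k^{1/3}\rceil,\,3\Delta+1)$, and after $O(\log\log n)$ rounds the palette stabilizes at $O(\Delta^2)$.

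The reduction primitive works as follows. Given a current $k$-coloring, pick a prime in $[p, 2p]$ and write each color $c$ as a base-$p$ triple $(c_0, c_1, c_2) \in \mathbb{F}_p^3$. Use the hash family $f_a(c) := a^2 c_2 + a c_1 + c_0 \bmod p$ for $a \in \mathbb{F}_p$. For any two distinct colors $c \neq c'$, $a \mapsto f_a(c) - f_a(c')$ is a nonzero polynomial of degree at most two in $a$, and hence has at most two roots; thus each vertex $v$ has at most $2\deg^+(v) \leq 2\Delta < p$ ``bad'' values of $a$, and a good $a_v$ must lie inside $\{0, 1, \ldots, 2\deg^+(v)\}$. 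The new color of $v$ is the encoded pair $(a_v, f_{a_v}(c_v)) \in [p^2]$. On any oriented edge $v \to u$, $a_v$ was chosen so that $f_{a_v}(c_v) \neq f_{a_v}(c_u)$, so the two new pair-colors differ whether or not $a_v = a_u$.

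To keep each round at $O(m+n)$ work, I would allocate per vertex $v$ a Boolean array $B_v$ of length $2\deg^+(v) + 1$, compute the (at most two) bad $a$-values per out-neighbor via the quadratic formula and the precomputed square-root table in $\mathbb{F}_p$, mark only those bad values that fall inside $[0, 2\deg^+(v)]$, and locate the first zero entry of $B_v$ using a parallel prefix sum. This costs $O(\deg^+(v))$ work and $O(\log n)$ depth per vertex. Because the palette satisfies $\log k_{i+1} \leq (2/3)\log k_i$ until $k_i \leq (3\Delta+1)^3$---at which point it collapses to $(3\Delta+1)^2 = O(\Delta^2)$ and stabilizes---only $O(\log\log n)$ rounds are required, giving total work $O((m+n)\log\log n)$ and depth $\poly(\log n)$.

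The main obstacle I anticipate is the regime $\Delta = \omega(n^{1/3})$, where the required prime $p = \Theta(\Delta)$ falls outside the precomputed range. In that case I would generate $p$ and its square-root table on the fly via a one-time parallel sieve over $[1, 2\Delta] \subseteq [1, 2n]$, costing $O(n\log\log n)$ work and $O(\log n)$ depth; since $p$ is monotone nonincreasing across rounds (as $k$ only shrinks), this one-off overhead is charged in the first round only and is absorbed into the overall $O((m+n)\log\log n)$ budget. Minor bookkeeping---compacting the oriented adjacency lists and allocating variable-length Boolean arrays via prefix sums over out-degrees---follows routinely from the subroutines already given in the preliminaries.
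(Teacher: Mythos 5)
Your proposal is correct and follows essentially the same route as the paper: an iterated Linial-style reduction where each color is encoded as a degree-2 polynomial over a prime field $\mathbb{F}_p$ with $p=\Theta(\max\{k^{1/3},\Delta\})$, bad evaluation points are found via the quadratic formula with precomputed square roots, and a surviving point in a range of size $O(\deg^+(v))$ is located by prefix sum, giving $O(m+n)$ work per round over $O(\log\log n)$ rounds. Your explicit handling of the regime $\Delta=\omega(n^{1/3})$, where the needed prime exceeds the precomputed table, is a reasonable patch for a corner case the paper's preliminaries gloss over.
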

\begin{proof} This proof can be viewed as an adaptation of a classic scheme of Linial~\cite{linial1987LOCAL} in distributed computing. However, Linial's algorithm ignores computational aspects (as is done in the \local model of distributed computing) and a direct application of it would result in an algorithm with work bound $O((m+n)\log n \log^* n)$. Here, we provide an algorithm with the same outline but different details that achieves an $O((m+n)\log\log n)$ work bound.

Initially, we view the identifiers of the nodes as a trivial coloring with $n$ colors. The algorithm has $O(\log \log n)$ iterations, where per iteration we reduce the number of colors from its current bound of $k$ to $\max\{\Theta(k^{2/3}), 5\Delta^2\}$. We describe the process for one such iteration.

Let $k'=\max\{3 k^{1/3}, 3\Delta\}$ and choose a prime $p\in [k', 2k']$, leveraging that we already pre-computed the list of all primes $[1, \Theta(n^{1/3})]$ as mentioned in the preliminaries. We view each of the old colors $q\in \{1, 2, \dots, k\}$ as a degree-2 polynomial in $\mathbb{F}_p$, represented as $f_{q}(x)=a_q x^2+b_qx+c_q$, where $c_q= (q \; mod \; p)$, $b_q=((q-c_q)/p \; mod \; p)$, and $a_q=((q-c_q - b_q p)/p^2 \; mod \; p)$. This space of the new colors will be points $(x, y)$ for $x, y\in \mathbb{F}_p$, which is $p^2 = O((k')^2)$ colors. Consider node $v$ with old color $q$. We consider the evaluations of $f_{q}(x)=a_q x^2+b_qx+c_q$ for $x\in \{0, 1, 2, 3deg(v)-1\}$, where $deg(v)\leq \Delta$ denotes the outdegree of node $v$ (or its degree, in the absence of an orientation). We identify a point $(x, f_q(x))$ such that there is no out-neighbor $u$ of $v$ (or neighbor, in the absence of an orientation) with old color $q'$ for which $f_q'(x)=f_q(x)$.

For every outneighbor $u$ (or neighbor, in the absence of an orientation), for which the old color we denote as $q'$, the polynomial $f_q'(x)-f_q(x) = ax^2 +bx+c$ has at most two roots, since it is of degree two. We can identify these two roots in constant work using the standard quadratic formula $(-b\pm \sqrt{b^2-4ac})/2a$ (where calculations are done in $\mathbb{F}_p$, and leveraging that we have already precomputed all the roots in $\mathbb{F}_p$ as mentioned in the preliminaries). We then mark the related entries of $x$ that are in $\{0, 1, 2, 3deg(v)-1\}$ as \textit{lost}. Since for each outneighbor we mark at most two points as lost, among $\{0, 1, 2, 3deg(v)-1\}$, there is at least one (and indeed many) points that are not lost. We identify the smallest such point $x^*$ (e.g., using a prefix sum) and set the new color of node $v$ equal to $(x^*, f_{q}(x^*))$. 

Notice that the work for the above recoloring is $O(1+deg_{v})$ for node $v$, and the process is done in $O(\log n)$ depth. Hence, the overall work for this recoloring is $O(m+n)$, and the depth is $O(\log n)$. Since we have $O(\log\log n)$ iterations of recoloring to go from $n$ colors to $O(\Delta^2)$, the total work is $O((m+n) \log\log n)$ and the depth becomes $O(\log n\cdot \log\log n)$. 
\end{proof}

\begin{lemma}[\textbf{Defective Coloring Lemma}]
\label{lem:defective_coloring}
    Consider any input graph $G=(V, E)$ with $n$ nodes and $m=|E|$, where each node has a unique identifier in $\{1, \dots n\}$ and each edge $e\in E$ has a single-word nonnegative weight $w(e)$. Let $\eps>0$ be an input parameter. There is a deterministic parallel algorithm that computes an $O((1/\eps))$ defective coloring of $G$ such that the total weight of monochromatic edges is at most $\eps \sum_{e\in E} w(e)$. The algorithm uses $O((m+n)\log\log n)$ work and $\poly(\log n/\eps)$ depth.
\end{lemma}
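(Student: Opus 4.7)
The plan is to adapt the iterative polynomial-hashing scheme of \Cref{lem:coloring} to the weighted setting. There, each vertex picks its evaluation point $x_v$ so as to \emph{avoid} any collision with a neighbor; here I would instead have each vertex $v$ pick $x_v\in\{0,\dots,p-1\}$ so as to \emph{minimize} the weighted cost
\[
C_v(x) \;=\; \sum_{u\in N(v)} w(\{u,v\})\cdot \mathbf{1}\bigl[f_{q_v}(x)=f_{q_u}(x)\bigr],
\]
where $q_v$ encodes $v$'s current color as the coefficient triple of a degree-$2$ polynomial $f_{q_v}$ over $\mathbb{F}_p$, and $v$'s new color is set to $(x_v, f_{q_v}(x_v))\in\mathbb{F}_p^2$. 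For a neighbor $u$ with a different old color, $f_{q_v}-f_{q_u}$ is a nonzero quadratic with at most two roots, so each such neighbor adds weight to at most two entries of the sparse cost array $C_v$; for a neighbor with the same old color, the contribution is a constant offset that does not affect $\argmin_x C_v$.

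First I would initialize with the trivial identifier coloring (so $W_{\mathrm{mono}}=0$) and then run $O(\log\log n)$ iterations, where iteration $i$ maps the current $k_{i-1}$-coloring to a $k_i=\Theta(k_{i-1}^{2/3})$-coloring using a prime $p_i\in[k_{i-1}^{1/3},2k_{i-1}^{1/3}]$ drawn from the precomputed table (\Cref{subsec:basics}). For each vertex $v$, I would collect its at most $2\deg(v)$ (root, weight) contributions, sort them by root using \Cref{lem:sorting} (roots are $O(\log n)$-bit integers), and scan the sorted list to read off $\argmin_x C_v(x)$ and its value; this costs $O((\deg(v)+1)\log\log n)$ work and $\poly(\log n)$ depth per vertex per iteration.

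The monochromatic-weight bookkeeping is a pure averaging argument: since $\mathbb{E}_{x\sim\mathbb{F}_{p_i}}[C_v(x)] = d_v^{\mathrm{mono}} + (2/p_i)\,d_v^{\mathrm{bi}}$, where $d_v^{\mathrm{mono}}$ and $d_v^{\mathrm{bi}}$ are $v$'s weighted degrees within and across its current color class, the minimum obeys the same bound; summing over $v$ and halving to avoid double-counting gives $W_{\mathrm{mono}}^{\mathrm{new}}\le W_{\mathrm{mono}}^{\mathrm{old}}+2W/p_i$, so the cumulative mono-weight growth across iterations is dominated by the smallest $p_i$.

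The step I expect to require the most care is landing at \emph{exactly} $O(1/\eps)$ colors while keeping the monochromatic weight at $O(\eps W)$, because new colors live in $\mathbb{F}_p^2$ and a na\"ive last iteration with $p_t=\Theta(\sqrt{1/\eps})$ gives only a $\Theta(\sqrt{\eps}\,W)$ mono-weight bound. I would address this by terminating the iterative scheme once $k_t=\poly(\log n)$ (reached in $O(\log\log n)$ iterations with accumulated mono-weight $O(W/\polylog n)$) and then performing one final merge via a pairwise-independent hash $h:[k_t]\to[\Theta(1/\eps)]$, derandomized by Luby's conditional-expectation binary search on the sample space of size $\poly(k_t)$; this binary search has only $O(\log k_t)=O(\log\log n)$ rounds of $O(m)$ work and contributes expected mono-weight $\Theta(\eps W)$. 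Combining the iterative and final-merge phases yields the claimed $O((m+n)\log\log n)$ work and $\poly(\log n/\eps)$ depth.
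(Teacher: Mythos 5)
Your overall architecture (an iterative Linial-style weighted-argmin phase followed by a separate compression step down to $O(1/\eps)$ colors) is sound, and your diagnosis of why a na\"ive last polynomial iteration only gives $\Theta(\sqrt{\eps}\,W)$ monochromatic weight is exactly the issue the paper's proof is also built around. However, your plan has a genuine gap for small $\eps$. Your phase-1 accounting accumulates monochromatic weight $\sum_i 2W/p_i = \Theta(W/p_{\min})$, and since you terminate at $k_t=\polylog n$ colors, $p_{\min}=\polylog n$, so phase 1 alone contributes $\Theta(W/\polylog n)$ monochromatic weight. The lemma demands total monochromatic weight at most $\eps W$ for an \emph{arbitrary} input $\eps>0$ (only the depth is allowed to degrade with $1/\eps$), so your argument fails whenever $\eps = o(1/\polylog n)$. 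Patching this inside your framework is not free: to make phase 1 lose only $O(\eps W)$ you must keep $p_{\min}=\Omega(\log\log n/\eps)$, i.e.\ stop at $k_t=\Omega((\log\log n/\eps)^2)$ colors, but then your final pairwise-independent merge has a seed space of size $\poly(1/\eps,\log n)$ and the conditional-expectation binary search has $\Theta(\log(1/\eps))$ stages of $\Theta(m)$ work each, which breaks the $O((m+n)\log\log n)$ work bound when $1/\eps$ is polynomial in $n$. The paper avoids this by (i) charging each iteration's loss as an $\eps'$-fraction of each node's \emph{own} remaining weighted degree with $\eps'=\eps/\Theta(\log\log n)$ (stopping at $O((\log\log n/\eps)^2)$ colors with total loss $\eps W/2$), and (ii) compressing to $3\lceil 1/\eps\rceil$ colors by orienting edges from higher to lower color and greedily recoloring the color classes \emph{one after another}; this second phase touches each edge once, so it costs $O(m+n)$ work for any $\eps$ and pays only in depth, which is exactly what the stated bounds permit.

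Two smaller points. First, your appeal to \Cref{lem:sorting} to sort roots is a miscitation: that lemma only handles keys in $\{1,\dots,\lceil\log N\rceil\}$, whereas roots live in $\mathbb{F}_p$ with $p$ as large as $\Theta(n^{1/3})$; you need \Cref{lem:SortBhatt}, and even then per-iteration sorting makes phase 1 cost $O((m+n)(\log\log n)^2)$ rather than the claimed $O((m+n)\log\log n)$ (the paper sidesteps sorting by restricting $x$ to a range of size $O(\deg(v))$ or $O(1/\eps')$ and indexing directly). Second, in the regime $\eps=\Omega(1/\polylog n)$ (which covers every use of this lemma in the paper), your final merge is a legitimately different and arguably slicker endgame than the paper's sequential greedy: with only $\poly(\log n)$ seeds one can even aggregate edge weights by color pair and exhaust the seed space, avoiding Luby's binary search entirely, and the depth becomes $\poly(\log n)$ independent of $\eps$ rather than $\poly(\log n/\eps)$. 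So the route is salvageable and has some advantages, but as written it does not prove the lemma for all $\eps>0$.
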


\begin{proof} The algorithm has two phases, which we describe separately.  

\paragraph{The first phase} Let $I=\lceil\log_{3/2}\log n\rceil$ and $\eps'=\eps/2I$. The first phase consists of  $I$ iterations, where we start with the trivial $n$-coloring implied by the identifiers, and gradually reduce the number of colors to $O(1/(\eps')^2)$. Per iteration, we reduce the number of colors from its current bound $k$ to a bound $\Theta(\max\{k^{2/3}, (1/\eps')^2\})$. Furthermore, in each iteration, we declare an additional set of edges with total weight at most $\eps' \sum_{e\in E}w(e)$ as lost, and we remove them from the graph, allowing them to be monochromatic. Hence, by the end of the first phase, the total weight of these lost edges is at most $\lceil\log_{3/2}\log n\rceil\cdot \eps' \sum_{e\in E}w(e) \leq (\eps/2) \sum_{e\in E}w(e)$. Besides these lost edges which are removed from the graph, we ensure that each other edge is bichromatic in every iteration. The overall approach per iteration is similar to the one in the proof of \Cref{lem:coloring}, except for the much smaller number of colors, made possible by allowing $\eps'$ fraction of edge weight to become monochromatic. 

Consider one iteration where the current number of colors is $k$. Let $k'=\max\{3 k^{1/3}, 3(1/(\eps'))\}$. We will reduce the number of colors in this iteration to $\Theta((k')^2)$. First, choose a prime $p\in [k', 2k']$, leveraging that we already pre-computed the list of all primes $[1, \Theta(n^{1/3})]$. We view each of the old colors $q\in \{1, 2, \dots, k\}$ as a degree-2 polynomial in $\mathbb{F}_p$, represented as $f_{q}(x)=a_q x^2+b_qx+c_q$, where $c_q= (q \; mod \; p)$, $b_q=((q-c_q)/p \; mod \; p)$, and $a_q=((q-c_q - b_q p)/p^2 \; mod \; p)$. The space of the new colors $\{1, 2, \dots, \Theta(k')^2\}$ will be points $(x, y)$ for $x, y\in \mathbb{F}_p$. Consider node $v$ with old color $q$. We look at the evaluations of $f_{q}(x)=a_q x^2+b_q+c_q$ for $x\in \{0, 1, 2, 3\lceil(1/\eps')\rceil-1\}$, and we identify a point $(x, f_q(x))$ such that at most $\eps'$ of neighbors $u$ of $v$ have an old color $q'$ for which $f_q'(x)=f_q(x)$.

For every neighbor $u$, for which the old color we denote as $q'$, the polynomial $f_q'(x)-f_q(x) = ax^2 +bx+c$ has at most two roots, since it is of degree two. We can identify these two roots in constant work using the standard quadratic formula (where calculations are done in $\mathbb{F}_p$, and leveraging that we have already precomputed all the roots in $\mathbb{F}_p$ as mentioned in the preliminaries). If a root has $x\in \{0, 1, 2, 3\lceil(1/\eps')\rceil-1\}$, we add the weight of the edge $w(\{u,v\})$ to the hit-score of that point $x$. Since for each edge $\{v, u\}$ we add at most $2w(\{u,v\})$ to the hit scores of all the points, the total hit score on the points is at most $2\sum_{u\in N_{G}(v)} w(\{v, u\})$. Hence, among $x\in \{0, 1, 2, 3\lceil(1/\eps')\rceil-1\}$, there is a point (and indeed many) with hit score less than $\eps' \sum_{u\in N_{G}(v)} w(\{v, u\})$. We choose one such point $(x, f_q(x))$ as the new color of node $v$. We then check the neighbors one more time and for any node $u$ that chose the same color, mark edge $\{u, v\}$ as lost, and we remove it from the graph. Notice that since for each node $v$ the weight of its lost edges in this iteration is at most $\eps'\sum_{u\in N_{G}(v)} w(\{v, u\})$, the total weight of all edges declared lost in this iteration is at most $\eps' \sum_{e\in E}w(e)$. 

We an add one small optimization in the above, for computational purposes: if a node $v$ has degree $d(v)\leq (1/\eps')$, it suffices to limit its range of $x\in \{0, 1, 2, 3deg(v)\} \subseteq \{0, 1, 2, 3\lceil(1/\eps')\rceil-1\}$. Notice that there must be an $x\in \{0, 1, 2, 3deg(v)\}$ that is not hit by any of the neighbors (as each of them hits two points). Thus, in this corner case, node $v$ will have no monochromatic edge. But more crucially, now the work per iteration for every node $v$ is upper bounded by $O(deg(v))$. Hence, the overall work for the iteration is $O(m+n)$. And the depth is $O(\log n)$. Hence, the total work and depth of the first phase are $O((m+n)\log\log n)$ and $O(\log n \log\log n)$. At the end of this first phase, we have a coloring with $O(((\log\log n)/\eps)^2)$ colors, with the guarantee that the total weight of edges lost so far is at most $(\eps/2)\sum_{e\in E}w(e)$. 

\paragraph{The second phase} We now explain how to update the colors from the current bound of $k=O(((\log\log n)/\eps)^2)$ to the output bound of $k'=3\lceil 1/\eps\rceil $ by sacrificing an additional set of edges of weight at most $(\eps/2)\sum_{e\in E}w(e)$. 
Orient each edge (that is not lost so far) as going from its endpoint with the higher color to the endpoint with the lower color. We process the colors one by one, in $k$ iterations, where per iteration $i$ we compute the new color of the vertices with old color $i$. Consider iteration $i$ and a vertex $v$ with old color $i$, and its outgoing edges (i.e., those going to vertices with old color at most $i-1$, which are vertices for which we have already computed the new color). Choose a new color $q\in \{1, 2, \dots, 3\lceil /\eps\rceil \}$ such that the weight of the outgoing edges connecting to nodes which have received new color $q$ is less than $\eps/2$ fraction of the total weight of these outgoing edges. This is the new color of node $v$. Declare all outgoing edges connecting to nodes with new color $q$ as lost. Again, we can add the small optimization that if node $v$ has less than $1/\eps$ outneighbors, we can limit its color range to $\{1, 2, \dots, outdeg(v)\}$, and ensure it has no monochromatic outgoing edge. Overall, the weight of the edges declared lost (and thus monochromatic) in the second phase is $(\eps/2)\sum_{e\in E}w(e)$. This phase of the algorithm has work $O(m+n)$, since each edge is processed in only one iteration, and depth $O(((\log\log n)/\eps)^2 \cdot O(\log n) = \poly(\log n/\eps)$. 
\end{proof}

\begin{lemma}[\textbf{Rounding Lemma}]
\label{lem:local_rounding}
Let $H=(V, E)$ be a graph with $n$ nodes and $m$ edges. Also, for each node $v\in V$, consider a (potentially negative) utility $util(v)\in \mathbb{R}$ and for every edge $e\in E$, consider a cost $cost(e)\in \mathbb{R}_{\geq 0}$. Suppose each node $v\in V$ has an identifier in $[n]$. Also, let $\eps>0$ be an arbitrary input. There is a deterministic parallel algorithm with work $O((m+n)\log \log n)$ and depth $\poly(\log n/\eps)$ that computes a set $V'\subseteq V$ such that 
$$\sum_{v\in V'} util(v) - \sum_{e\in \binom{V'}{2} \cap E} cost(e) \geq \big(\sum_{v\in V} (1/2) \cdot util(v) - \sum_{e\in E} (1/4) \cdot cost(e)\big) -\eps \sum_{e\in E} cost(e).$$
\end{lemma}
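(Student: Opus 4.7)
My plan is to combine the defective coloring of \Cref{lem:defective_coloring} with a parallel instantiation of the method of conditional expectations, closely following the max-cut warm-up from the introduction. Write $E[X]:=\binom{X}{2}\cap E$ for the edges internal to $X$ and $E[X,Y]$ for the edges with one endpoint in $X$ and the other in a disjoint set $Y$. For $S\subseteq V$, let $Y(S):=\sum_{v\in S}util(v)-\sum_{e\in E[S]}cost(e)$. For disjoint $I,O\subseteq V$ with $U:=V\setminus(I\cup O)$, define the potential
\[
\phi(I,O) \;:=\; \sum_{v\in I}\!util(v) + \tfrac{1}{2}\!\!\sum_{v\in U}\!\!util(v) - \sum_{e\in E[I]}\!\!cost(e) - \tfrac{1}{2}\!\!\sum_{e\in E[I,U]}\!\!cost(e) - \tfrac{1}{4}\!\!\sum_{e\in E[U]}\!\!cost(e),
\]
which is exactly the expectation of $Y(V')$ when $V'\supseteq I$, $V'\cap O=\emptyset$, and each $v\in U$ is included in $V'$ independently with probability $1/2$. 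In particular $\phi(\emptyset,\emptyset)=\tfrac{1}{2}\sum_v util(v)-\tfrac{1}{4}\sum_e cost(e)$ equals the benchmark on the right-hand side of the claim, and once $U=\emptyset$ we have $\phi(I,O)=Y(I)$.

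First I apply \Cref{lem:defective_coloring} to $H$ with weights $cost(\cdot)$ and parameter $\eps$, producing an $O(1/\eps)$-coloring into classes $C_1,\dots,C_K$ whose monochromatic edges carry total cost at most $\eps\sum_e cost(e)$, using $O((m+n)\log\log n)$ work and $\poly(\log n/\eps)$ depth. I then scan $C_1,\dots,C_K$ sequentially; within each class $C$, every $v\in C$ computes in parallel
\[
\Delta_v \;:=\; \tfrac{1}{2}util(v) \;-\; \tfrac{1}{2}\!\!\!\!\sum_{u\in I,\,\{u,v\}\in E}\!\!\!\!cost(\{u,v\}) \;-\; \tfrac{1}{4}\!\!\!\!\!\!\sum_{u\in U\setminus\{v\},\,\{u,v\}\in E}\!\!\!\!\!\!cost(\{u,v\}),
\]
which equals $\phi(I\cup\{v\},O)-\phi(I,O)$, and sets $\tau_v=1$ (move $v$ into $I$) when $\Delta_v\geq 0$, else $\tau_v=0$ (move $v$ into $O$). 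Each decision is two parallel reductions over $v$'s neighborhood, so a class costs $O(|C|+\sum_{v\in C}\deg_H(v))$ work and $O(\log n)$ depth; over all $K=O(1/\eps)$ classes this totals $O(m+n)$ work and $O(\log n/\eps)$ depth, both dominated by the coloring.

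The analytic core is the identity
\[
\phi(I',O')-\phi(I,O) \;=\; \sum_{v\in C}\Delta_v^{(\tau_v)} \;-\!\!\sum_{\{u,v\}\in E[C]}\!\!cost(\{u,v\})\,d_u d_v,
\]
where $(I',O')$ is the partition after processing $C$, $d_v:=\tau_v-\tfrac{1}{2}\in\{\pm\tfrac{1}{2}\}$, and $\Delta_v^{(\tau_v)}\in\{\Delta_v,-\Delta_v\}$ matches the chosen $\tau_v$. I derive it by a direct term-by-term expansion of $\phi$, splitting the edges incident to $C$ into those going to $I$, to $O$, to $U\setminus C$, and intra-$C$; the cross-term is exactly the discrepancy between the ``other $C$-vertices stay random'' view implicit in each $\Delta_v$ and the simultaneous commitments actually made. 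The greedy rule gives $\Delta_v^{(\tau_v)}=|\Delta_v|\geq 0$, while $d_u d_v\in\{\pm\tfrac{1}{4}\}$, so a single class drops $\phi$ by at most $\tfrac{1}{4}\sum_{e\in E[C]}cost(e)$. Summing over classes and using the monochromatic-cost bound, $\phi_{\mathrm{final}}\geq\phi(\emptyset,\emptyset)-\tfrac{\eps}{4}\sum_e cost(e)$; since $\phi_{\mathrm{final}}=Y(V')$, the inequality follows with a factor of four to spare.

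The step requiring the most care is the cross-term identity above: one must track, for each edge class touching $C$, which $d_v$'s have been committed versus which are still treated as random in the per-vertex $\Delta_v$, and verify that the leftover pieces collapse precisely to $-\!\sum_{e\in E[C]}\!cost(e)\,d_u d_v$ rather than something weaker. Once that identity is in hand, the approximation guarantee and the work/depth bounds follow immediately from \Cref{lem:defective_coloring}.
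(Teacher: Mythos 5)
Your proof is correct and follows essentially the same route as the paper: defective coloring via \Cref{lem:defective_coloring} followed by parallel conditional expectations, one color class at a time. The only difference is bookkeeping---the paper removes the monochromatic edges before running the conditional-expectation phase and then recharges their cost (losing $\eps\sum_{e\in E}cost(e)$), whereas you keep them and bound the intra-class cross-term directly by $\tfrac{1}{4}\sum_{e\in E[C]}cost(e)$, which is a valid (and marginally tighter) variant.
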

\begin{proof} We first invoke \Cref{lem:defective_coloring} to compute a defective coloring with $O(1/\eps)$ colors and where the total cost of the monochromatic edges is at most $(\eps) \sum_{e\in E} cost(e)$. This uses $O((m+n)\log \log n)$ work and $\poly(\log n/\eps)$ depth. We (temporarily, for our algorithm's computations) remove these monochromatic edges from the graph. Even if we allow all these edges to appear in $\binom{V'}{2}$, their total cost is $\eps \sum_{e\in E} cost(e)$. Then, we perform a standard derandomization by conditional expectation (for the random process where each node $v\in V$ is put in $V'$ with probability $1/2$) and for the objective $\sum_{v\in V'} util(v) - \sum_{e\in \binom{V'}{2} \cap E} cost(e)$. In particular, we use $O(1/\eps)$ iterations where per iteration $i$ we decide about nodes of color $i$, in a greedy fashion so as to maximize the utilities minus costs, whether to put each of them in $V'$ or in $V\setminus V'$. 
\end{proof} 

\section{Hitting Sets} 
In this section, we present our derandomization for the hitting set problem, formally stated below as \Cref{lem:hitting_set}. 

\begin{restatable}{lemma}{hittingSet}\textnormal{(\textbf{Hitting Set Lemma})}
\label{lem:hitting_set}
Let $H$ be an $n$-vertex $m$-edge bipartite graph with bipartition $V(H) = U \sqcup V$, where each node has a unique identifier in $\{1,2,\ldots, 2n\}$. Also, let $N\geq n$ be a given upper bound. For every $u \in U$, let $imp_u \in \mathbb{R}_{\geq 0}$ and for every $v \in V$, let $k_v \in \{0,1,\ldots, \lceil \log N \rceil\}$. 
There exists a constant $C\geq 1$ and a deterministic parallel algorithm with work $(m+n)\poly(\log \log N) + \poly(\log N)$ and depth $\poly(\log N)$ that computes two subsets $S \subseteq V$ and $U_{good} \subseteq U$ such that $\sum_{u \in U_{good}} imp_u \geq 0.9\sum_{u s\in U} imp_u$ and for every $u \in U_{good}$, $|N_H(u) \cap S| \in [0.5\sum_{v\in N_{H}(u)} 2^{-k_v} -0.5, C\sum_{v\in N_{H}(u)} 2^{-k_v} +C]$.
\end{restatable}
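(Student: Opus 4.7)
The plan is to iteratively round the fractional solution in $T = O(\log N)$ stages, where each stage performs one ``factor-$2$'' rounding step via \Cref{lem:local_rounding}. I maintain a shrinking vertex set $V_t \subseteq V$ (with $V_0 = V$) and updated exponents $k_v^{(t)}$ so that the intended inclusion probability of $v$ at stage $t$ is $2^{-k_v^{(t)}}$. In each stage I choose $V_{t+1} \subseteq V_t$ playing the role of ``keep $v$ with probability $1/2$'' and decrease the surviving exponents by one. After $T$ stages every $k_v^{(T)} = 0$, and the final set $S := V_T$ is the output. For every $u \in U$ I track the quantity $\sigma_t(u) := \sum_{v \in N_H(u) \cap V_t} 2^{-k_v^{(t)}}$, whose expectation under fully independent rounding is preserved; the job of each stage is to keep $\sigma_t(u) \approx \sigma_0(u)$ for all but a small $imp$-weighted fraction of $u$'s.

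For a single stage, I cast the halving as an instance of \Cref{lem:local_rounding}. For each $u$ I group $N_H(u)$ by the common value of $k_v^{(t)}$ and, inside each group, form buckets $B$ of size $b_t = \Theta(1/\eps_t)$. For each bucket I install a penalty proportional to $imp_u \cdot 2^{-k} \cdot (|B \cap V_{t+1}| - b_t/2)^2$, whose expectation under independent fair coins is $\Theta(imp_u \cdot 2^{-k} \cdot b_t)$ and whose smallness forces $|B \cap V_{t+1}|$ to lie close to $b_t/2$. Expanding the square puts the objective in the ``utility minus edge cost'' format of \Cref{lem:local_rounding}, with an auxiliary graph whose edges are pairs of vertices in a common bucket. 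I additionally install a quadratic that directly penalizes surviving edges, so as to enforce a constant-factor shrinkage of the current graph from stage $t$ to $t+1$. A call to \Cref{lem:local_rounding} with parameter $\eps_t$ then produces $V_{t+1}$, and a Markov-style averaging over $u$ (weighted by $imp_u$) shows that only an $O(\eps_t)$ importance-fraction of $U$ suffers per-stage deviation exceeding $\eps_t$.

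To keep the total work at $(m+n) \poly(\log \log N) + \poly(\log N)$, I pick constants $\alpha > \beta > 1$ with $\alpha$ matching the per-stage graph-shrinkage factor, and set $b_t = \beta^t$ and $\eps_t = 1/b_t$. The auxiliary graph at stage $t$ then has $O((m/\alpha^t) \cdot b_t) = O(m (\beta/\alpha)^t)$ edges, so the geometric sum over $t$ telescopes to $O(m) \cdot \poly(\log \log N)$ work (with an additive $\poly(\log N)$ paying for the low-volume tail), and each stage contributes depth $\poly(\log N / \eps_t) = \poly(\log N)$. Declaring $u \in U_{good}$ iff $u$ was never penalized above threshold in any stage yields $\sum_{u \in U_{good}} imp_u \geq 0.9 \sum_u imp_u$ by a union bound, provided $\sum_t \eps_t$ is kept sufficiently small. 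For those $u$, the final count $|N_H(u) \cap S| = \sigma_T(u)$ falls in the window around $\sum_{v \in N_H(u)} 2^{-k_v}$ required by the statement, with the $0.5$ factor and additive constant $C$ absorbing cumulative rounding slack.

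The main obstacle I expect is the bookkeeping around \emph{leftovers}: each $k_v^{(t)}$-class inside some $N_H(u)$ whose size is not a multiple of $b_t$ contributes a residual bucket of size less than $b_t$ that the Chebyshev-style penalty cannot control as tightly, and classes of total size smaller than $b_t$ must be handled separately. Because probabilities $2^{-k_v}$ vary across classes, penalties must be rescaled per class so that the different classes within a single $u$ contribute comparably to the global objective; the additive $\pm 0.5$ slack and the constant $C$ in the statement come precisely from these residuals aggregated over the $T$ stages. Once the auxiliary graph is carefully assembled, the rest is a direct application of \Cref{lem:local_rounding} together with elementary sum estimates.
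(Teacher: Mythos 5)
Your overall strategy---iterative factor-$2$ rounding driven by bucketed quadratic potentials fed to \Cref{lem:local_rounding}, with geometrically growing bucket sizes $b_t$ and geometrically shrinking error budgets $\eps_t$ balanced against the shrinking graph to keep the total work at $(m+n)\poly(\log\log N)$---is exactly the paper's strategy. But there is a genuine gap in the step you flag as ``the main obstacle I expect'' and then defer to the slack constants: the additive errors from residual (sub-$b_t$) buckets and from per-bucket deviations are \emph{not} absorbed by $\pm 0.5$ and $C$ if you halve every probability class simultaneously all the way down to exponent $0$. Concretely, a class with original exponent $k_v$ reaches exponent $1$ at stage $t = k_v - 1$, where your bucket size is $b_t = \beta^{t}$; for $k_v$ near $\lceil\log N\rceil$ this is $N^{\Theta(1)}$. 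At that stage each dropped leftover vertex and each $b_t^{0.8}$-sized bucket deviation is charged at probability weight $2^{-1}$, so the additive error in $\sigma_t(u)$ is $\Omega(b_t^{0.8}) = N^{\Theta(1)}$, which destroys the guarantee. The additive error is only harmless while the surviving exponents are large enough that a whole bucket's worth of misplaced vertices still carries negligible probability mass.

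This is why the paper's proof is not a single uniform loop but a two-regime decomposition. In the low-probability regime (exponents above $K=\lceil 100\log\log N\rceil$), all classes are halved simultaneously, but the bucket size is capped at $b\leq \gamma 2^{K-1}/\lceil\log N\rceil$ so that the up to $\lceil\log N\rceil\cdot b$ dropped vertices, each of weight at most $2^{-K}$, cost only $\gamma/4$ in probability mass; the exponents are frozen once they reach $K$ (a set $V_{fixed}$), never dropping below it. In the high-probability regime, only the single class at the current maximum exponent is halved per iteration, bucket sizes are $\poly(\log\log N)$, the additive vertex-count error $(1/\gamma^{(i)})^{7}$ is converted to probability mass by a factor $2^{-(K-i-1)}\leq 2^{-49}$, and---crucially---the halving \emph{stops} at exponent roughly $50$ rather than $0$, with the final $2^{50}$ conversion factor absorbed into the constant $C$ (this is the actual source of $C$, not accumulated residuals). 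Your proposal also silently relies on each $u$ being touchable in every one of the $\Theta(\log N)$ stages within the work budget; the paper pays for this in the low regime only under the extra hypothesis $|N_H(u)|\geq \lceil 10\log^{25}N\rceil$ (imposed in \Cref{lem:hitting_set_low_probability} and arranged in the wrap-up by routing low-degree $u$'s around that lemma), and in the high regime by having only $O(\log\log N)$ iterations. To repair your argument you would need to introduce essentially this same split: a threshold exponent below which vertices are frozen rather than further halved, separate handling of the constant-probability classes, and a degree threshold governing which $u$'s participate in the many-iteration phase.
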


In the following two subsections, we present the key ingredients of this lemma, for two different regimes of probability $p_v=2^{-k_v}$, the low probability regime where $p_v = o(1/\poly(\log n))$ for all $v$ that we target, and then the high probability regime where $p_v = \Omega(1/\poly(\log n))$ for all the remaining $v$. Afterward, we present the proof of \Cref{lem:hitting_set} as a wrap-up subsection \Cref{subsec:hitting_set_wrap}. 

\subsection{Low Probability Regime}
\begin{lemma}[Low Probability 1/2 Sampling]
\label{lem:low_prob_half}
Let $H$ be an $n$-vertex $m$-edge bipartite graph with bipartition $V(H) = U \sqcup V$, where each node has a unique identifier in $\{1,2,\ldots, 2n\}$, let $N\geq n$ be a given upper bound, and let $K=\lceil 100\log\log N \rceil$, and $\gamma \in [\Theta(1/\log N), 0.01)$. For every $u \in U$, let $imp_u \in \mathbb{R}_{\geq 0}$ and for every $v \in V$, let $k_v \in \{K+1, \ldots, \lceil \log N \rceil\}$.

There exists a deterministic parallel algorithm with work $(m+n)\poly(\log \log N)/\gamma^{20}$ and depth $\poly(\log N)$ that computes a bipartite graph $H'\subseteq H$, a subset $U_{good}\subseteq U$, and a subset $S\subseteq V$, such that for $H''=H'[U_{good}\cup S]$, we have:
\begin{itemize}
\item $|E(H'')|+|S|\leq (2/3) \left(|E(H)|+|V| \right) + \poly(\log N)$, 
\item $\sum_{u \in U_{good}} imp_u \geq (1-\gamma)\sum_{u \in U} imp_u$
\item for every node $u\in U$, we have $|N_{H'}(u)| \geq |N_{H}(u)| -\gamma2^{K}$, and
\item for every node $u \in U_{good}$, we have $\sum_{v \in N_{H''}(u)} 2^{-k_v} \in ((1/2)(1\pm \gamma) \sum_{v \in N_H(u)}2^{-k_v} \pm \gamma/2)$.
\end{itemize}

\end{lemma}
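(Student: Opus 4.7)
The plan is to apply the Rounding Lemma (\Cref{lem:local_rounding}) exactly once to an auxiliary graph on vertex set $V$ whose utilities and costs simultaneously encode (a) a quadratic ``bucket concentration'' objective that controls $\sum_{v\in N_{H''}(u)}2^{-k_v}$ for most $u$, and (b) a linear size-control term that yields the $(2/3)$-factor shrinkage of $|E(H'')|+|S|$. I will first set a bucket size $b = \Theta(1/\gamma^{4})$. For each $u\in U$ and each probability level $k\in\{K+1,\dots,\lceil\log N\rceil\}$, I partition $\{v\in N_H(u):k_v=k\}$ arbitrarily into buckets of size exactly $b$ plus at most one smaller leftover bucket. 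I define $H'$ by deleting from $H$ all leftover-bucket edges; each $u$ loses at most $\lceil\log N\rceil\cdot(b-1)\le b\log N\ll\gamma 2^{K}$ edges (using $K=\lceil 100\log\log N\rceil$ and $\gamma\geq 1/\log N$), which gives the third bullet.

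For each size-$b$ bucket $B$ belonging to pair $(u,k)$, I assign weight $w_B := imp_u\cdot 2^{-k}/\sum_{v'\in N_H(u)}2^{-k_{v'}}$ and add a clique on $B$ with edge cost $cost(\{v,v'\}) := 2\sum_{B\ni v,v'} w_B$. I set $util(v) := (b-1)\sum_{B\ni v} w_B - \alpha(1+\deg_H(v))$ with scaling constant $\alpha = \Theta((\sum_u imp_u)/(|V|+|E(H)|))$. Writing $X_v = \mathbf{1}[v\in S]$, $X_B = |B\cap S|$, and $Y_B = X_B - b/2$, the identity $X_v^2 = X_v$ turns the LHS of the Rounding Lemma into $\sum_B w_B(b^2/4-Y_B^2)-\alpha(|S|+\sum_v\deg_H(v) X_v)$. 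Invoking \Cref{lem:local_rounding} with $\eps=\Theta(\gamma^{4})$ and rearranging yields
\[\sum_B w_B Y_B^2 + \alpha\bigl(|S|+\textstyle\sum_v\deg_H(v) X_v\bigr) \le \alpha\cdot\tfrac{|V|+|E(H)|}{2} + \sum_B w_B\bigl[\tfrac{b}{4}+\eps b(b-1)\bigr].\]
Both LHS terms are non-negative, so each is bounded by the RHS. Combined with $|E(H'')|\le\sum_v\deg_H(v) X_v$ and the choice of $\alpha$, the $\alpha(\cdot)$ bound immediately yields the first bullet.

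The per-$u$ bullet follows from a two-level Markov argument. The bound $\sum_B w_B Y_B^2 = O(\sum_u imp_u)$ plus Markov implies that the $w_B$-weighted fraction of ``bad'' buckets (those with $|Y_B|>(\gamma/4)b$) is $O(\gamma^{2})$, which after unfolding the definition of $w_B$ reads $\sum_u imp_u\cdot\mathrm{BadBuckets}_u/\sum_{v}2^{-k_v} \le O(\gamma^{2}/b)\sum_u imp_u$, where $\mathrm{BadBuckets}_u := \sum_k 2^{-k}\cdot(\text{\# bad buckets of $u$ at level $k$})$. A second Markov over $u$ weighted by $imp_u$ then lets me define $U_{good}:=\{u:\mathrm{BadBuckets}_u\le(\gamma/(2b))\sum_{v\in N_H(u)}2^{-k_v}\}$ and conclude $\sum_{u\in U_{good}} imp_u \ge (1-\gamma)\sum_u imp_u$. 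For $u\in U_{good}$, decomposing $|N_{u,k}\cap S|-|N_{u,k}|/2$ as a sum of $Y_B$ over $u$'s buckets at level $k$ plus a leftover of absolute value $\le b$, then weighting by $2^{-k}$ and summing over $k$, bounds the deviation by (good part $\le (\gamma/4)\sum_v 2^{-k_v}$) $+$ (bad part $\le (b/2)\mathrm{BadBuckets}_u \le (\gamma/4)\sum_v 2^{-k_v}$) $+$ (leftover $\le b\cdot 2\cdot 2^{-K}\ll \gamma/2$), giving the fourth bullet.

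The main obstacle will be the coordinated choice of the weight $w_B$ and the scaling $\alpha$. The normalization of $w_B$ must be exactly right so that the two-level Markov composes with final loss $\gamma$ rather than $\sqrt{\gamma}$, while $\alpha$ must be tuned so that the extra $\alpha(|V|+|E(H)|)/2$ on the Rounding Lemma's RHS is comparable to---but does not dominate---the bucket slack $\sum_B w_B\cdot b/4$; too small an $\alpha$ destroys the size bullet, too large an $\alpha$ destroys the bucket bullet. The complexity budget is otherwise straightforward: the auxiliary graph has $O(mb)=O(m/\gamma^{4})$ edges, so the single invocation of \Cref{lem:local_rounding} uses $O((m/\gamma^{4})\log\log N)$ work and $\poly(\log N)$ depth, comfortably within the claimed $(m+n)\poly(\log\log N)/\gamma^{20}$ work and $\poly(\log N)$ depth budget.
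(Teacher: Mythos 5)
Your proposal is correct and follows the same overall route as the paper's proof: partition each $N_H(u)$ by probability class, drop a sub-$b$ leftover per class to form $H'$ (which is exactly how the paper gets the third bullet and the $\pm\gamma/2$ additive slack in the fourth), build an importance-weighted quadratic bucket potential, encode it as utilities and clique costs for \Cref{lem:local_rounding}, invoke that lemma once, and extract the second and fourth bullets by a two-level Markov argument over bad buckets and then bad nodes. Your parameter choices ($b=\Theta(1/\gamma^4)$, bucket threshold $(\gamma/4)b$, node threshold $\gamma/(2b)$) compose correctly: the bad-bucket weight is $O(1/(\gamma^2 b))\sum_B w_B$ and the second Markov multiplies by $2b/\gamma$, giving $O(1/(\gamma^3 b))=O(\gamma)$ importance loss, so the constants close.

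The one genuine difference is how you control $|E(H'')|+|S|$. The paper uses two additional \emph{quadratic} concentration potentials, $\Phi_1$ (buckets over each $N_{H'^{(j)}}(u)$, to bound $\sum_u|N_{H'}(u)\cap S|$) and $\Phi_3$ (buckets over $V$ itself, to bound $|S|$), and then argues via bad buckets that $S$ and the surviving edge set each shrink by roughly half. You instead fold a \emph{linear} term $-\alpha(1+\deg_H(v))$ into the utility and observe that, after rearranging the Rounding Lemma's guarantee into a sum of two nonnegative deficits, each deficit is individually bounded by the right-hand side; since $\sum_{v\in V}(1+\deg_H(v))=|V|+|E(H)|$ (each edge of the bipartite $H$ has one endpoint in $V$), tuning $\alpha=\Theta(\sum_u imp_u/(|V|+|E(H)|))$ gives $|S|+\sum_{v\in S}\deg_H(v)\le(2/3)(|V|+|E(H)|)$ directly, without the $\poly(\log N)$ additive term the paper needs for its leftover $V$-bucket. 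This is a mild simplification that works precisely because only an upper bound on $|S|+|E(H'')|$ is required (no concentration is needed, so a linear objective suffices), and the tension you flag between $\alpha$ and the bucket slack is real but resolves exactly as you describe. The complexity accounting is also consistent with the paper's.
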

\begin{proof}
We partition $V$ into parts $V^{(j)}$ for $j\in \{K+1, \ldots, \lceil \log N \rceil\}$ where we put each node $v\in V$ in $V^{(k_v)}$. Also, let $H^{(j)}$ be the induced subgraph by $U\sqcup V^{(j)}$.
We can compute this partition using the sorting lemma (\cref{lem:sorting}) with $O(n \log \log N)$ work and $\poly(\log N)$ depth.

Let $b := \lfloor \min\{(1/\gamma)^{6}, \gamma 2^{K-1}/\lceil\log N \rceil\}\rfloor$. We define a graph $H'^{(j)}$ by dropping a few edges from $H^{(j)}$: For each node $u \in U$ and each $j\in \{K+1, \ldots, \lceil \log N \rceil\}$, we set $N_{H'^{(j)}}(u)$ by dropping up to $b-1$ neighbors of $u$ in $H^{(j)}$, such that $|N_{H'^{(j)}}(u)|$ is divisible by $b$. We let $H'$ be the graph by the union of the edges of $H'^{(j)}$ for all $j$ and $V(H') = U \sqcup V$. 

Then, for each $j$ and $u \in U$, we partition $N_{H'^{(j)}}(u)$ arbitrarily into buckets $N_{H'^{(j)}}(u) = B^{(j)}_1(u) \sqcup B^{(j)}_2(u) \sqcup \ldots \sqcup B^{(j)}_{|N_{H'^{(j)}}(u)|/b}(u)$ of size $b$. Notice that $|N_{H^{(j)}}(u) \setminus N_{H'^{(j)}}(u)| < b$ and therefore $|N_{H'}(u)| - |N_H(u)| \geq - \lceil \log(N)\rceil \cdot b \geq - \gamma  \cdot 2^K/2 $. This is permitted in the third item of the lemma. Furthermore, the total probability contribution of these dropped neighbors in $\sum_{v \in N_{H}(u)} 2^{-k_v}$ is at most $\gamma/4$, and thus this makes $\pm \gamma/2$ additive error in the fourth item of the lemma, which is permitted. 

Our derandomization will use three potential functions for choosing the set $S\subseteq V$: (I) one to ensure that the number of edges goes down by roughly a $1/2$ factor, and (II) the other to ensure that for nearly all vertices in $U$, weighted by importances, the probability summation $2^{-k_v}$ in their neighborhood drops by roughly $1/2$, and (3) to ensure that $|S|\approx |V|/2$.

\begin{align*}
 \Phi_1(S) &:= &&\frac{4}{\sum_{u\in U} \sum_{j=K+1}^{\lceil\log N\rceil} |N_{H'^{(j)}}(u)|} \cdot \sum_{u \in U} \sum_{j=K+1}^{\lceil\log N\rceil} \left( \sum_{i=1}^{|N_{H'^{(j)}}(u)|/b}\left( |S \cap B^{(j)}_i(u)| -  b/2\right)^2 \right) 
\end{align*}

\begin{align*}
 \Phi_2(S) &:= &&\big(\frac{1}{\sum_{u\in U} imp_u}\big) \cdot  \\
 & && \sum_{u \in U} \left(4 \cdot \frac{imp_{u}}{\sum_{j=K+1}^{\lceil\log N\rceil} |N_{H'^{(j)}}(u)| 2^{-j}} \cdot \sum_{j=K+1}^{\lceil\log N\rceil} 2^{-j} \left( \sum_{i=1}^{|N_{H'^{(j)}}(u)|/b}\left( |S \cap B^{(j)}_i(u)| -  b/2\right)^2 \right) \right) 
\end{align*}

Additionally, let $ B'_1\sqcup B'_2\ldots \sqcup  B'_{\lfloor|V|/{b}\rfloor} \subseteq V$ be a partition of (all except up to $b-1$) $V$ into buckets of size $b$ and

\begin{align*}
 \Phi_3(S) := \frac{4}{b\lfloor|V|/{b}\rfloor} \sum_{i=1}^{\lfloor|V|/{b}\rfloor} \left( |S \cap B^{'}_i| -  b/2\right)^2.
\end{align*}

Notice that for $S$ chosen randomly by including each $V$ node in it with probability $1/2$, we have $\E[\Phi_1(S)]=\E[\Phi_2(S)]=\E[\Phi_3(S)]=1$. We next discuss how we can view these potential functions in the utility and cost language of \Cref{lem:local_rounding} (ignoring the constant terms). In particular, we will have $util_{i}()$ and $cost_{i}()$ to represent $\Phi_{i}(S)$.

For each vertex $v\in V^{(j)}$, we define a utility 
$$util_1(v) = \frac{4}{\sum_{u\in U} \sum_{j=K+1}^{\lceil\log N\rceil} |N_{H'^{(j)}}(u)|} \cdot \sum_{u \in N_{H'^{(j)}}(v)} (b-1)$$
and 
$$util_2(v) = \big(\frac{4}{\sum_{u\in U} imp_u}\big) \sum_{u \in N_{H'^{(j)}}(v)} \frac{imp_{u}}{\sum_{j=K+1}^{\lceil\log N\rceil} |N_{H'^{(j)}}(u)| 2^{-j}} \cdot 2^{-j} (b-1)$$
and for $v \in B'_1 \sqcup B'_2 \ldots \sqcup B'_{\lfloor |V|/b \rfloor}$, let 
$$ util_3(v) = \frac{4}{b\lfloor|V|/{b}\rfloor} (b-1). $$

Furthermore, we make an auxiliary multi-graph $\bar{H}=(V, \bar{E})$ as follows: for every $u\in U$ and for every bucket $B^{j}_{i}(u)$ and each two $v, v' \in B^{j}_{i}(u)$, we add a new edge $\{v, v'\}$ to $\bar{E}$ (notice that there might be several such parallel edges, one for each $u$). We define two costs for this particular edge $\{v, v'\}$ as follows:

$$cost_1(\{v, v'\}) = \frac{8}{\sum_{u\in U} \sum_{j=K+1}^{\lceil\log N\rceil} |N_{H'^{(j)}}(u)|}$$
and 
$$cost_2(\{v, v'\}) = \big(\frac{1}{\sum_{u\in U} imp_u}\big) \frac{imp_{u}}{\sum_{j=K+1}^{\lceil\log N\rceil} |N_{H'^{(j)}}(u)| 2^{-j}} \cdot 2^{-j} \cdot 8.$$ Also, 
for every $i$ and every pair $v, v' \in B'_i$, let us add an edge $\{v, v'\}$ with cost 
$$cost_3(\{v, v'\}) = \frac{8}{b\lfloor|V|/{b}\rfloor}.$$

Notice that $\sum_{e\in \bar{E}} (cost_1(e) + cost_2(e)+cost_3(e)) \leq \Theta((b-1))$. Using \cref{lem:local_rounding}, by setting $\eps = \Theta(1/(b-1))$ with a small leading constant, we can compute with work $(mb + n)\poly(\log \log N) = (m+n)\poly(\log \log N)/\gamma^{30}$ and depth $\poly(\log N)$ a subset $S \subseteq V$ such that $\Phi_1(S) + \Phi_2(S) + \Phi_3(S) \leq 3.1$. From this, we can make several conclusions.

We first use the above and in particular $\Phi_1\leq 3.1$ to conclude that $\sum_{u\in U} |N_{H'}(u) \cap S| \leq (0.51) \sum_{u\in U} |N_{H}(u)|$. Let us call a bucket $B^{(j)}_i(u)$ bad if $||B^{(j)}_i(u)\cap S|-b/2|\geq b^{0.8}$. Notice that for each bad bucket, we have $\left( |S \cap B^{(j)}_i(u)| -  b/2\right)^2 \geq b^{1.6}$. On the other hand, we know that $\Phi_1(S)\leq 3.1$ where \begin{align*}
 \Phi_1(S) &:= &&\frac{4}{\sum_{u\in U} \sum_{j=K+1}^{\lceil\log N\rceil} |N_{H'^{(j)}}(u)|} \cdot \sum_{u \in U} \sum_{j=K+1}^{\lceil\log N\rceil} \left( \sum_{i=1}^{|N_{H'^{(j)}}(u)|/b}\left( |S \cap B^{(j)}_i(u)| -  b/2\right)^2 \right). 
 \end{align*}
Hence, overall, at most $\left(\sum_{u\in U} \sum_{j=K+1}^{\lceil\log N\rceil} |N_{H'^{(j)}}(u)|/b\right)/\Theta(b^{0.6}) = \sum_{u\in U} |N_{H'}(u)|/\Theta(b^{1.6})$ buckets are bad. Therefore, we can conclude that

\begin{align*}
    \sum_{u\in U} |N_{H'}(u) \cap S| \leq  ((1/2+1/b^{0.2}+\Theta(1/b^{0.6}))\sum_{u\in U} |N_{H}(u)| \leq (0.51) \sum_{u\in U} |N_{H}(u)|.
\end{align*}

Next, we use $\Phi_2(S) \leq 3.1$ to conclude that for nearly all nodes in $U$, weighted by importance, the probability summation in their neighborhood goes down by a $1/2$ factor. Let $\mathcal{B}^{(j)}(u)$ be the collection of all $i$ such that bucket $B^{(j)}_{i}(u)$ is bad, with the definition of bad bucket being the same as before, i.e., $||B^{(j)}_i(u)\cap S|-b/2|\geq b^{0.8}$. Let us call a node $u$ bad if $\sum_{j=K+1}^{\lceil\log N\rceil} \sum_{i\in \mathcal{B}^{(j)}(u)} b 2^{-j}$ is more than $\left(\sum_{j=K+1}^{\lceil\log N\rceil} \sum_{i=1}^{|N_{H'^{(j)}}(u)|/b} b 2^{-j}\right)/b^{0.3}$. Let $U_{good}$ be the set of all $u\in U$ that are not bad.

For any node $u\in U_{good}$, we have 
\begin{align*} 
& && \sum_{v \in N_{H'}(u)\cap S}2^{-k_v} = \left(\sum_{j=K+1}^{\lceil\log N\rceil} \sum_{i=1}^{|N_{H'^{(j)}}(u)|/b} |B^{(j)}_i(u)\cap S|  2^{-j}\right) \\
&\geq&& (1-1/b^{0.3})(1/2-1/b^{0.2}) \left(\sum_{j=K+1}^{\lceil\log N\rceil} \sum_{i=1}^{|N_{H'^{(j)}}(u)|/b} b 2^{-j}\right) \\
&\geq&& (1/2-2/b^{0.2}) \left(\sum_{j=K+1}^{\lceil\log N\rceil} \sum_{i=1}^{|N_{H'^{(j)}}(u)|/b} b 2^{-j}\right)\\
& = && (1/2-2/b^{0.2}) \sum_{v \in N_{H'}(u)}2^{-k_v} \\
& \geq &&  (1/2)(1-\gamma) \sum_{v \in N_{H'}(u)}2^{-k_v} \\
& \geq &&(1/2)(1-\gamma) \sum_{v \in N_{H}(u)}2^{-k_v} - \gamma/2
\end{align*}
and also that
\begin{align*} 
& && \sum_{v \in N_{H'}(u)\cap S}2^{-k_v} = \left(\sum_{j=K+1}^{\lceil\log N\rceil} \sum_{i=1}^{|N_{H'^{(j)}}(u)|/b} |B^{(j)}_i(u)\cap S|  2^{-j}\right) \\
&\leq&& ((1-1/b^{0.3})(1/2+1/b^{0.2})+(1/b^{0.3})) \left(\sum_{j=K+1}^{\lceil\log N\rceil} \sum_{i=1}^{|N_{H'^{(j)}}(u)|/b} b 2^{-j}\right) \\
&\leq&& (1/2+2/b^{0.2}) \left(\sum_{j=K+1}^{\lceil\log N\rceil} \sum_{i=1}^{|N_{H'^{(j)}}(u)|/b} b 2^{-j}\right)\\
& \leq && (1/2+2/b^{0.2}) \sum_{v \in N_{H}(u)}2^{-k_v} \\
& \leq && (1/2)(1+\gamma) \sum_{v \in N_{H}(u)}2^{-k_v}.
\end{align*}

Now, we argue that $\sum_{u\in U_{good}} imp_{u} \geq (1-\gamma)\sum_{u\in U} imp_{u}$. For each bad bucket $B^{(j)}_i(u)$, we have $\left( |S \cap B^{(j)}_i(u)| -  b/2\right)^2 \geq b^{1.6}$. And a node $u$ is called bad if we have $$\sum_{j=K+1}^{\lceil\log N\rceil} \sum_{i\in \mathcal{B}^{(j)}(u)} b 2^{-j} \geq \left(\sum_{j=K+1}^{\lceil\log N\rceil} \sum_{i=1}^{|N_{H'^{(j)}}(u)|/b} b 2^{-j}\right)/b^{0.3}.$$ Hence, for any bad node $u$, we have 
\begin{align*}
   \left(\frac{imp_{u}}{\sum_{j=K+1}^{\lceil\log N\rceil} |N_{H'^{(j)}}(u)| 2^{-j}} \cdot \sum_{j=K+1}^{\lceil\log N\rceil} 2^{-j} \left( \sum_{i=1}^{|N_{H'^{(j)}}(u)|/b}\left( |S \cap B^{(j)}_i(u)| -  b/2\right)^2 \right) \right) \geq imp_{u} \cdot \Theta(b^{0.3})
\end{align*}
Given that $\Phi_3(S)\leq 3.1$, we conclude $\sum_{u\in U_{good}} imp_{u} \geq (1-1/\Theta(b^{0.3}))\sum_{u\in U} imp_{u} \geq (1-\gamma)\sum_{u\in U} imp_{u}$.

Finally, we use $\Phi_3(S)\leq 3.1$ to conclude that $|S|\leq (2/3) |V| + b \leq (2/3)|V| + \poly(\log N)$. Consider the buckets $B'_1\sqcup B'_2\ldots \sqcup  B'_{\lfloor|V|/{b}\rfloor} \subseteq V$ that we had. Let us call bucket $B'_{i}$ bad if $||B'_{i}\cap S|-b/2|\geq 0.1b$. Notice that for each bad bucket, we have $(|B'_{i}\cap S|-b/2)^2\geq 0.01b^2$. Since $\Phi_3(S)= \frac{4}{b\lfloor|V|/{b}\rfloor} \sum_{i=1}^{\lfloor|V|/{b}\rfloor} \left( |S \cap B^{'}_i| -  b/2\right)^2\leq 3.1$, we know that at most $\Theta(1/b)<0.01$ fraction of these buckets can be bad. Hence, we conclude that $|S|\leq (1 + 0.1 + 0.01)|V|/2 + \poly(\log N)$.

\end{proof}

\begin{lemma}[Low Probability Regime]
\label{lem:hitting_set_low_probability}
Let $H$ be an $n$-vertex $m$-edge bipartite graph with bipartition $V(H) = U \sqcup V$. Consider a given upper bound $N\geq n$ and let $K=\lceil 100\log\log N \rceil$. For every $u \in U$, let $imp_u \in \mathbb{R}_{\geq 0}$ and for every $v \in V$, let $k_v \in \{K+1, \ldots, \lceil \log N \rceil\}$ and assume $|N_H(u)| \geq \lceil 10\log^{25}(N) \rceil$. There exists a deterministic parallel algorithm with work $(m+n)\poly(\log \log N) + \poly(\log N)$ and depth $\poly(\log N)$ that computes two subsets $S \subseteq V$ and $U_{good} \subseteq U$ such that $\sum_{u \in U_{good}} imp_u \geq 0.99\sum_{u \in U} imp_u$ and for every $u \in U_{good}$, $|N_H(u) \cap S|\cdot 2^{-K} \in (1\pm 0.01)(\sum_{v \in N_H(u)}2^{-k_v}) \pm 0.01$.
\end{lemma}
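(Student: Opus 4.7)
The plan is to iteratively apply \Cref{lem:low_prob_half} in a stratified fashion. Starting from $S_0 = V$ and $U_0 = U$, in iteration $i$ I let $V^{\mathrm{active}}_i := \{v \in S_{i-1} : k_v \geq K + i\}$ be the currently active vertices, invoke \Cref{lem:low_prob_half} on the subgraph $H_i$ induced by $U_{i-1}$ and $V^{\mathrm{active}}_i$ with a parameter $\gamma_i$, and update $S_i := (S_{i-1} \setminus V^{\mathrm{active}}_i) \cup S^{\mathrm{active}}_i$; here $S^{\mathrm{active}}_i$ and $U_i$ are the lemma's outputs, and vertices with $k_v < K + i$ stay frozen. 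After $T := \lceil \log N \rceil - K$ iterations, every vertex is frozen, and I output $S := S_T$ and $U_{good} := U_T$. This mirrors the randomized process in which each $v$ with $k_v = K + j$ undergoes exactly $j$ independent $1/2$-coin-flips, surviving with probability $2^{-j} = 2^{K - k_v}$, so that $\mathbb{E}[|N_H(u) \cap S|] = \sum_j 2^{-j} |N_H(u) \cap V^{(K+j)}| = 2^K \sum_{v \in N_H(u)} 2^{-k_v}$, which is exactly the target.

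For the complexity, \Cref{lem:low_prob_half} gives $|E(H_{i+1})| + |V^{\mathrm{active}}_{i+1}| \leq (2/3)(|E(H_i)| + |V^{\mathrm{active}}_i|) + \poly(\log N)$, i.e., the active subgraph shrinks geometrically. I tune the schedule of $\gamma_i$ so that the per-iteration cost $(m_i+n_i)\poly(\log\log N)/\gamma_i^{20}$ stays uniformly bounded: concretely, letting $\gamma_i$ decay at a rate matching $(2/3)^{(i-1)/20}$ up to a small constant makes per-iteration work essentially constant, so the total work becomes $O((m+n)\poly(\log\log N) + \poly(\log N))$ with depth $\poly(\log N)$. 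Summing the $\gamma_i$'s is then a geometric series whose sum I can keep under $0.01$, which gives the importance bound $\sum_{u \in U_{good}} imp_u \geq 0.99 \sum_u imp_u$ via $\prod_i (1-\gamma_i) \geq 1 - \sum_i \gamma_i$.

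For the count guarantee, I exploit the fact that in \Cref{lem:low_prob_half} the buckets $B^{(j)}_i(u)$ are formed \emph{within} single strata $V^{(j)}$. Combining the weighted-sum control coming from the $\Phi_2$ potential with an importance-weighted $\Phi_1$-style count potential that I insert as an additional constraint into the invocation of \Cref{lem:local_rounding}, I argue that for each $u \in U_{good}$ and each stratum $V^{(K+j)}$, the count $|N_H(u) \cap V^{(K+j)} \cap S_i|$ halves per active round with multiplicative error $(1 \pm O(\gamma_i))$ and additive error $O(\gamma_i \cdot 2^K)$. Telescoping over the $j$ rounds active for stratum $K + j$ and summing across strata, the accumulated errors yield $|N_H(u) \cap S| \in (1 \pm 0.01) \cdot 2^K \sum_{v \in N_H(u)} 2^{-k_v} \pm 0.01 \cdot 2^K$, which is precisely the claim after the $2^{-K}$ rescaling.

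The main obstacle I expect is the three-way tension between (i) extracting per-stratum count control out of \Cref{lem:low_prob_half}'s weighted-sum guarantee, (ii) keeping cumulative importance loss below $0.01$ over $T = \Theta(\log N)$ iterations, and (iii) keeping total work within $(m+n)\poly(\log\log N)$ despite the lemma's $\gamma^{-20}$ dependence. Handling (i) requires augmenting the lemma's internal potential functions with an importance-weighted per-bucket count term, which the degree lower bound $|N_H(u)| \geq \lceil 10\log^{25}(N) \rceil$ makes viable by supplying enough buckets per stratum for the per-bucket average guarantee to translate into tight per-stratum concentration. Handling (ii) and (iii) simultaneously forces the delicate $\gamma_i$ schedule above, where $\gamma_i$ must decay slowly enough that $\gamma_i^{-20}$ is dominated by the $(2/3)^i$ shrinkage of $|E(H_i)| + |V^{\mathrm{active}}_i|$, yet fast enough that $\sum_i \gamma_i$ still sums to a small constant across the $\Theta(\log N)$ iterations.
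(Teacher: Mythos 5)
Your outer loop matches the paper's: iterate \Cref{lem:low_prob_half} for $\Theta(\log N)$ rounds, decrementing the effective exponent each round and freezing a vertex once it reaches $K$. But two of your key steps do not go through as stated.

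First, the accuracy guarantee. You identify "extracting per-stratum count control" as the main obstacle and propose to resolve it by inserting an extra importance-weighted per-bucket count potential into the inner rounding. This is both unestablished and unnecessary. It is unestablished because a single stratum $V^{(K+j)}$ of a given $u$ may contain only a handful of buckets (the degree lower bound constrains $|N_H(u)|$, not the per-stratum degree), so an averaging argument over an aggregated potential cannot give multiplicative $(1\pm O(\gamma_i))$ control for \emph{every} stratum of every good $u$; and summing per-stratum additive errors over up to $\log N$ strata threatens a $\log N$ blow-up unless the strata are weighted by $2^{-j}$ --- which is exactly what the existing $\Phi_2$ already does. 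It is unnecessary because of the observation the paper relies on: every vertex placed in the final $S$ is frozen precisely when its effective exponent $\max\{K, k_v - i\}$ reaches $K$, so the final weighted sum $\sum_{v\in N_H(u)\cap S} 2^{-K}$ \emph{is} $|N_H(u)\cap S|\cdot 2^{-K}$. Hence iterating the inner lemma's item on $\sum_{v} 2^{-k_v^{(i)}}$ (multiplicative $1\pm\gamma^{(i)}$, additive $\pm\gamma^{(i)}$ per round, telescoped via the paper's \Cref{lem:iterativeLoss}) directly yields the count bound, once you additionally charge the neighbors discarded in round $i$ (at most $\gamma^{(i)}2^K$ of them by the third item of \Cref{lem:low_prob_half}, each contributing $2^{-K}$ to the upper bound) for a total extra additive error of $\sum_i \gamma^{(i)}$.

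Second, the work bound. Choosing $\gamma_i \propto (2/3)^{(i-1)/20}$ makes $\gamma_i^{-20}$ grow at exactly the rate $(3/2)^i$ that cancels the $(2/3)^i$ shrinkage of the active subgraph, so each of the $\Theta(\log N)$ iterations costs $\Theta((m+n)\poly(\log\log N))$ and the total is $(m+n)\log N\cdot\poly(\log\log N)$ --- precisely the $\log N$ overhead the lemma is designed to avoid. You need $\gamma_i$ to decay strictly slower, e.g.\ $\gamma^{(i)} = \Theta(0.99^i)$ as in the paper (floored at $\Theta(1/\log N)$), so that $(2/3)^i(\gamma^{(i)})^{-20}$ is itself a decaying geometric series while $\sum_i\gamma^{(i)}$ remains a small constant. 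You also overlook that $|U^{(i)}|$ does not shrink across iterations; the degree hypothesis $|N_H(u)|\geq\lceil 10\log^{25}N\rceil$ is there precisely so that $|U|\leq m/\lceil 10\log^{25}N\rceil$ and the $\sum_i |U|(\gamma^{(i)})^{-20} \leq O(\log N)\cdot |U|\cdot O(\log^{20}N) = O(m)$ term stays in budget --- not, as you suggest, to supply buckets per stratum.
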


\begin{proof}


We define $H^{(0)} = H$; $U^{(0)} = U$; $V^{(0)} = V$ and $V^{(0)}_{fixed} = \emptyset$

For $i = 0,1,\ldots, I$ for $I=\lceil\log(N)\rceil$.

In round $i$, we invoke \cref{lem:low_prob_half} with input:
\begin{itemize}
    \item $H^{(i)}$ with bipartition $U^{(i)} \sqcup V^{(i)}$
    \item $\gamma^{(i)} = \max(0.0000001 \cdot 0.99^i, 0.0000001/\log(N))$ 
    \item $imp_u$ for every $u \in U^{(i)}$
    \item $k_v^{(i)} = k_v - i$ for every $v \in V^{(i)}$
\end{itemize}

As a result, we obtain a bipartite graph $H'^{(i)} \subseteq H^{(i)}$, a subset $U^{(i)}_{good} \subseteq U^{(i)}$,  and a subset $S^{(i)} \subseteq V^{(i)}$, such that for $H''^{(i)} := H'^{(i)}[U^{(i)}_{good} \sqcup S^{(i)}]$, it holds that

\begin{itemize}
\item[(I)] $|E(H''^{(i)})|+|S^{(i)}|\leq (2/3) \left(|E(H^{(i)})|+|V^{(i)}| \right) + \poly(\log N)$, 
\item[(II)] $\sum_{u \in U^{(i)}_{good}} imp_u \geq (1-\gamma^{(i)})\sum_{u \in U^{(i)}} imp_u$,
\item[(III)] for every node $u\in U^{(i)}$, we have $|N_{H'^{(i)}}(u)| \geq |N_{H^{(i)}}(u)| -\gamma^{(i)} 2^{K}$, and
\item[(IV)] for every node $u \in U^{(i)}_{good}$, we have $$\sum_{v \in N_{H''^{(i)}}(u)}2^{-k^{(i)}_v} \in ((1/2)(1 \pm \gamma^{(i)})\sum_{v \in N_{H^{(i)}}(u)}2^{-k^{(i)}_v} \pm \gamma^{(i)}).$$ 
\end{itemize}
Then, we set
$V^{(i+1)}_{fixed} = V^{(i)}_{fixed} \cup \{v \in S^{(i)} \colon k_v - (i+1) = K \}$

We define $H^{(i+1)} = H''^{(i)}[U^{(i)}_{good} \sqcup (S^{(i)} \setminus V^{(i+1)}_{fixed})]$, $U^{(i+1)} = U^{(i)}_{good}$ and $V^{(i+1)} = S^{(i)} \setminus V^{(i+1)}_{fixed}$. 

At the end of the for loop, we set the final output as $S = V^{(I+1)}_{fixed}$, and $U_{good} = U_{good}^{(I)}$.

\begin{claim}
 For every $u \in U_{good}$, $|N_H(u) \cap S| \cdot 2^{-K} \in (1\pm 0.01)(\sum_{v \in N_H(u)} 2^{-k_v} \pm 0.01)$.
\end{claim}
\begin{proof}

For each iteration $i$, let us define the set of \textit{discarded neighbors} of node $u$ in this iteration as $\Gamma^{(i)}(u)= (N_{H^{(i)}}(u) \cap S^{(i)}) \setminus (N_{H^{''(i)}}(u) \cap S^{(i)})$. 

Let us also define $\bar{V}^{(i)} = V^{(i)}\cup V^{(i)}_{fixed}$ for each $i$. From item (IV) above, for every node $u\in U^{(i)}_{good}$,  we have that
$$\sum_{v \in (N_{H}(u)\cap \bar{V}^{(i+1)}) \setminus (\bigcup_{j=0}^{i}\Gamma^{(j)}(u))} 2^{-\max\{K, k(v)-i-1\}} \in (1\pm \gamma^{(i)})\sum_{v \in (N_{H}(u)\cap \bar{V}^{(i)}) \setminus (\bigcup_{j=0}^{i-1}\Gamma^{j}(u))} 2^{-\max\{K, k(v)-i\}} \pm \gamma^{(i)}.$$
From this, applying it iteratively and using \Cref{lem:iterativeLoss}, which gives a cleaner bound for iterative appearance of multiplicative and additive losses, we can conclude that 
\begin{align*} 
\sum_{v \in (N_{H}(u)\cap S) \setminus (\bigcup_{j=0}^{I}\Gamma^{(j)}(u))} 2^{-K} = 
\sum_{v \in (N_{H}(u)\cap \bar{V}^{(I+1)}) \setminus (\bigcup_{j=0}^{I}\Gamma^{(j)}(u))} 2^{-K}
\in 
\prod_{i=0}^{I}(1\pm 2\gamma^{(i)}) \cdot \sum_{v \in N_{H}(u)} 2^{-k(v)} \pm \sum_{i=0}^{I} 2\gamma^{(i)}
\end{align*}

from item (III) above, for every node $u\in U^{(i)}_{good}$, we have that $|\Gamma^{(i)}(u)| \leq \gamma^{(i) } 2^{K}.$ Hence, for every node $u\in U_{good}$, we can conclude that $|\bigcup_{i} \Gamma^{(i)}(u)| \leq \sum_{i=0}^{I} \gamma^{(i) } 2^{K} $.

From the above, we see that 

\begin{align*} \sum_{v \in (N_{H}(u)\cap S)}  2^{-K} &\geq && \sum_{v \in (N_{H}(u)\cap S) \setminus (\bigcup_{j=0}^{I}\Gamma^{(j)}(u))} 2^{-K} \\
&\geq &&
\prod_{i=0}^{I}(1 -  2\gamma^{(i)}) \sum_{v \in N_{H}(u)} 2^{-k(v)} - (\sum_{i=0}^{I} 2\gamma^{(i)}) \end{align*}

Given that $\prod_{i=0}^{I}(1 -  2\gamma^{(i)})\geq 1-10^{-2}$ and $\sum_{i=0}^{I} 2\gamma^{(i)} \leq 10^{-2}$, we can conclude that $\sum_{v \in (N_{H}(u)\cap S)}  2^{-K} \geq (1-0.01)\sum_{v \in N_{H}(u)} 2^{-k(v)}-0.01$.

In addition, from item (III) above, for every node $u\in U^{(i)}_{good}$, we have that $|\Gamma^{(i)}(u)| \leq \gamma^{(i) } 2^{K}.$ Hence, for every node $u\in U_{good}$, we can conclude that $|\bigcup_{i} \Gamma^{(i)}(u)| \leq \sum_{i=0}^{I} \gamma^{(i) } 2^{K} $. Using this, and  the upper bound portion of \Cref{lem:iterativeLoss}, we conclude that

\begin{align*} \sum_{v \in (N_{H}(u)\cap S)}  2^{-K} &\leq && \sum_{v \in (N_{H}(u)\cap S) \setminus (\bigcup_{j=0}^{I}\Gamma^{(j)}(u))} 2^{-K} + |\bigcup_{i} \Gamma^{(i)}(u)| \cdot 2^{-K} \\ 
& \leq &&
\prod_{i=0}^{I}(1 +  2\gamma^{(i)}) \sum_{v \in N_{H}(u)} 2^{-k(v)} + \sum_{i=0}^I 2\gamma^{(i) } + (\sum_{i=0}^{I} \gamma^{(i)}) \end{align*}

Again, since $\prod_{i=0}^{I}(1 +  2\gamma^{(i)})\leq 1+10^{-2}$ and $\sum_{i=0}^{I} 3\gamma^{(i)} \leq 10^{-2}$, we know that $\sum_{v \in (N_{H}(u)\cap S)}  2^{-K} \leq (1+0.01)\sum_{v \in N_{H}(u)} 2^{-k(v)}+0.01$. 
Hence, having both sides, we conclude that $|N_{H}(u)\cap S|\cdot 2^{-K} \in (1\pm 0.01)\sum_{v \in N_{H}(u)} 2^{-k(v)}\pm 0.01$
\end{proof}
It remains to discuss the work and depth of the algorithm.
The overall work is  

\begin{align*}
& &&\sum_{i=0}^I(|E(H^{(i)})| + |U^{(i)}| + |V^{(i)}|)\poly(\log \log N)/(\gamma^{(i)})^{20} \\
&\leq &&\sum_{i=0}^I|U|\poly(\log \log N)/(\gamma^{(i)})^{20} + 
\sum_{i=0}^I(|E(H^{(i)})| + |V^{(i)}|)\poly(\log \log N)/(\gamma^{(i)})^{20} \\
&\leq &&\sum_{i=0}^I\frac{|E(H)|}{\lceil 10\log^{25}(N) \rceil}\poly(\log \log N)/(\gamma^{(i)})^{20} \\ 
& &&+ \sum_{i=0}^I \left(\max(0.9^i(|E(H)| + |V(H)|), \poly(\log N) \right) \poly(\log \log N)/(\gamma^{(i)})^{20} \\
&=&& (n+m)\poly(\log \log N) + \poly(\log N)
\end{align*}

and the overall depth is $I \cdot \poly(\log N) = \poly(\log N)$.
\end{proof}

\subsection{High Probability Regime}
\begin{lemma}[High Probability 1/2 Sampling]
\label{lem:high_probability_half}
Let $H$ be an $n$-vertex $m$-edge bipartite graph with bipartition $V(H) = U \sqcup V$, where each node has a unique identifier in $\{1,2, \ldots, 2n\}$, let $N\geq n$ be a given upper bound, and let $\gamma \in [1/\log(N), 0.01)$. For every $u \in U$, let $imp_u \in \mathbb{R}_{\geq 0}$. There exists a deterministic parallel algorithm with work $(m+n)\poly(\log \log N)/\gamma^{30}$ and depth $\poly(\log N)$ that computes two subsets $S \subseteq V$ and $U_{good} \subseteq U$ such that 

\begin{itemize}
\item $\sum_{u \in U_{good}} imp_u \geq (1-\gamma)\sum_{u \in U} imp_u$ 
\item for every $u \in U_{good}$, $|N_H(u) \cap S| \in (1 \pm \gamma)\frac{|N_H(u)|}{2} \pm (1/\gamma)^{7}$.
\end{itemize}
\end{lemma}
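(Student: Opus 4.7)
The plan is to mirror the structure of \Cref{lem:low_prob_half} in the much simpler setting where every element of $V$ is effectively sampled with probability $1/2$ (there is no nontrivial $2^{-k_v}$ weighting). First, I would fix a bucket size $b = \lfloor (1/\gamma)^{10}\rfloor$ and, for each $u \in U$, drop up to $b-1$ neighbors to obtain $H' \subseteq H$ with $|N_{H'}(u)|$ divisible by $b$, and then partition $N_{H'}(u)$ arbitrarily into buckets $B_1(u), \ldots, B_{|N_{H'}(u)|/b}(u)$ of size exactly $b$. The dropped neighbors contribute at most $b$ per $u$ to $|N_H(u) \cap S|$, which is absorbed in the $(1/\gamma)^{7}$ slack once the exponent of $b$ is tuned.

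Next, I would introduce two quadratic potentials
\[
\Phi_1(S) = \frac{4}{\sum_u |N_{H'}(u)|}\sum_{u \in U} \sum_{i=1}^{|N_{H'}(u)|/b}\bigl(|S \cap B_i(u)| - b/2\bigr)^2,
\]
\[
\Phi_2(S) = \frac{1}{\sum_u imp_u}\sum_{u \in U} \frac{4\, imp_u}{|N_{H'}(u)|}\sum_{i=1}^{|N_{H'}(u)|/b}\bigl(|S \cap B_i(u)| - b/2\bigr)^2,
\]
both of which have expectation $1$ under the uniformly random $1/2$-sampling of $V$. I would cast $\Phi_1 + \Phi_2$ into the utility/cost language of \Cref{lem:local_rounding} exactly as in the proof of \Cref{lem:low_prob_half}: build an auxiliary multigraph $\bar H$ with an edge inside each bucket for every unordered pair (a factor-$\Theta(b)$ blow-up in edge count), with node utilities coming from the diagonal $|S \cap B_i(u)|$ terms and per-edge costs capturing the pairwise contributions in the expansion of $(|S \cap B_i(u)| - b/2)^2$. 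Invoking \Cref{lem:local_rounding} with $\eps = \Theta(1/b)$ yields $S$ with $\Phi_1(S) + \Phi_2(S) \leq 2.1$.

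To extract $U_{good}$, I would call a bucket $B_i(u)$ bad if $\bigl||S \cap B_i(u)| - b/2\bigr| \geq b^{0.8}$ and call $u$ bad if more than a $1/b^{0.6}$-fraction of its buckets are bad; $U_{good}$ is the complement. Each bad bucket contributes $\geq b^{1.6}$ to the inner sum, so each bad $u$ contributes $\Omega(b^{0.3}) \cdot imp_u$ to $\Phi_2$, and from $\Phi_2(S) \leq 2.1$ a weighted averaging argument gives $\sum_{u \in U_{good}} imp_u \geq (1 - O(1/b^{0.3})) \sum_u imp_u \geq (1 - \gamma)\sum_u imp_u$. For $u \in U_{good}$, decomposing $|N_H(u) \cap S|$ into contributions from good buckets (each in $b/2 \pm b^{0.8}$), bad buckets (at most $b$ each, and fewer than $|N_{H'}(u)|/b^{1.6}$ in total), and up to $b$ dropped neighbors yields $|N_H(u) \cap S| \in (1 \pm \gamma)|N_H(u)|/2 \pm (1/\gamma)^{7}$ once the exponents are tuned.

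The main obstacle, as in \Cref{lem:low_prob_half}, is the work accounting rather than any conceptual step: the bucket blow-up gives $|\bar H| = \Theta(mb)$ auxiliary edges, and \Cref{lem:local_rounding} with $\eps = \Theta(1/b)$ adds a $\poly(1/\eps)$ overhead in its defective-coloring subroutine. Choosing $b = (1/\gamma)^{10}$ balances this against the slack required in the final neighborhood bound and yields the stated $(m+n)\poly(\log\log N)/\gamma^{30}$ work, with depth $\poly(\log N)$ inherited directly from \Cref{lem:local_rounding}. Note that no $|S|$-size potential analogous to $\Phi_3$ of \Cref{lem:low_prob_half} is needed, since the present lemma imposes no constraint on $|S|$.
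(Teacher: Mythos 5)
Your proposal follows essentially the same route as the paper's proof: fix a bucket size $b = \poly(1/\gamma)$, drop up to $b-1$ neighbors per $u$ to make degrees divisible by $b$, penalize the deviation $\left(|S\cap B_i(u)|-b/2\right)^2$ in each bucket via a quadratic potential fed to \Cref{lem:local_rounding} through the auxiliary multigraph with $\Theta(mb)$ edges and $\eps=\Theta(1/b)$, and then classify buckets and nodes as bad by thresholding the deviation. (The paper uses only the importance-weighted potential, your $\Phi_2$; your extra unweighted $\Phi_1$ is harmless but plays no role in your own argument. The paper also does not restate the work bound of \Cref{lem:local_rounding} as depending on $\eps$ --- only the depth does --- but this only makes the accounting easier.)

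Two of your concrete parameter choices are internally inconsistent as written, though both are repairable by retuning. First, with $b=\lfloor(1/\gamma)^{10}\rfloor$ the up-to-$(b-1)$ dropped neighbors cost an additive $(1/\gamma)^{10}$, which exceeds the permitted $(1/\gamma)^{7}$ slack; the paper takes $b=\lceil(1/\gamma)^{6}\rceil$ precisely so that this loss fits. Second, and more substantively, with your bad-node threshold of a $1/b^{0.6}$ fraction of bad buckets, a bad node contributes only about $\frac{4\,imp_u}{|N_{H'}(u)|}\cdot\frac{|N_{H'}(u)|}{b}\cdot\frac{1}{b^{0.6}}\cdot b^{1.6} = 4\,imp_u$ (normalized) to $\Phi_2$, not $\Omega(b^{0.3})\cdot imp_u$; the bound $\Phi_2(S)\leq 2.1$ then only rules out a constant fraction of bad importance mass, not a $\gamma$ fraction, so the first bullet fails with your numbers. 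You need the threshold to be a $1/b^{0.3}$ fraction (as in the paper), which still leaves the per-node neighborhood estimate intact since bad buckets then perturb $|N_H(u)\cap S|$ by at most a $1/b^{0.3}\ll\gamma$ relative amount. With those two exponents corrected, the argument goes through and matches the paper's.
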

\begin{proof}

Let $b := \lceil (1/\gamma)^{6}\rceil$. For each node $u \in U$, we slightly adjust its neighborhood $N_H(u)$ by dropping up to $b-1$ of the neighbors, so that the remaining $|N_H(u)|$ is a multiple of $b$---notice that this causes at most an additive $\pm b \in \pm (1/\gamma)^{7}$ difference in the output size of $|N_H(u) \cap S|$, which is permitted in the second condition. For the remaining part of $N_{H}(u)$, we partition it arbitrarily into buckets $N_H(u) = B_1(u) \sqcup B_2(u) \sqcup \ldots \sqcup B_{|N_H(u)|/b}(u)$ of size $b$.

We want a derandomization such that, for nearly all buckets $B_i(u)$ over different $u$ (normalized and weighted by importance $imp(u)$, to be made precise), it holds that $|B_{i}(u) \cap S|$ is very close to $b/2$. In particular, we will use $(|B_{i}(u) \cap S| - b/2)^2$ as a measure of the drift from the $b/2$ target in this bucket, and perform the derandomization such that an aggregate of these drifts over all buckets and all $u$ is minimized. Concretely, our derandomization will try to approximately maintain the potential 
\begin{align*}
 \Phi(S) &:= &&\sum_{u \in U} \frac{imp_u}{|N_H(u)|} \cdot \left( \sum_{i=1}^{|N_H(u)|/b}\left( |S \cap B_i(u)| -  b/2\right)^2 \right) \\
 &=  &&\sum_{u \in U} \frac{imp_u}{|N_H(u)|} \cdot \sum_{i=1}^{|N_H(u)|/b}\left( 2\binom{|S \cap B_i(u)|}{2} -  (b-1) |S \cap B_i(u)| + b^2/4 \right) 
\end{align*}
as if set $S$ is chosen by including each node $v\in V$ in it with probability $1/2$. Notice that if $S$ is chosen randomly by including each $v\in V$ in $S$ with probability $1/2$, we would get 
\begin{align*}
\E[\Phi(S)] &=  \sum_{u \in U} \frac{imp_u}{|N_H(u)|} \cdot \sum_{i=1}^{|N_H(u)|/b}\left( 2 \E[\binom{|S \cap B_i(u)|}{2}] -  (b-1) \E[|S \cap B_i(u)|] + b^2/4 \right)  \\
&=  \sum_{u \in U} \frac{imp_u}{|N_H(u)|} \sum_{i=1}^{|N_H(u)|/b}\left((b)(b - 1)/4  - (b-1) b/2 + b^2/4\right) \\
&= \sum_{u \in U} \frac{imp_u}{|N_H(u)|} \sum_{i=1}^{|N_H(u)|/b}\frac{b}{4} \\
&= \sum_{u \in U} imp_u \cdot \frac{1}{4}.
\end{align*}

We next translate this potential to the utility/cost language of \Cref{lem:local_rounding} using an appropriate auxiliary mutli-graph $\bar{H}=(V, \bar{E})$. 
In particular, for each node $v \in V$, let us define
\[util(v) := (b-1)\sum_{u \in N_H(v)} \frac{imp_u}{|N_H(u)|}.\]
Notice that we have
\begin{align*}
\sum_{v \in V} util(v)
=\sum_{v \in V} (b-1)\sum_{u \in N_H(v)} \frac{imp_u}{|N_H(u)|}
= \sum_{u \in U} (b-1) \cdot imp_u.
\end{align*}
Furthermore, for every bucket $B_i(u)$ and two distinct vertices $v, v' \in B_i(u)$, let us add a new edge between $v$ and $v'$ in the edge multi-set $\bar{E}$ of the auxiliary graph $\bar{H}$ (notice that there might be several parallel such edges, one for each $u$). We set the cost of this newly added edge as
\[cost(\{v,v'\}) := 2 \cdot \frac{imp_u}{|N_H(u)|} \]
Notice that we have

\begin{align*}
\sum_{e\in \bar{E}} cost(e) &= &&\sum_{u \in U} \sum_{i = 1}^{|N_H(u)|/b}\sum_{v,v' \in \binom{B_i(u)}{2}}  2 \cdot \frac{imp_u}{|N_H(u)|} \\
&= &&\sum_{u\in U} \frac{|N_H(u)|}{b} \cdot \binom{b}{2} \cdot 2 \cdot \frac{imp_u}{|N_H(u)|} \\
& = &&\sum_{u \in U} (b-1) \cdot imp_u.
\end{align*}
We can write $\Phi(S) = \big(\sum_{u\in U}\frac{imp_u}{|N_H(u)|} (\sum_{i=1}^{|N_{H}(u)|/b} b^2/4)\big) - \big(\sum_{v \in S} util(v)\big) + \big(\sum_{e \in \bar{E}\cap \binom{S}{2}} cost(e)\big).$

Using \cref{lem:local_rounding} with $\eps = 1/(4(b-1))$, we compute a subset $S \subseteq V$ satisfying 

\begin{align*}
    \Phi(S) &\leq&& \big(\sum_{u\in U}\frac{imp_u}{|N_H(u)|} (\sum_{i=1}^{|N_{H}(u)|/b} b^2/4)\big) - \big(\frac{1}{2}\sum_{v \in V} util(v)\big) + \frac{1}{4}\big(\sum_{e \in \bar{E}} cost(e)\big)  + \eps \sum_{e \in \bar{E}} cost(e) \\
    &=&& \sum_{u \in U} imp_u \cdot \frac{1}{4}. + \eps \sum_{e \in \bar{E}} cost(e) \\
    &=&& \sum_{u \in U} imp_u \frac{1}{4} + \frac{1}{4(b-1)} \cdot \sum_{u \in U} imp_u (b-1)  \\    
    &\leq&& (\sum_{u \in U} imp_u \frac{1}{4}) (2).
\end{align*}

Let us call a bucket $B_i(u)$ bad if $||B_i(u)\cap S|-b/2|\geq b^{0.8}$. Let us call a node $u$ bad if more than $1/b^{0.3}$ fraction of its buckets are bad, and let $U_{good}$ be the set of nodes that are not bad in this sense. For any node $u\in U_{good}$, we can conclude that $|N_{H}(u)\cap S| \geq (1-1/b^{0.3})(1/2-1/b^{0.2}) |N_{H}(u)|\geq (1/2-2/b^{0.2}) |N_{H}(u)| \geq (1/2)(1-\gamma)|N_{H}(u)|$, and also that $|N_{H}(u)\cap S| \leq ((1-1/b^{0.3})(1/2+1/b^{0.2})+(1/b^{0.3}))|N_{H}(u)| \leq (1/2+2/b^{0.2}) |N_{H}(u)| \leq (1/2)(1+\gamma)|N_{H}(u)|$.  

Recall that
\begin{align*}
\Phi(S) &:= &&\sum_{u \in U} \frac{imp_u}{|N_H(u)|} \cdot \left( \sum_{i=1}^{|N_H(u)|/b}\left( |S \cap B_i(u)| -  b/2\right)^2 \right)
\end{align*}
and that for our chosen set $S$, we have $\Phi(S)\leq \sum_{u \in U} imp_u/2$. Notice that for each bad bucket, we have $\left( |S \cap B_i(u)| -  b/2\right)^2 \geq b^{1.6}$. Thus, for each bad node, its contribution to the potential is at least $imp_{u} b^{0.3}.$ Hence, we know that $\sum_{u\in U\setminus U_{good}} imp_{u} \leq (1/(b^{0.3})) \sum_{u\in U} imp_{u}$. That is, $\sum_{u\in U_{good}} imp_{u} \geq (1-1/(b^{0.3})) \sum_{u\in U} imp_{u} \geq (1-\gamma) \sum_{u\in U} imp_{u}$. 

It remains to discuss the work and depth of the algorithm. Note that the auxiliary multi-graph $\bar{H}$ has at most $n$ vertices and $m \cdot b$ edges, and we can construct $\bar{H}$ in $O(mb)$ work and $\poly(\log N)$ depth. With the same work and depth, we can also calculate the node utilities and edge costs.
The invocation \cref{lem:local_rounding} takes $(mb + n)\poly(\log \log N) $ work and $\poly(\log N/\eps)$ depth.
As $b := \lceil (1/\gamma)^6\rceil$ and $\eps = 1/(4(b-1))$, we can conclude that the overall work is 
 at most $(m+n)\poly(\log \log N)/\gamma^{30}$ and the depth is $\poly(\log N)$.

\end{proof}

\begin{lemma}[High Probability Regime]
\label{lem:hitting_set_high_probability}
Let $H$ be an $n$-vertex $m$-edge bipartite graph with bipartition $V(H) = U \sqcup V$, where each node has a unique identifier in $\{1,2,\ldots, 2n\}$, let $N\geq n$ be a given upper bound, and let $K=\lceil 100\log\log N \rceil$. For every $u \in U$, let $imp_u \in \mathbb{R}_{\geq 0}$ and for every $v \in V$, let $k_v \in \{1,2, \ldots, K\}$. There exists a constant $C\geq 1$ and a deterministic parallel algorithm with work $(m+n)\poly(\log \log N)$ and depth $\poly(\log N)$ that computes two subsets $S \subseteq V$ and $U_{good} \subseteq U$ such that $\sum_{u \in U_{good}} imp_u \geq 0.99\sum_{u \in U} imp_u$ and for every $u \in U_{good}$, $|N_H(u) \cap S| 
 \in [ (0.9\sum_{v \in N_H(u)}2^{-k_v}) - 0.1, C(\sum_{v \in N_H(u)}2^{-k_v}) + C]$.
\end{lemma}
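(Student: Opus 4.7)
The plan is to iterate \Cref{lem:high_probability_half} for exactly $K$ rounds, one per possible value of $k_v$, mirroring the template of \Cref{lem:hitting_set_low_probability} but with $K = O(\log\log N)$ rounds rather than $\Theta(\log N)$. At round $i$, after applying the halving, every vertex $v$ with $k_v = i$ that survives is moved into the final output $S$, while vertices with $k_v > i$ continue to the next round; after $K$ rounds every vertex has been either fixed into $S$ or discarded. A class-$j$ vertex ends up in $S$ iff it survives $j$ successive halvings, giving it an empirical inclusion probability of $\approx 2^{-j}$, which is exactly what is needed for $|N_H(u)\cap S|$ to approximate $\sum_{v \in N_H(u)} 2^{-k_v}$.

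The subtlety is that \Cref{lem:high_probability_half} only controls the aggregate surviving neighborhood size of each $u$, whereas here I need separate concentration per class: the class-$i$ count of $u$ tells me what is being fixed this round, and the class-$>i$ counts must propagate cleanly into future rounds. To obtain per-class control from a single invocation of the halving lemma, I would work on an auxiliary virtual bipartite graph $\tilde H$ in which each $u \in U$ is split into $K$ copies $u_1, \dots, u_K$, with $u_j$ adjacent exactly to $\{v \in N_H(u) : k_v = j\}$ and $imp_{u_j} = imp_u$. This virtual graph has exactly $m$ edges and at most $K|U|+|V|$ vertices, and can be built in $(m+n)\poly(\log\log N)$ work using the sorting subroutines of \Cref{lem:sorting} and \Cref{lem:SortBhatt}. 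Applying \Cref{lem:high_probability_half} to the (shrinking) $\tilde H$ at each round simultaneously yields halving bounds for every $u_j$ independently.

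For the error analysis, virtual copy $u_j$ is touched only in rounds $1, \dots, j$, so telescoping the recurrence $|N^{(i)}(u_j)| \in (1\pm\gamma^{(i)}) |N^{(i-1)}(u_j)|/2 \pm (1/\gamma^{(i)})^7$ across those rounds yields, for every always-good $u_j$, a bound of the form $\prod_{i=1}^{j}(1\pm\gamma^{(i)})\cdot n_j(u)/2^j$ plus additive slack, on the number of class-$j$ neighbors of $u$ that end up in $S$; summing over $j$ gives the required $|N_H(u)\cap S| \approx \sum_{v \in N_H(u)} 2^{-k_v}$. Declaring $u \in U_{good}$ iff every $u_j$ remained good throughout, I would pick $\gamma^{(i)}$ to be a sufficiently small constant divided by $K^2$, which (a) keeps $\prod_i(1+\gamma^{(i)}) \le 1 + O(1/K) = O(1)$, supplying the constant $C$ of the statement, (b) union-bounds the bad importance across the $K$ virtual copies of each $u$ and the $K$ rounds by $K^2\gamma^{(i)}\sum_u imp_u \le 0.01\sum_u imp_u$, and (c) keeps per-round work $(m+n)\poly(\log\log N)/(\gamma^{(i)})^{30}$ at $(m+n)\poly(\log\log N)$, which sums over $K$ rounds to the claimed total.

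The main obstacle I expect is juggling these three competing constraints on the $\gamma^{(i)}$ simultaneously, in particular absorbing the per-class accumulated additive slack $O(\max_i (1/\gamma^{(i)})^7)$ coming from the $K$ invocations of the halving lemma into the $+C$ tail of the final bound; the telescoping works cleanly only because each $u_j$ participates in $j \le K$ rounds and the additive errors decay geometrically backward in the recurrence. A secondary technical point is keeping the virtual graph $\tilde H$ stored in a compacted adjacency-list form between rounds as vertices are fixed or discarded, so that each round's call to \Cref{lem:high_probability_half} sees a subgraph of the expected size and charges its work correctly.
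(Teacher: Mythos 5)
Your proposal has a genuine gap, and it is exactly the point you flag as your ``main obstacle'': the additive slack of \Cref{lem:high_probability_half} cannot be absorbed under your schedule. That lemma's guarantee $|N_H(u)\cap S| \in (1\pm\gamma)\frac{|N_H(u)|}{2} \pm (1/\gamma)^7$ has an additive error measured in \emph{vertex counts}. In your schedule, class-$j$ survivors are frozen into the final $S$ at round $j$, so the count error from the last halving of the copy $u_j$ enters $|N_H(u)\cap S|$ with weight $1$, completely unattenuated. Your constraints (a) and (b) force $\gamma^{(i)} = O(1/K^2)$, hence $(1/\gamma^{(i)})^7 = \Omega(K^{14})$ per class and $\Omega(K^{15}) = \Omega((\log\log N)^{15})$ total additive error. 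The lemma demands an absolute constant $+C$ on the upper side and only $-0.1$ on the lower side, so no choice of $\gamma^{(i)}$ satisfies your three constraints simultaneously; this is not a technicality to be juggled but a contradiction.

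The paper resolves this by running the iteration in the \emph{opposite direction} and stopping early. It initializes with only the class-$K$ vertices, halves them, merges the survivors (now at effective class $K-1$) with the original class-$(K-1)$ vertices, and so on, with $\gamma^{(i)} = 1/(100(K-i)^2)$, stopping at $I = K-50$ and outputting everything still alive. Under that schedule the count slack $(1/\gamma^{(i)})^7$ incurred at round $i$ always refers to vertices of effective class $K-i-1$, so its contribution to the probability mass $\sum_v 2^{-k_v^{(i+1)}}$ is $(1/\gamma^{(i)})^7\cdot 2^{-(K-i-1)}$, which is summable to a small constant because the exponential attenuation beats the polynomial in $K-i$ (indeed this is the whole reason the recursion never descends below a constant effective class). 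The final conversion from probability mass back to a count costs only a factor $2^{50}$, which is where the constant $C$ comes from. Your virtual-copy construction for per-class control is unnecessary in the paper's schedule, since each round touches only a single probability class; it is not wrong, but it does not repair the additive-error problem. A correct fix to your write-up would be to adopt the paper's high-to-low merging order and early stopping, at which point the rest of your telescoping and work accounting goes through.
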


\begin{proof}
During the process, in each iteration $i$, we will have a set $V^{(i)}_{alive}$ of $V$-vertices that remain alive, with the interpretation that $V^{(0)}_{alive} = V$, and per $i$, we have $V^{(i+1)}_{alive} \subseteq V^{(i)}_{alive}$. For each $v\in V^{(i)}_{alive}$, we define $k^{(i)}_{v} = \min\{k_{v}, K-i\}$. We would like to determine $V^{(i)}_{alive}$ such that, for each $i$, and for nearly all $u\in U$, we have that $\sum_{v\in N_{H}(u)\cap V^{(i+1)}_{alive}} 2^{-k^{(i+1)}_{v}}$ remains close to $\sum_{v\in N_{H}(u)\cap V^{(i)}_{alive}} 2^{-k^{(i)}_{v}}$ up to a small additive error such that these additive errors add up to a small constant over all iterations.

\medskip

As initialization, we define $U^{(0)} = U$; $V^{(0)} = \{v \in V \colon k_v = K\}$  and $H^{(0)} = H[U^{(0)} \sqcup V^{(0)}]$.

For $i = 0,1,\ldots, I$ for $I=K-50$, in round $i$ invoke \cref{lem:high_probability_half} with input:
\begin{itemize}
    \item $H^{(i)}$ with bipartition $U^{(i)} \sqcup V^{(i)}$
    \item $\gamma^{(i)} = 1/(100(K-i)^{2})$
    \item $imp_u$ for every $u \in U^{(i)}$
\end{itemize}
As a result, we obtain two sets $S^{(i)}$ and $U_{good}^{(i)}$ satisfying

\begin{itemize}
\item[(I)] $\sum_{u \in U^{(i)}_{good}} imp_u \geq (1-\gamma^{(i)})\sum_{u \in U^{(i)}} imp_u$, and
\item[(II)] for every $u \in U^{(i)}_{good}$, $|N_{H^{(i)}}(u) \cap S^{(i)}| \in (1 \pm \gamma^{(i)})\frac{|N_{H^{(i)}}(u)|}{2} \pm (1/\gamma^{(i)})^7$.
\end{itemize}
Then, we set
$V^{(i+1)}_{alive} = V^{(i)}_{alive} \setminus \left(V^{(i)} \setminus S^{(i)} \right)$, and we define $H^{(i+1)} = H[U^{(i)}_{good} \sqcup \{v \in V^{(i+1)}_{alive} \colon k^{(i+1)}_v = K-(i+1)\}]$, and go to iteration $i+1$ of the foor loop.

At the end of the for loop, we set the final output as $S = V^{(I+1)}_{alive}$, and $U_{good} = U_{good}^{(I)}$.

From (II), we have that $|N_{H^{(i)}}(u) \cap S^{(i)}| \in (1 \pm \gamma^{(i)})\frac{|N_{H^{(i)}}(u)|}{2} \pm (1/\gamma^{(i)})^7$. Hence, given that $i\leq K-50$, we have 
\begin{align*}
    \sum_{v\in N_{H}(u)\cap V^{(i+1)}_{alive}} 2^{-k^{(i+1)}_{v}} &\in &&(1\pm \gamma^{(i)})\sum_{v\in N_{H}(u)\cap V^{(i)}_{alive}} 2^{-k^{(i)}_{v}} \pm \frac{(1/\gamma^{(i)})^7}{2^{(K-(i+1))}} \\
    &\in &&(1\pm \gamma^{(i)})\sum_{v\in N_{H}(u)\cap V^{(i)}_{alive}} 2^{-k^{(i)}_{v}} \pm \gamma^{(i)}.
\end{align*}


Using this iteratively, and by applying \Cref{lem:iterativeLoss} which is our simple lemma for iterations of multiplicative and additive error, we conclude that 

\begin{align*}
   \sum_{v\in N_{H}(u)\cap V^{(I+1)}_{alive}} 2^{-k^{(I+1)}_{v}} &\in && (\prod_{j=0}^{i} (1+2\gamma^{(j)})) \sum_{v\in N_{H}(u)\cap V} 2^{-k_{v}} \pm (2 \sum_{j=0}^{I} \gamma^{(j)}) \\
   &\in && (1\pm0.1) \sum_{v\in N_{H}(u)\cap V} 2^{-k_{v}} \pm (0.1)
\end{align*}

In the end, we get

\begin{align*}
  |N_H(u) \cap S| = |N_H(u) \cap V^{(I+1)}_{alive}| \leq 2^{50}  \left((1+0.1) \sum_{v\in N_{H}(u)\cap V} 2^{-k_{v}} +  (0.1)  \right) \leq C \sum_{v\in N_{H}(u)\cap V} 2^{-k_{v}}+C.
  \end{align*}

  We also get

  \begin{align*}
  |N_H(u) \cap S| &= |N_H(u) \cap V^{(I+1)}_{alive}| \\
  &\geq \sum_{v \in N_H(u) \cap V^{(I+1)}_{alive}} 2^{-k_v^{(I+1)}} \\
  &\geq 0.9\sum_{v \in N_H(u)} 2^{-k_v} - 0.1.
  \end{align*}
These two conclude the argument for $|N_H(u) \cap S|  \in [ (\sum_{v \in N_H(u)}2^{-k_v}) - 0.1, C(\sum_{v \in N_H(u)}2^{-k_v}) + C]$, for every $u \in U_{good}$.

Next, we show that $\sum_{u \in U_{good}} \geq 0.99\sum_{u \in U} imp_u$. Indeed, we have

\begin{align*}
 \sum_{u \in U_{good}} imp_u = \sum_{u \in U^{(I)}_{good}} imp_u \geq \left( 1 - \sum_{j=0}^I \gamma^{(j)}\right) \sum_{u \in U} imp_u \geq 0.99 \sum_{u \in U} imp_u.
\end{align*}

The work can be upper bounded by

\[(m+n)\poly(\log \log N)/\left(\sum_{j=0}^I\left(\gamma^{(j)}\right)^{30}\right) = (m+n)\poly(\log \log N)\]

and the depth can be upper bounded by 

\[I \cdot \poly(\log N) = \poly(\log N).\]

\end{proof}

\subsection{Wrap Up}
\label{subsec:hitting_set_wrap}
Here, using the lemmas developed in the previous two subsections, we present the proof of \Cref{lem:hitting_set}. For convenience, we first restate the lemma.
\hittingSet*

\begin{proof}[Proof of \Cref{lem:hitting_set}]
Let $K :=\lceil 100\log \log N\rceil$. 
Let $H^{(low)} = H[U^{(low)} \sqcup V^{(low)}]$ with

\[V^{(low)} = \{v \in V \colon k_v \in \{K+ 1, K+2,\ldots, \lceil \log(N)\rceil\}\}\] 

and

\[U^{(low)} = \{u \in U \colon |N_H(u) \cap V^{(low)}| \geq \lceil 10 \log^{25} (N) \rceil\}.\]

We invoke \cref{lem:hitting_set_low_probability} with input $H^{(low)}$ and as a result we obtain two subsets $S^{(low)} \subseteq V^{(low)}$ and $U^{(low)}_{good} \subseteq U^{(low)}$ such that $\sum_{U^{(low)}_{good}} imp_u \geq 0.99\sum_{u \in U^{(low)}} imp_u$ and  for every $u \in U^{(low)}_{good}$, $|N_H(u) \cap S^{(low)}| \cdot 2^{-K} \in (1\pm 0.01)(\sum_{v \in N_H(u) \cap V^{(low)}} 2^{-k_v}) \pm 0.01$.

Let $H^{(high)} = H[U^{(high)} \sqcup V^{(high)}]$ with

\[V^{(high)} = (V \setminus V^{(low}) \cup S^{(low)} \] 

and

\[U^{(high)} = U^{(low)}_{good} \cup (U \setminus U^{(low)}).\]

We also define for every $v \in V^{(high)}$,

\[k^{(high)}_v = \min(k_v, K).\]







Next, We invoke \cref{lem:hitting_set_high_probability} with input $H^{(high)}$ and as a result we obtain two subsets $S \subseteq V^{(high)}$ and $U_{good} \subseteq U^{(high)}$ such that $\sum_{u \in U_{good}} imp_u \geq 0.99\sum_{u \in U^{(high)}} imp_u$ and  for every $u \in U_{good}$,  $|N_H(u) \cap S| 
 \in [ (\sum_{v \in N_H(u) \cap V^{(high)}}2^{-\min(k_v, K)}) - 0.1, C_{\ref{lem:hitting_set_high_probability}}(\sum_{v \in N_H(u) \cap V^{(high)}}2^{-\min(k_v, K)}) + C_{\ref{lem:hitting_set_high_probability}}]$.

 First, note that

 \begin{align*}
 \sum_{u \in U_{good}} imp_u 
 &\geq 0.99 \sum_{u \in U^{(high)}} imp_u  \\
 &= 0.99 \left(\sum_{u \in U^{(low)}_{good}} imp_u + \sum_{u \in U \setminus U^{(low)}} imp_u \right) \\
 &\geq 0.99 \left(0.99 \cdot\sum_{u \in U^{(low)}} imp_u + \sum_{u \in U \setminus U^{(low)}} imp_u \ \right) \\
 &\geq 0.9 \sum_{u \in U} imp_u.
 \end{align*}

Thus, it remains to verify that for every $u \in U_{good}$, we have $|N_H(u) \cap S| \in [(0.99)\sum_{v\in N_{H}(u)} 2^{-k_v} -0.5, C\sum_{v\in N_{H}(u)} 2^{-k_v} +C]$. 

First, let's consider a $u \in U_{good} \cap U^{(low)}_{good}$. We have

\begin{align*}
   |N_H(u) \cap S| 
   &\leq C_{\ref{lem:hitting_set_high_probability}}\sum_{v \in N_H(u) \cap V^{(high)}}2^{-\min(k_v, K)} + C_{\ref{lem:hitting_set_high_probability}} \\
   &= C_{\ref{lem:hitting_set_high_probability}}(\sum_{v \in N_H(u) \setminus V^{(low)}} 2^{-k_v} + |N_H(u) \cap S^{(low)}| 2^{-K})  + C_{\ref{lem:hitting_set_high_probability}}\\
 &\leq C_{\ref{lem:hitting_set_high_probability}}\left(\sum_{v \in N_H(u) \setminus V^{(low)}} 2^{-k_v} + \left((1+0.01)\sum_{v \in N_H(u) \cap V^{(low)}} 2^{-k_v} + 0.01\right)\right)  + C_{\ref{lem:hitting_set_high_probability}} \\
 &\leq C\sum_{v\in N_{H}(u)} 2^{-k_v} +C.
\end{align*}

and

\begin{align*}
   |N_H(u) \cap S| 
   &\geq 0.9\sum_{v \in N_H(u) \cap V^{(high)}}2^{-\min(k_v, K)} -0.1 \\
   &= 0.9 \sum_{v \in N_H(u) \setminus V^{(low)}} 2^{-k_v} + 0.9 |N_H(u) \cap S^{(low)}| 2^{-K} -0.1\\
 &\geq 0.9\sum_{v \in N_H(u) \setminus V^{(low)}} 2^{-k_v} + 0.9((1-0.01)\sum_{v \in N_H(u) \cap V^{(low)}} 2^{-k_v} - 0.01)  - 0.1 \\
 &\geq (0.5)\sum_{v\in N_{H}(u)} 2^{-k_v} -0.5.
\end{align*}

Next, consider a $u \in U_{good} \setminus U^{(low)}$.
We have

\begin{align*}
 |N_H(u) \cap S| 
   &\leq C_{\ref{lem:hitting_set_high_probability}}\sum_{v \in N_H(u) \cap V^{(high)}}2^{-\min(k_v, K)} + C_{\ref{lem:hitting_set_high_probability}}\\ 
   &\leq C_{\ref{lem:hitting_set_high_probability}}(\sum_{v \in N_H(u)} 2^{- k_v} + |N_H(u) \cap V^{(low)}| \cdot 2^{-K})  + C_{\ref{lem:hitting_set_high_probability}}\\
   &\leq C_{\ref{lem:hitting_set_high_probability}}(\sum_{v \in N_H(u)} 2^{- k_v} + 10)\leq C\sum_{v\in N_{H}(u)} 2^{-k_v} +C
\end{align*}

and

\begin{align*}
 |N_H(u) \cap S| 
   &\geq 0.9\sum_{v \in N_H(u) \cap V^{(high)}}2^{-\min(k_v, K)} -0.1 \\
   &\geq 0.9\sum_{v \in N_H(u) \setminus V^{(low)}} 2^{-k_v} - 0.1 \\
   &\geq 0.9(\sum_{v \in N_H(u)} 2^{-k_v} - 2^{-K}|N_H(u) \cap V^{(low)}|) - 0.1 \\
   &\geq 0.9(\sum_{v\in N_{H}(u)} 2^{-k_v} - 2^{-K} \cdot \lceil 10\log^{25}(N)\rceil) -0.1 \\
   &\geq 0.5\sum_{v\in N_{H}(u)} 2^{-k_v} - 0.5.
\end{align*}

Both the invocation of \cref{lem:hitting_set_low_probability} and \cref{lem:hitting_set_high_probability} take $(m+n) \poly(\log \log N) + \poly(\log N)$ work and $\poly(\log N)$ depth, so the overall work is $(m+n) \poly(\log \log N) + \poly(\log N)$  and the overall depth is $\poly(\log N)$.

\end{proof}

\section{Maximal Matching}
Here, we present our maximal matching algorithm, abstracted by \Cref{thm:MM}, using the derandomized hitting set result developed in the previous section. For convenience, let us restate the maximal matching result before discussing its proof.
\mm*
\begin{proof}
We first discuss the general algorithm outline and how the core ingredient in this outline can be performed using a very simple randomized parallel algorithm. Then, we explain how we use our deterministic parallel hitting set algorithm, presented in \Cref{lem:hitting_set}, to replace this randomized ingredient, thus building a deterministic parallel maximal matching algorithm.

\paragraph{Algorithm Outline} For each node $v$, let $d(v)=2^{\lceil deg(v)\rceil}$ where $deg(v)$ denotes its degree. We call this the rounded degree of node $v$. We sort vertices using \Cref{lem:sorting}, according to $\log(d(v))$, using $O(n\log\log n)$ work and $\poly(\log n)$ depth. Let $\Delta = \max_{v} d(v)$. Thus, each vertex is placed in one of $\log \Delta$ rounded degree categories.
Let $V_d$ be the set of vertices in rounded degree category $d$. The algorithm has $\log \Delta$ stages. In stage $i$, we target nodes of the maximum (remaining) rounded-degree category $V_d$, and ensure that by the end of the stage, no such node remains.

We next describe one stage. Consider stage $i$ and let $d=\Delta/2^{i}$. Here we target nodes $V_d$ of rounded degree $d$ and ensure that by the end of the stage, no such node remains. The stage consists of $O(\log n)$ identical iterations. Next, we describe one iteration. Each iteration involves some early clean up work, to remove some edges incident on previously matched vertices, and then it performs the core part of the iteration that computes an additional matching to be added to the output (eventually maximal) matching.

\paragraph{Each iteration in a stage---initial clean up} From the previous iterations, some nodes have been matched, and their edges have been marked for removal. We first remove some of those edges marked for removal. Concretely, let $V_d$ be the set of all vertices $v$ that were placed in the rounded degree $d(v)=d$ category, which we know from the past (without needing to work on the entire graph again and identify those vertices). For any node $v$ that was placed in $V_d$, we had that $deg(v)\in (d/2,d]$. First, we perform a scan on all edges of $v$ and read the number of edges marked for removal (from the previous iterations of the algorithm). If the number of edges of $v$ not marked for removal is below $d/3$, then we mark $v$ for downgrade from $v_{d}$ and we clean up its edges as follows: We first update its adjacency list by removing those edges marked for removal. The work for this is charged to those at least $d/2-d/3 = \Omega(d)$ edges that were marked for removal. In addition, we remove $v$ from $V_d$ and add it to the set $V_{d'}$ where $d'=2^{\lceil deg'(v)\rceil}$ and $deg'(v)$ denote the number of edges of $v$ that remain (i.e., were not removed). We do this using at most $O(|V_d| \log\log n)$ work, by sorting all nodes marked for downgrade from $V_d$, by their new rounded degree $d'$, using \Cref{lem:sorting}. We then move each continuous block of the sort to the appropriate set $V_{d'}$. At the end of this clean up, every node remaining in $V_d$ has at least $d/3$ edges that were not marked for removal (i.e., not incident on previously matched vertices).

\paragraph{Each iteration in a stage---core of the iteration} Next, we work on the set of edges incident on this cleaned up set $V_d$ and compute a matching among these edges. Notice that given $V_d$, we can read all these edges using $O(|V_d| d)$ work, and we can put those of them not marked for removal in an array of size $O(|V_d| d)$ by allocating $d$ locations for each node in $V_d$ and then doing one list ranking to remove empty locations, in $O(\log n)$ depth. Furthermore, using the transformation between graph representations mentioned in the preliminaries (\Cref{subsec:basics}), we can move from the list of edges to a compact representation of the subgraph with this set of edges and all nodes that have an edge in this list. This takes $O(|V_d| d\log\log n)$ work and $O(\log n)$ depth.

We compute in this subgraph a matching that is incident on at least $\Theta(|V_{d}| d)$ edges of the original graph, using $O(\Theta(|V_{d}| d) \poly(\log\log n))$ work. This is the core ingredient of the algorithm, for which we will give a randomized approach below, and then a derandomization via hitting set. For now, let us complete the outline assuming a black-box solution for this ingredient. Once we have computed the matching, we add the matching to the output (eventually maximal) matching, and we then mark all edges of the matching and and all edges incident on the matched nodes for removal from the graph (the actual removal is done in later iterations, as we described in the clean-up process above, so that we do not have to sort nodes by degree too frequently). 

Since each iteration removes $\Theta(|V_{d}| d)$ edges where $|V_d|$ is the set of nodes in the target category at the start of that iteration, after $O(\log n)$ iterations, no such node remains. Moreover, during these iterations, the number of remaining edges decays as a geometric series. Thus, the total work in the stage is at most $O(|V_d| d \poly(\log\log n))$. Since even the first iteration permanently removes $\Theta(|V_d| d)$ edges from the graph, we can charge this work to the removed edges and conclude that, over all stages, the work is bounded by $O(m\log\log n)$. What remains is how we compute a matching among edges incident on $V_{d}$ such that it is incident to $\Theta(|V_{d}| d)$ edges.

\paragraph{Randomized Ingredient} The randomized solution is simple: Mark each edge with probability $1/(4d)$ and put in the matching each marked edge that has no other marked edge incident on it. It is easy to see that each node in $V_{d}$ has a constant probability of having one incident edge in the matching. Hence, in expectation, $\Theta(|V_d| d)$ edges are incident on the computed matching.

\paragraph{Derandomization using Hitting Set} We first define the bipartite graph $H=(U\sqcup V, E)$ for the hitting set lemma. Let $E_d$ be the set of edges incident on $V_d$. We create $V$ by having one node for each edge $e\in E_d$. In addition, let $U$ have one node for each node in the original graph $G$ that is incident on an edge in $E_d$. 
We define the edges $E$ of the bipartite graph $H=(U\sqcup V, E)$ by connecting each $u\in U$ to all nodes in $V$ that represent the edges in $E_d$ that $u$ was incident on. 
We set the importance $imp_{u}$ of each node $u$ equal to the number of edges in $E_d$ that the node it represents in incident on.

For each $v\in V$,
we set $k_v = \log_{2} d - 4$. 
Notice that, for each node $u\in U$, we have $\sum_{v\in N_{H}(u)}2^{-k_v} \geq d \cdot (16/d) \leq 16$. In addition, for each node $w\in V_d$, we know that the node $u\in U$ that represents $w$ satisfies $\sum_{v\in N_{H}(u)}2^{-k_v} \geq d/3 \cdot (16/d) \geq 5$. 

The overall size of $H = |U|+|V|+|E(H)|$ is $\Theta(|V_d| d)$. \Cref{lem:hitting_set} thus runs in $\Theta(|V_d| d) \poly(\log\log n)$ work, and gives two subset $S\subseteq V$  and $U_{good} \subset U$ such that $\sum_{u\in U_{good}} deg_{E_d(u)} \geq 0.9\sum_{u\in} deg_{E_d(u)}$ and we have the following: for each node $u\in U_{good}$, we have $|N_{H}(u) \cap S| \leq \Theta(1)$, and for each node $u\in U_{good}$ that represents a node $w\in V_d$, we have $|N_{H}(u) \cap S| \geq 1$. These imply that $S$ gives a set of edges which induce a subgraph of constant degree and are incident on $\Theta(|V_d| d)$ edges. 

We can extract out of these a matching that is incident on $\Theta(|V_d| d)$ edges, as follows: Invoke \Cref{lem:coloring} on a graph where each edge is one node and two of these nodes are connected in their represented edges share an endpoint. This gives a $O(1)$ coloring of the edges using $\Theta(|V_d| d) \poly(\log\log n)$ work. Then, our desired matching is simply the color class, among these $O(1)$ colors, that maximizes the number of edges in $E_d$ incident on it. Therefore, the matching is incident on $\Theta(|V_d| d)$ edges of $E_d$, as desired.
\end{proof}

\section{Maximal Independent Set}
\subsection{Outer Shell}
In this section, we provide the proof of our maximal independent set algorithm, abstracted by \Cref{thm:MIS}, which we restate below for convenience.

\mis*
\begin{proof}
The core of \Cref{thm:MIS} is to compute, using $O((m+n) \poly(\log\log N))$ work and $\poly(\log N)$ depth, an independent set $S^* \subseteq V$ such that removing $S^* \cup N(S)$ from the graph removes $\Theta(|E|)$ edges, i.e., $\sum_{u\in S^*\cup N(S^*)}\geq \Theta(|E|)$. Let $I(S^*)$ be the set of vertices isolated in the graph $G[V\setminus (S^*\cup N(S))]$. We then add $S^*\cup I(S^*)$ and remove $S^*\cup N(S^*) \cup I(S^*)$ from the graph. The theorem follows readily by $O(\log n)$ iterations of this process, each time on the remaining graph. Notice that per iteration, we deal with a graph that has no isolated vertices. Hence, the work bound of the iteration is upper bounded by $O(m' \poly(\log\log N))$ where $m'$ is the number of edges remaining at that iteration. Due to the constant factor shrinkage of the number of edges, these work bounds form a geometric series, and their summation is dominated by the work in the first iteration, hence implying the work bound in the theorem statement. 

Let us now discuss the computation of $S^*$. We first invoke \Cref{lem:independentishSet} to compute a set $S\subseteq V$ such that $\sum_{u\in S\cup N_{G}(S)} deg(u) \geq 0.1 |E|$, and $\Delta(G[S])= O(1)$. Then we invoke \Cref{lem:coloring} to compute an $O(1)$ coloring of $\Delta(G[S])= O(1)$, using at most $O((m+n)\log\log N)$ work and $\poly(\log N)$ depth.
Consider sets $S_1$, $S_2$, \dots, $S_{O(1)}$ where $S_j$ denotes vertices of $S$ with color $j$. We set $S^* = \arg\max_{S_j} \left(\sum_{u\in S_j\cup N(S_j)} \; deg(u)\right)$. Since $\sum_{u\in S\cup N_{G}(S)} deg(u) \geq 0.1 |E|$, we know that $\sum_{j} \left(\sum_{u\in S_j\cup N(S_j)} deg(u)\right) \geq 0.1|E|$. As there are only $O(1)$ colors, we conclude that $$\max_{S_j} \sum_{u\in S_j\cup N(S_j)} deg(u)\geq \Theta(|E|).$$   
\end{proof}

\begin{lemma}[\textbf{Independentish Set Lemma}]
\label{lem:independentishSet}
Consider any $n$-node $m$-edge graph $G=(V, E)$. There is a deterministic parallel algorithm with $O((m+n) \poly(\log\log n))$ work and $\poly(\log n)$ depth that computes a set $S^*\subseteq V$ and an orientation of edges of $G[S^*]$ such that 
\begin{itemize}
    \item $\sum_{u\in S^*\cup N_{G}(S^*)} deg(u) \geq 0.1 |E|$, and
    \item the maximum outdegree in $G[S^*]$ is $O(1)$.
\end{itemize}
\end{lemma}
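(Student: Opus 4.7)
The plan is to derandomize a Luby-style single round of random selection using \Cref{lem:hitting_set}, followed by a coloring step to extract a bounded-outdegree subset. First I would bucket vertices by rounded degree $d(v)=2^{\lceil\log_2 \deg(v)\rceil}$ into classes $V_d$, orient each edge from the higher- to the lower-rounded-degree endpoint (ties broken by id), and call $v$ \emph{good} if at least $\deg(v)/3$ of its edges are outgoing. A standard Luby-style counting argument then gives $\sum_{v\text{ good}}\deg(v)=\Omega(|E|)$, so it suffices to cover a constant fraction of the edges incident to good vertices. Combining pigeonhole over the $O(\log n)$ classes with an $O(\log\log n)$-way refinement by the exact magnitude of $|N^+(v)\cap V_{d'}|$, each good $v$ can be assigned a \emph{majority sub-class} $d^\star(v)$ with $|N^+(v)\cap V_{d^\star(v)}|\in[\alpha d^\star,\beta d^\star]$ for absolute constants $\alpha,\beta$; the extra $\log\log n$ factor is absorbed into the $\poly(\log\log n)$ work overhead.

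For each sub-class $d$, I would invoke \Cref{lem:hitting_set} on the bipartite graph $H_d=(U^{(d)}\sqcup V_d, E_d)$, where $U^{(d)}$ consists of the good vertices whose majority sub-class is $d$ (with $imp_u=\deg(u)$), each connected to its out-neighbors in $V_d$, and with $p_v=1/(Cd)$ for $v\in V_d$. The majority-sub-class condition ensures $\sum_{v'\in N_{H_d}(u)} p_{v'}\in\Theta(1)$ for every $u\in U^{(d)}$, so the lemma returns $S_d\subseteq V_d$ and $U^{(d)}_{good}\subseteq U^{(d)}$ with $|N_G(u)\cap S_d|\in\Theta(1)$ for all $u\in U^{(d)}$ and $\sum_{u\in U^{(d)}_{good}}\deg(u)\geq (3/4)\sum_{u\in U^{(d)}}\deg(u)$. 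Taking $S=\bigcup_d S_d$, every $u\in\bigcup_d U^{(d)}_{good}$ lies in $N_G(S)$, yielding $\sum_{u\in S\cup N_G(S)}\deg(u)=\Omega(|E|)$. Because each edge is touched a constant number of times across these hitting-set instances, the total work is $O((m+n)\poly(\log\log n))$ and the depth is $\poly(\log n)$.

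It remains to extract $S^*\subseteq S$ admitting an orientation of $G[S^*]$ with max outdegree $O(1)$. Orienting $G[S]$ from lower to higher rounded-degree class, a vertex $v\in S_d$ has out-neighbors only in classes $d'\geq d$; by additionally including each such $v$ in $U^{(d')}$ whenever $|N_G(v)\cap V_{d'}|$ falls in the well-sized range, the hitting-set upper bound gives $|N_G(v)\cap S_{d'}|=O(1)$ for those classes, while the remaining (badly sized) classes contribute only $O(1)$ in aggregate because $\deg(v)\leq 2d$ constrains $v$'s total neighborhood. A final $O(1)$-coloring of $G[S]$ via \Cref{lem:coloring} and selecting the color class that maximizes $\sum_{u\in S^*\cup N_G(S^*)}\deg(u)$—exactly the trick used in the matching proof—then yields $S^*$ with the required outdegree bound and $\Omega(|E|)$ coverage. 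The main obstacle is this last step: turning the in-expectation outdegree bound into a deterministic per-vertex $O(1)$ requires arranging each hitting-set instance so that the upper-bound clause of \Cref{lem:hitting_set} applies from the perspective of every candidate $v$, and dealing with residual badly sized classes either by combining them into a single padded instance or by structural arguments leveraging the sparsity of $v$'s neighborhood in those classes.
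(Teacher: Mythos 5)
Your first half is sound and close in spirit to the paper's: the paper also derandomizes one round of Luby, defines good vertices via the degree-based orientation, truncates each good $u$'s in-neighborhood to a set $IN^*(u)$ with $\sum_{v}2^{-k_v}\in[5,7]$ (using probabilities $p_v=\min\{32/2^{\lceil\log_2 deg(v)\rceil},1\}$, which makes your majority-sub-class bucketing unnecessary but is otherwise the same idea), and uses a hitting-set guarantee to cover a constant fraction of $\sum_u deg(u)$. The gap is exactly where you flag it: the deterministic per-vertex $O(1)$ outdegree bound for $G[S^*]$ cannot be extracted from \Cref{lem:hitting_set} by placing each candidate $v$ on the $U$-side of the instances. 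First, the upper bound in \Cref{lem:hitting_set} holds only for $u\in U_{good}$, i.e.\ for a $0.9$ fraction of the importance mass, so a constant fraction of the selected vertices can have uncontrolled intersection with $S$; discarding them after the fact threatens the hitting guarantee of the $u$'s that relied on them, which is a circular dependency. Second, even for vertices that are good in every instance, the per-class bound is $O(1)$ \emph{per degree class}, and summing over up to $\Theta(\log n)$ classes gives $O(\log n)$, not $O(1)$; your claim that the badly-sized classes contribute $O(1)$ in aggregate is an expectation-level statement with no deterministic counterpart provided by the lemma. Third, the final appeal to \Cref{lem:coloring} is circular: that lemma produces $O(\Delta^2)$ colors, so it only yields $O(1)$ colors once the $O(1)$ (out)degree bound on $G[S]$ is already established.

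The paper resolves this with a genuinely different mechanism that your proposal does not contain. It replaces \Cref{lem:hitting_set} by the strengthened \Cref{lem:coreMIS}, whose third guarantee is a single \emph{global} quadratic bound $\sum_{e'\in\binom{S}{2}\cap E'}w(e')\le C\sum_{e'=\{v,v'\}\in E'}w(e')2^{-(k_v+k_{v'})}$ on an auxiliary weighted graph $G'$ whose edge weights encode $weight(v)=\sum_{u:\,v\in IN^*(u)}deg(u)$. The set $S^*$ is then \emph{defined} as the vertices of $S$ with at most $2500C$ out-neighbors in $S$, so the outdegree bound holds by construction, and a Markov-type averaging over the weighted edge count shows that the "overloaded" nodes $u$ (those whose $IN^*(u)\cap S$ contains a pruned high-degree vertex) carry at most $0.1\sum_{u\in U}deg(u)$ of the coverage, so pruning costs only a constant fraction. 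Supporting this extra global constraint is what forces the additional potential functions ($\Phi_4,\Phi_5$) in the MIS-specific halving lemmas; it is not obtainable from the plain hitting-set machinery you invoke.
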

\begin{proof} The high-level idea is to derandomize one round of Luby's classic MIS algorithm\cite{luby1985simple}, via our derandomization abstracted in \Cref{lem:coreMIS}. Let us briefly review Luby's algorithm. 

\paragraph{Luby's Randomized MIS} Mark each node $v$ with probability $1/(10 deg(v))$. Orient each edge $\{u,v\}$ from $u$ to $v$ such that $(deg(v), id(v))>(deg(u), id(u))$. Let $S^*$ be the set of marked nodes that have no marked out-neighbor. Add $S^*$ to the output independent set, remove $S^*\cup N(S^*)$ from the graph. Add also any vertex that is now isolated, to the output independent set. Then repeat (so long some vertices remain). 

\paragraph{Analysis of Luby's Randomized MIS} The main property is to show that the removal of $S^*\cup N(S^*)$ removes a constant fraction of the edges in expectation, e.g., $\E[\sum_{u\in S^*\cup N_{G}(S^*)} deg(u)] \geq |E|/1000$. Call a node $u$ good if its indegree is at least $deg(u)/3$, and let $W$ be the set of good nodes. One can see that $\sum_{u\in W} deg(u) \geq |E|/2$. Moreover, using a simple pairwise analysis, we can show that each good node is removed with at least a constant probability, i.e., $\forall u\in W$, we have $\Pr[u\in S^*\cup N(S^*)] \geq 1/(500)$. Hence, $\E[\sum_{u\in S^*\cup N_{G}(S^*)} deg(u)] \geq |E|/1000$.
 

\paragraph{Minor modifications to fit our derandomization}  To align better with our derandomization, we make small adjustments to the algorithm and perform the analysis slightly differently. We set the marking probability for each node $v$ as $p_v=\min\{32/2^{\lceil\log_2(deg(v))\rceil}, 1\}$. In particular, each node $v$ of $deg(v)\leq 32$ is marked deterministically. Also, the marking probability can be viewed as $p_v=2^{-k_v}$ for $k_v\in\{0, 1, 2, \log N\}$. 
We split the set $W$ of good nodes into two categories: $U$ for those of degree at least $33$, and $U'$ for the rest.
For each good node $u\in U$, let $IN^*(u)$ be a subset of in-neighbors of $u$ such that $\sum_{v\in IN^*(u)} 2^{-k_v} \in [5, 7]$. Also notice that for each $v\in V$, we have $\sum_{w\in OUT(v)} 2^{-k_w} \leq 32$. Instead of finding the independent set $S^*$, which we did above, we do as follows: let $S$ be the set of marked nodes, and let $S^*$ be the set of marked nodes that have at most $10000$ marked outneighbors. Hence, we have the deterministic guarantee that the maximum outdegree of $G[S^*]$ is at most $10000$. For each node $u\in U'$, we have $Pr[u\in S^*]=1$. In addition, for each node $u\in U$ (for which we know $deg(u)\geq 33$), we have that (1) there is a probability $0.9$ that we have $|IN^*(u)\cap S|\geq 1$, and (2) with probability at least $0.6$, we have at most $10000$ outgoing edges in $(IN^*(v)\cap S)\times S$. Therefore, with probability at least $0.5$, node $u$ has at least one neighbor in $S^*$. Hence, overall, we conclude that $\E[\sum_{u\in S^*\cup N(S^*)} deg(u)] \geq (|E|/4)$. 

\paragraph{Derandomization} Except for the randomness in the process and the expectation in the bound on $\E[\sum_{u\in S\cup N(S)} deg(u)]$, the above would provide the desired guarantees of the lemma for the set $S$. We next discuss how we derandomize the process, by casting it as an instance to which we can apply \Cref{lem:coreMIS}. \Cref{lem:coreMIS} is the core derandomization of our MIS algorithm, and proving it will be the subject of the rest of this section. Set $U$ is the set of all good nodes $u$ such that $deg(u)\geq 33$, and the set $V$ is the set of all nodes. The graph $H$ is defined by setting $N_{H}(u)=IN^*(u)$. For each $u\in U$, set $imp_u=deg(u)$. Graph $G'$ is simply the set of all edges (in which we can omit edge directions). For each node $v\in V$, we compute $weight(v)=\sum_{u\in U; v\in IN^*(u)} deg(u)$ and we set $w(e) = weight(v)$ for each outgoing edge $e$ of $v$ (that is now placed in $G'$ as an undirected edge). We then apply \Cref{lem:coreMIS}, which runs in $O((m+n) \poly(\log\log n))$ work and $\poly(\log n)$ depth, providing two subsets $S \subseteq V$ and $U_{good} \subseteq U$ such that 
\begin{itemize}
    \item $\sum_{u \in U_{good}} imp_u \geq 0.9\sum_{u \in U} imp_u$,
    \item for every $u \in U_{good}$, $|N_H(u) \cap S| \in [1, C]$,
    \item $\sum_{e'\in \binom{S}{2}\cap E'} w(e') \leq C \cdot \sum_{e'=\{v, v'\} \in E'} w(e') 2^{-(k_v+k_{v'})}$,
\end{itemize}
where $C$ is a constant in \Cref{lem:coreMIS}. The first two properties provide an effect similar to property (1) discussed before. Concretely, they imply that $\sum_{u\in S\cup N(S)} deg(u)\geq (0.9)\sum_{u\in U} deg(u)$. Let us discuss how we use the third property. We have 
\begin{align*}
\sum_{e'=\{v, v'\} \in E'} w(e') 2^{-(k_v+k_{v'})} &= \sum_{u\in U} deg(u) \sum_{v\in IN^*(u)} 2^{-k_v} \left(\sum_{v'\in OUT(v)} 2^{-k_{v'}}\right) \\ &\leq \sum_{u\in U} deg(u) \left(\sum_{v\in IN^(*)} 2^{-k_v}\right) \cdot (32) \leq \sum_{u\in U} deg(u) \cdot (7) \cdot (32) \\
&\leq \sum_{u\in U} 250 \cdot deg(u).   
\end{align*} 
Hence, for the set $S$ chosen, we have $\sum_{e'\in \binom{S}{2}\cap E'} w(e') \leq \sum_{u\in U} 250C \cdot deg(u)$. By breaking the left hand side, we have $\sum_{u\in U} deg(u) \cdot \left(\sum_{v\in IN^*(u)\cap S} |OUT(v)\cap S|\right) \leq \sum_{u\in U} 250 C \cdot deg(u)$. Let us call a node $v\in S$ \textit{high-degree} if $OUT(v)\cap S\geq 2500C$, and we let $S^*$ be the nodes of $S$ that are not high-degree. Also, we call each node $u\in U$ \textit{overloaded} if it has a high-degree neighbor in $IN^*(u)\cap S$. Let $O\subseteq U$ be the set of overloaded nodes. We know that $\sum_{u\in O} deg(u) \left(\sum_{v\in IN^*(u)\cap S} |OUT(v)\cap S|\right) \geq \sum_{u\in O} 2500C \cdot deg(u)$. Hence, $\sum_{u\in O} 2500C \cdot deg(u) \leq \sum_{u\in U} 250C \cdot deg(u)$. Therefore, $$\sum_{u\in O} deg(u) \leq 0.1 \sum_{u\in U} deg(u).$$ On the other hand, notice that each node of $S$ that is not high-degree appears in $S^*$ and thus each node of $U\cap (S\cup N(S))$ that is not overloaded appears in $U\cap (S^* \cup N(S^*))$. This implies $U \cap (S^*\cup N(S^*))\supseteq U \cap ((S\cup N(S)) \setminus O)$. Therefore, we have 
\begin{align*}
    \sum_{u\in (S^*\cup N(S^*)\cap U)} deg(u) &\geq &&\sum_{u\in (S\cup N(S))\cap U} deg(u) &&&- &&&&\sum_{u\in O} deg(u) \\
    &\geq &&0.9 \left(\sum_{u\in U} deg(u)\right) &&&- &&&&0.1 \left(\sum_{u\in U} deg(u)\right) \\
    &\geq &&0.7 \left(\sum_{u\in U} deg(u)\right)\\   
\end{align*} 
 In addition, we know that each node $u\in U'$, which means it has $deg(u)\leq 32$, is in $S$ deterministically and has outdegree at most $32$. Thus, $u\in S^*$. Hence, we conclude 
 
 \begin{align*} 
 \sum_{u\in (S^*\cup N(S^*))} deg(u) &\geq&& 
 \sum_{u\in (S^*\cup N(S^*)\cap U)} deg(u) + \sum_{u \in U'} deg(u) \\
 &\geq &&0.7\sum_{u \in U} deg(u) + \sum_{u \in U'} deg(u) \\ &\geq&& 0.7 \sum_{u\in W} deg(u) \\
 &\geq&& |E|/4.
 \end{align*}
\end{proof}

\begin{restatable}{lemma}{MIShittingSet}\textnormal{(\textbf{Core Hitting Set Lemma for MIS})}
\label{lem:coreMIS}
Let $H$ be an $n$-vertex $m$-edge bipartite graph with bipartition $V(H) = U \sqcup V$. Let $N\geq n$ be a given upper bound. For every $u \in U$, let $imp_u \in \mathbb{R}_{\geq 0}$ and for every $v \in V$, let $k_v \in \{0,1,\ldots, \lceil \log N \rceil\}$ such that for every $u \in U$, we have $\sum_{v \in N_H(u)}2^{-k_v} \in [5,10]$. Consider also an additional graph $G'=(V, E')$ on the same set $V$ of vertices, with $m'$ edges, where each edge $e'\in E'$ has a weight $w(e')\in \mathbb{R}_{\geq 0}$.

There exists a constant $C\geq 1$ and a deterministic parallel algorithm with work $(m+m'+n)\poly(\log \log N)$ and depth $\poly(\log N)$ that computes two subsets $S \subseteq V$ and $U_{good} \subseteq U$ such that 
\begin{itemize}
    \item $\sum_{u \in U_{good}} imp_u \geq 0.9\sum_{u \in U} imp_u$,
    \item for every $u \in U_{good}$, $|N_H(u) \cap S| \in [1, C]$,
    \item $\sum_{e'\in \binom{S}{2}\cap E'} w(e') \leq C \cdot \sum_{e'=\{v, v'\} \in E'} w(e') 2^{-(k_v+k_{v'})}$
\end{itemize}
\end{restatable}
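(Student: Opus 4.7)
Plan. The plan is to extend \Cref{lem:hitting_set}, which already yields conditions~1 and~2, by augmenting its two internal derandomizations---\Cref{lem:low_prob_half} and \Cref{lem:high_probability_half}---so they simultaneously control a new potential on $G'$ edges. For each round $i$, let $V^{(i)}_{\mathrm{alive}}$ denote the currently alive $V$-vertices and $T_v^{(i)} = k_v^{(i)}$ the current level of $v$ maintained by the regime. Define
\[
\Psi^{(i)} := \sum_{e'=\{v,v'\}\in E',\, v,v' \in V^{(i)}_{\mathrm{alive}}} w(e') \cdot 2^{-T_v^{(i)} - T_{v'}^{(i)}},
\]
so $\Psi^{(0)} = \sum_{e'=\{v,v'\}\in E'} w(e')\, 2^{-(k_v+k_{v'})} =: E^\ast$. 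Under idealized random $1/2$-sampling of each actively sampled $V$-vertex per round, $\Psi^{(i)}$ is a martingale---the survival factor $1/2$ exactly cancels the factor $2$ from $T_v$ decrementing by $1$, both for ``both active'' edges (each endpoint halved, each $T$ drops by $1$) and for ``one active'' edges. The plan is to approximately preserve this martingale through the derandomization so that $\Psi^{(\mathrm{final})} = O(1) \cdot E^\ast$.

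To maintain $\Psi$, I would augment the utility/cost encoding fed to \Cref{lem:local_rounding} in each round with additional bucket-based quadratic terms that bound $\Psi^{(i+1)}/\Psi^{(i)}$ within a factor of $1 + \gamma^{(i)}$, mirroring the role of $\Phi_2$ in \Cref{lem:low_prob_half}. For each $v \in V^{(i)}$, I would partition the $G'$-neighbors $\{v' \in V^{(i)} : \{v,v'\}\in E'\}$ into weight-balanced buckets based on $w_v(v') := w(\{v,v'\})\, 2^{-T_{v'}^{(i)}}$ (so that no single entry dominates its bucket) and include a normalized squared-deviation term $\left(\sum_{v' \in B \cap S^{(i)}} w_v(v') - W(B)/2\right)^2$ per bucket, where $W(B) = \sum_{v' \in B} w_v(v')$; for ``one active'' edges where only $v \in V^{(i)}$ is sampled and $v'$ is alive-but-frozen, add analogous linear utility terms. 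Combined with the existing potentials and one further potential enforcing constant-factor shrinkage of $|E' \cap \binom{V^{(i+1)}_{\mathrm{alive}}}{2}|$ per round (so that $G'$-related work forms a geometric series summing to $O(m')\poly(\log\log N)$), a single application of \Cref{lem:local_rounding} produces $S^{(i)}$ satisfying all constraints at once.

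Iterating across all $O(\log N)$ rounds of both regimes then gives $\Psi^{(\mathrm{final})} \leq \prod_i (1 + O(\gamma^{(i)})) \cdot E^\ast + \text{(small additive)} = O(1) \cdot E^\ast$, using that $\sum_i \gamma^{(i)} = O(1)$ under the choices of $\gamma^{(i)}$ already made in \Cref{lem:hitting_set_low_probability} and \Cref{lem:hitting_set_high_probability}. Since \Cref{lem:hitting_set_high_probability} leaves each surviving $v$ with $T_v^{(\mathrm{final})} \leq 49$, we deduce
\[
\sum_{e' \in \binom{S}{2} \cap E'} w(e') \leq 4^{49} \cdot \Psi^{(\mathrm{final})} = O(1) \cdot E^\ast,
\]
which is condition~3 with $C = 4^{49} \cdot O(1)$. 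The main obstacle is the weight-balanced bucketing for $\Psi$: unlike the bucketing of \Cref{lem:low_prob_half}, which groups unweighted $H$-neighbors per $u \in U$ and uses $(|B \cap S| - b/2)^2$, here the bucketing is per $v \in V^{(i)}$ on weighted $G'$-neighbors, so the concentration must be handled via a weighted Chebyshev-style potential, and the bucket construction must keep $\sum_{v' \in B} w_v(v')^2 / W(B)^2$ small so that the deviation is $O(\gamma^{(i)} W(B))$. A secondary obstacle is propagating the invariant cleanly across the transition from the low- to the high-probability regime, where vertices fixed at level $K$ in the low regime immediately reenter as actively sampled in the high regime.
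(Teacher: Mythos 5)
Your overall architecture matches the paper's: split into the low/high probability regimes of the hitting-set machinery, augment each $1/2$-sampling step with an extra potential that tracks the martingale $\Psi^{(i)}=\sum w(e')2^{-T_v-T_{v'}}$ over alive endpoints, account for ``one active endpoint'' edges via vertex weights, enforce constant-factor shrinkage of $E'\cap\binom{V^{(i)}}{2}$ to keep the work geometric, and convert the final $\Psi$ bound into condition~3 by paying a $2^{O(K-I)}$ factor. This is exactly the skeleton of \Cref{lem:mis_set_low_probability}, \Cref{lem:mis_low_prob_half}, \Cref{lem:mis_high_probability}, and \Cref{lem:mis_high_probability_half}.

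However, the one genuinely new ingredient --- how to control $\Psi$ within a single rounding step --- is where your plan has a gap. You propose a per-vertex, weight-balanced, quadratic (Chebyshev-style) bucket potential on the $G'$-neighborhoods, and you yourself flag as ``the main obstacle'' that the buckets must satisfy $\sum_{v'\in B} w_v(v')^2/W(B)^2$ small. This obstacle is not resolvable in general: the weights $w(e')$ are arbitrary nonnegative reals, so a vertex $v$ may have a single $G'$-neighbor $v'$ whose weight dominates everything else, and whether that edge lands in $\binom{S}{2}$ is a single coin flip with no concentration whatsoever. No bucketing scheme can make $\Psi^{(i+1)}$ concentrate around $\Psi^{(i)}$ at the per-vertex or per-bucket level. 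The point you are missing is that condition~3 only requires a bound on the \emph{global} sum, for which concentration is unnecessary: the paper simply adds a single \emph{linear} potential
\[
\Phi_5(S)=\frac{100}{\gamma}\cdot\frac{\sum_{e'\in\binom{S}{2}\cap E'}4w(e')+\sum_{v\in S}2w(v)}{\sum_{e'\in E'}w(e')+\sum_{v\in V}w(v)},
\]
whose expectation under $1/2$-sampling is $100/\gamma$, deliberately normalized to be a $\Theta(1/\gamma)$ factor \emph{larger} than the quadratic potentials. \Cref{lem:local_rounding} then guarantees the combined potential exceeds its expectation by only an additive $O(1)$, and because $\Phi_5$'s scale is $1/\gamma$, that additive $O(1)$ slack becomes a multiplicative $(1+\gamma^{(i)})$ loss on the $G'$-weight sum --- which telescopes to $O(1)$ over all iterations since $\sum_i\gamma^{(i)}=O(1)$. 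Replacing your weighted quadratic bucketing of $G'$-neighborhoods with this rescaled linear potential (and keeping the quadratic bucket potential only for the unweighted \emph{cardinality} shrinkage of $E'\cap\binom{V^{(i)}}{2}$, which needs the special-endpoint bucketing of \Cref{lem:edge_buckets}) closes the gap and is substantially simpler than what you sketched.
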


The proof of this lemma is presented later in \Cref{subsec:MISwrap}, after we show the key ingredient lemmas in the following subsections.

\subsection{Low Probability Regime}
\begin{lemma}[Low Probability Regime]
\label{lem:mis_set_low_probability}
Let $H$ be an $n$-vertex $m$-edge bipartite graph with bipartition $V(H) = U \sqcup V$, where each node has a unique identifier in $\{1, 2, \dots 2n\}$, let $N\geq n$ be a given upper bound, and let $K := \lceil 100\log\log(N)\rceil$. For every $u \in U$, let $imp_u \in \mathbb{R}_{\geq 0}$ and for every $v \in V$, let $k_v \in \{K+1,K+2,\ldots, \lceil \log N \rceil\}$ such that for every $u \in U$, we have $\sum_{v \in N_H(u)}2^{-k_v} \leq 10$ and $|N_H(u)| \geq \lceil 10\log^{25}(N) \rceil$. Consider also an additional graph $G'=(V, E')$ on the same set $V$ of vertices, with $m'$ edges, where each edge $e'\in E'$ has a weight $w(e')\in \mathbb{R}_{\geq 0}$ and each vertex $v\in V$ has a weight $w(v)\in \mathbb{R}_{\geq 0}$.

There exists a deterministic parallel algorithm with work $(m+m'+n)\poly(\log \log N) + \poly(\log n)$ and depth $\poly(\log N)$ that computes two subsets $S \subseteq V$ and $U_{good} \subseteq U$ such that 
\begin{itemize} 
\item $\sum_{u \in U_{good}} imp_u \geq 0.99\sum_{u \in U} imp_u$, and 
\item for every $u \in U_{good}$, $|N_H(u) \cap S|\cdot 2^{-K} \in (\sum_{v \in N_H(u)}2^{-k_v}) \pm 0.01$,
\item $\sum_{e'\in \binom{S}{2}\cap E'} w(e') 2^{-2K} + \sum_{v\in S} w(v) 2^{-K} \leq 2 \cdot \left( \sum_{e'=\{v, v'\} \in E'} w(e') 2^{-(k_v+k_{v'})} + \sum_{v\in V} w(v) 2^{-k_v} \right)$
\end{itemize}
\end{lemma}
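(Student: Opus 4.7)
The plan is to mirror the outer-loop structure of \Cref{lem:hitting_set_low_probability}, iterating a strengthened version of \Cref{lem:low_prob_half} for $I+1 = \lceil \log N\rceil + 1$ rounds, with alive and fixed sets $V^{(i)}_{\mathrm{alive}}, V^{(i)}_{\mathrm{fix}}$ updated exactly as in that proof. The first two bullets of the lemma will come out as in \Cref{lem:hitting_set_low_probability}, since the modified halving will still deliver the four conclusions of \Cref{lem:low_prob_half}. The new work is to also control the $G'$-cost in the third bullet, which I will do by designing a tracking potential that is a martingale under the idealized random process and an approximate supermartingale under the derandomization.

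Concretely, let $\kappa^{(i)}_v := \max(K, k_v - i)$ and define
\[
\mathrm{EXP}^{(i)} := \sum_{\{v,v'\}\in E' \cap \binom{V^{(i)}_{\mathrm{alive}}\cup V^{(i)}_{\mathrm{fix}}}{2}} w(\{v,v'\})\, 2^{-\kappa^{(i)}_v - \kappa^{(i)}_{v'}} + \sum_{v\in V^{(i)}_{\mathrm{alive}}\cup V^{(i)}_{\mathrm{fix}}} w(v)\, 2^{-\kappa^{(i)}_v}.
\]
A direct check gives $\mathrm{EXP}^{(0)} = \sum_{e'} w(e')\, 2^{-(k_v+k_{v'})} + \sum_v w(v)\, 2^{-k_v}$ and $\mathrm{EXP}^{(I+1)} = \sum_{e'\in\binom{S}{2}\cap E'} w(e')\, 2^{-2K} + \sum_{v\in S} w(v)\, 2^{-K}$, so the third bullet is exactly the statement $\mathrm{EXP}^{(I+1)} \le 2\,\mathrm{EXP}^{(0)}$. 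Under independent $1/2$-halving of $V^{(i)}_{\mathrm{alive}}$, one verifies $\mathbb{E}[\mathrm{EXP}^{(i+1)}\mid \text{state}_i] = \mathrm{EXP}^{(i)}$, because $\kappa$ decreases by $1$ on every alive vertex while each such vertex survives with probability $1/2$.

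For the single-step derandomization, I would expand $\mathrm{EXP}^{(i+1)}$ as a function of the sampling indicators $\{x_v : v\in V^{(i)}_{\mathrm{alive}}\}$, obtaining $\mathrm{EXP}^{(i+1)} = C + \sum_v A_v x_v + \sum_{\{v,v'\}} B_{vv'} x_v x_{v'}$, where $C$ collects the already-fixed contributions, $A_v$ aggregates $w(v)\,2^{-\kappa^{(i+1)}_v}$ with the $E'$-edges from $v$ to $V^{(i)}_{\mathrm{fix}}$ each scaled by $2^{-K-\kappa^{(i+1)}_v}$, and $B_{\{v,v'\}} = w(\{v,v'\})\, 2^{-\kappa^{(i+1)}_v - \kappa^{(i+1)}_{v'}}$ ranges over $E'\cap\binom{V^{(i)}_{\mathrm{alive}}}{2}$. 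Setting vertex utility $u_4(v):= -A_v$ and edge cost $c_4(\{v,v'\}):= B_{vv'}$ on the auxiliary graph $(V^{(i)}_{\mathrm{alive}},\, E'\cap\binom{V^{(i)}_{\mathrm{alive}}}{2})$ and folding these into the same call to \Cref{lem:local_rounding} that realizes the three bucket-based potentials $\Phi_1,\Phi_2,\Phi_3$ of \Cref{lem:low_prob_half}, with a small rounding tolerance $\eps = \poly(\gamma^{(i)})$, the lemma's inequality gives $\sum A_v x_v + \sum B_{vv'} x_v x_{v'} \le (1/2)\sum A_v + (1/4+\eps')\sum B_{vv'}$ for some $\eps'\le \gamma^{(i)}/4$; combined with $\sum B_{vv'}\le 4\,\mathrm{EXP}^{(i)}$ this yields $\mathrm{EXP}^{(i+1)} \le (1+\gamma^{(i)})\,\mathrm{EXP}^{(i)}$. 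The bucket conclusions of \Cref{lem:low_prob_half} survive the combined call because the per-potential bounds only loosen to $O(1/\gamma^{(i)})$, and with $b=\Theta((1/\gamma^{(i)})^6)$ the bad-bucket fraction remains $O((\gamma^{(i)})^{8.6})$. I would also add a further uniform linear cost of $1$ on each $e'\in E'$ to force $|E'^{(i+1)}|\le (1/4+o(1))|E'^{(i)}|$ deterministically, which is needed for the work accounting.

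Iterating this strengthened half-sampling with $\gamma^{(i)} = c\cdot \min(0.99^i,\, 1/\log N)$ and a small enough constant $c$ so that $\sum_i \gamma^{(i)} \le \ln 2$, we get $\mathrm{EXP}^{(I+1)} \le \prod_i(1+\gamma^{(i)})\,\mathrm{EXP}^{(0)} \le 2\,\mathrm{EXP}^{(0)}$, the third bullet. The first two bullets and the work/depth bounds follow the same telescoping argument as in the proof of \Cref{lem:hitting_set_low_probability}, now counting $|E'^{(i)}|$ alongside $|E(H^{(i)})|+|V^{(i)}|$ in the geometric series (which converges thanks to the $|E'|$-shrinkage potential). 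The main obstacle I foresee is the careful bookkeeping of the linear + quadratic expansion of $\mathrm{EXP}^{(i+1)}$---in particular the $A_v$ terms, which require scanning $E'$-edges from each alive vertex to the fixed set and must be recomputed each round within the work budget---and confirming that the combined invocation of \Cref{lem:local_rounding} (four potentials in one call, with the new potential weighted by $1/\gamma^{(i)}$ to tighten its guarantee) still preserves the bucket-level conclusions of the original half-sampling lemma.
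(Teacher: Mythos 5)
Your outer iteration, the telescoping quantity $\mathrm{EXP}^{(i)}$, and its per-round linear-plus-quadratic decomposition are exactly what the paper does: the coefficients you call $A_v$ and $B_{vv'}$ are precisely the vertex weights $w^{(i)}(v)$ and edge weights $w^{(i)}(e')$ that the paper feeds into its strengthened half-sampling lemma (\Cref{lem:mis_low_prob_half}), and weighting this cost potential by $\Theta(1/\gamma^{(i)})$ so that its relative slack becomes $O(\gamma^{(i)})$ is exactly the paper's $\Phi_5$ normalization. So the central new idea is right and matches the paper.

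One step would fail as written: the mechanism for shrinking $|E'\cap\binom{S}{2}|$. A \emph{uniform, unnormalized} linear cost of $1$ per edge of $E'$ cannot be folded into the same invocation of \Cref{lem:local_rounding} as the other potentials. That lemma only controls the \emph{sum} of the potentials; an individual potential is bounded only by the total expectation plus total slack minus the others' realized values. Your edge-count potential has expectation $|E'|/4$ and adds $\eps|E'|$ to the slack, and since the realized count can undershoot $|E'|/4$ by up to $|E'|/4$, every other potential (all normalized to expectation $O(1/\gamma)$) loses its guarantee by an additive $\Theta(|E'|)$, destroying the bucket analysis. If instead you normalize the edge-count potential to expectation $O(1/\gamma)$ like the others, a purely linear potential only yields $|E'\cap\binom{S}{2}|\le(1/2+O(\gamma))|E'|$ --- still enough for geometric decay of the work, but not your claimed $1/4+o(1)$. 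The paper resolves this with a concentrated quadratic potential $\Phi_4$: it buckets $E'$ into groups of $b$ edges with distinct special endpoints (\Cref{lem:edge_buckets}) and penalizes $\left(|S\cap\{v^*(e)\colon e\in B''_i\}|-b/2\right)^2$, obtaining $|E'\cap\binom{S}{2}|\le 0.56|E'|+\poly(\log N)$ while keeping all potentials on the same normalization. Separately, your schedule $\gamma^{(i)}=c\cdot\min(0.99^i,1/\log N)$ should be a $\max$: with $\min$, every $\gamma^{(i)}\le c/\log N$, so the $1/(\gamma^{(i)})^{O(1)}$ factor in the per-round work is already $\poly(\log N)$ at $i=0$ and the total work is not $(m+m'+n)\poly(\log\log N)$; moreover, late rounds would violate the half-sampling lemma's requirement $\gamma\ge\Theta(1/\log N)$.
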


\begin{proof}


We define $H^{(0)} = H$; $U^{(0)} = U$; $V^{(0)} = V, V^{(0)}_{fixed} = \emptyset$ and $k_v^{(i)} = \max(K,k_v - i)$ for every $v \in V$

For $i = 0,1,\ldots, I$ for $I=\lceil\log(N)\rceil$.

In round $i$, we invoke \cref{lem:mis_low_prob_half} with input:
\begin{itemize}
    \item $H^{(i)}$ with bipartition $U^{(i)} \sqcup V^{(i)}$
    \item $\gamma^{(i)} = \max(0.0000001 \cdot 0.99^i, 0.0000001/\log(N))$ 
    \item $imp_u$ for every $u \in U^{(i)}$
    \item $k_v^{(i)}$ for every $v \in V^{(i)}$
    \item $G'^{(i)} = G'[V^{(i)}]$
    \item For every $\{v,v'\} \in E(G'^{(i)})$, $w^{(i)}(\{v,v'\}) = w(\{v,v'\}) \cdot  2^{-k_v^{(i)}}\cdot 2^{-k_{v'}^{(i)}}$
    \item For every $v \in V^{(i)}$, $w^{(i)}(v) = w(v) \cdot 2^{-k_v^{(i)}}+ \sum_{v' \in N_{G'}(v) \cap \left(V^{(i)}_{fixed} \setminus V^{(i)} \right)} w(\{v,v'\}) \cdot 2^{-(k^{(i)}_v + k^{(i)}_{v'})}$
\end{itemize}

As a result, we obtain a bipartite graph $H'^{(i)} \subseteq H^{(i)}$, a subset $U^{(i)}_{good} \subseteq U^{(i)}$,  and a subset $S^{(i)} \subseteq V^{(i)}$, such that for $H''^{(i)} := H'^{(i)}[U^{(i)}_{good} \sqcup S^{(i)}]$, it holds that

\begin{itemize}
\item[(I)] $|E(H''^{(i)})|+|S^{(i)}|\leq (2/3) \left(|E(H^{(i)})|+|V^{(i)}| \right) + \poly(\log N)$, 
\item[II] $|E' \cap \binom{S^{(i)}}{2}| \leq (2/3)|E' \cap \binom{V^{(i)}}{2}| + \poly(\log N)$
\item[(III)] $\sum_{u \in U^{(i)}_{good}} imp_u \geq (1-\gamma^{(i)})\sum_{u \in U^{(i)}} imp_u$,
\item[(IV)] for every node $u\in U^{(i)}$, we have $|N_{H'^{(i)}}(u)| \geq |N_{H^{(i)}}(u)| -\gamma^{(i)} 2^{K}$, and
\item[(V)] for every node $u \in U^{(i)}_{good}$, we have $\sum_{v \in N_{H''^{(i)}}(u)}2^{-k^{(i)}_v} \in ((1/2)(1 \pm \gamma^{(i)})\sum_{v \in N_{H^{(i)}}(u)}2^{-k^{(i)}_v} \pm \gamma^{(i)})$. 
\item[(VI)] $\sum_{e'\in \binom{S^{(i)}}{2}\cap E'} w^{(i)}(e') \cdot 4  + \sum_{v\in S^{(i)}} w^{(i)}(v) \cdot 2  \leq (1+\gamma^{(i)}) \cdot \left( \sum_{e'=\{v, v'\} \in \binom{V^{(i)}}{2} \cap E'} w^{(i)}(e') + \sum_{v\in V^{(i)}} w^{(i)}(v)   \right).$
\end{itemize}
Then, we set
$V^{(i+1)}_{fixed} = V^{(i)}_{fixed} \cup \{v \in S^{(i)} \colon k_v - (i+1) = K \}$

We define $H^{(i+1)} = H''^{(i)}[U^{(i)}_{good} \sqcup (S^{(i)} \setminus V^{(i+1)}_{fixed})]$, $U^{(i+1)} = U^{(i)}_{good}$ and $V^{(i+1)} = S^{(i)} \setminus V^{(i+1)}_{fixed}$. 

At the end of the for loop, we set the final output as $S = V^{(I+1)}_{fixed}$, and $U_{good} = U_{good}^{(I)}$.

We start with proving the third bullet point, namely that 

\[\sum_{e'\in \binom{S}{2}\cap E'} w(e') 2^{-2K} + \sum_{v\in S} w(v) 2^{-K} \leq 2 \cdot \left( \sum_{e'=\{v, v'\} \in E'} w(e') 2^{-(k_v+k_{v'})} + \sum_{v\in V} w(v) 2^{-k_v} \right).\]

To that end, we first show that

\begin{align*}&\sum_{\{v,v'\} \in \binom{V^{(i+1)} \cup V^{(i+1)}_{fixed}}{2}\cap E'}w(\{v,v'\}) 2^{-k^{(i+1)}_v} \cdot 2^{-k^{(i+1)}_{v'}} + \sum_{v \in V^{(i+1)} \cup V^{(i+1)}_{fixed}} w(v) \cdot 2^{-k_v^{(i+1)}} \\
&\leq  (1+\gamma^{(i)})\sum_{\{v,v'\} \in \binom{V^{(i)} \cup V^{(i)}_{fixed}}{2}\cap E'} w(\{v,v'\}) 2^{-k^{(i)}_v} \cdot 2^{-k^{(i)}_{v'}} + \sum_{v \in V^{(i)} \cup V^{(i)}_{fixed}} w(v) \cdot 2^{-k_v^{(i)}}
\end{align*}

Indeed, we have

\begin{align*}
&\sum_{\{v,v'\} \in \binom{V^{(i+1)} \cup V^{(i+1)}_{fixed}}{2}\cap E'}w(\{v,v'\})\cdot 2^{-k^{(i+1)}_v} \cdot 2^{-k^{(i+1)}_{v'}} + \sum_{v \in V^{(i+1)} \cup V^{(i+1)}_{fixed}} w(v) \cdot 2^{-k_v^{(i+1)}} \\ 
&= \sum_{\{v,v'\} \in \binom{V^{(i)}_{fixed}}{2}\cap E'}w(\{v,v'\}) \cdot 2^{-k^{(i)}_v} \cdot 2^{-k^{(i)}_{v'}} \\
&+ 2 \cdot \sum_{\{v,v'\} \in E' \colon |\{v,v'\} \cap S^{(i)}|= 1 \text{ and } |\{v,v'\} \cap V^{(i)}_{fixed}|= 1}w(\{v,v'\}) 2^{-k^{(i)}_v} \cdot 2^{-k^{(i)}_{v'}} \\
&+ 4 \cdot \sum_{\{v,v'\} \in \binom{S^{(i)}}{2}\cap E'}w(\{v,v'\}) \cdot 2^{-k^{(i)}_v} \cdot 2^{-k^{(i)}_{v'}} \\
&+ \sum_{v \in V^{(i)}_{fixed}} w(v) \cdot 2^{-k_v^{(i)}} 
+ 2 \cdot \sum_{v \in S^{(i)}} w(v) \cdot 2^{-k_v^{(i)}}
\\
&= \sum_{\{v,v'\} \in \binom{V^{(i)}_{fixed}}{2}\cap E'}w(\{v,v'\}) \cdot 2^{-k^{(i)}_v} \cdot 2^{-k^{(i)}_{v'}} 
+ \sum_{v \in V^{(i)}_{fixed}} w(v) \cdot 2^{-k_v^{(i)}} \\
&+ 4 \cdot \sum_{\{v,v'\} \in \binom{S^{(i)}}{2}\cap E'}w^{(i)}(\{v,v'\}) 
+ 2 \cdot \sum_{v \in  S^{(i)}} w^{(i)}(v) \\
&\leq \sum_{\{v,v'\} \in \binom{V^{(i)}_{fixed}}{2}\cap E'}w(\{v,v'\}) \cdot 2^{-k^{(i)}_v} \cdot 2^{-k^{(i)}_{v'}} 
+ \sum_{v \in V^{(i)}_{fixed}} w(v) \cdot 2^{-k_v^{(i)}} \\
&+ (1+\gamma^{(i)}) \left( \cdot \sum_{\{v,v'\} \in \binom{V^{(i)}}{2}\cap E'}w^{(i)}(\{v,v'\}) 
+ \sum_{v \in  V^{(i)}} w^{(i)}(v) \right)\\
&\leq (1+\gamma^{(i)})\sum_{\{v,v'\} \in \binom{V^{(i)} \cup V^{(i)}_{fixed}}{2}\cap E'} w(\{v,v'\}) 2^{-k^{(i)}_v} \cdot 2^{-k^{(i)}_{v'}} + \sum_{v \in V^{(i)} \cup V^{(i)}_{fixed}} w(v) \cdot 2^{-k_v^{(i)}}.
\end{align*}

Therefore,

\begin{align*}
 & &&\sum_{e'\in \binom{S}{2}\cap E'} w(e') 2^{-2K} + \sum_{v\in S} w(v) 2^{-K} \\
  &= &&\sum_{\{v,v'\}\in \binom{V^{(I+1)} \cup V^{(I+1)}_{fixed}}{2}\cap E'} w(\{v,v'\}) 2^{-k^{(I+1)}_v}2^{-k^{(I+1)}_{v'}} + \sum_{v\in V^{(I+1)} \cup V^{(I+1)}_{fixed}} w(v) 2^{-k^{(I+1)}_v} \\
 &\leq &&\prod_{i=0}^{I} \left(1 +\gamma^{(i)}\right) \sum_{\{v,v'\}\in \binom{V^{(0)} \cup V^{(0)}_{fixed}}{2}\cap E'} w(\{v,v'\}) 2^{-k^{(0)}_v}2^{-k^{(0)}_{v'}} + \sum_{v\in V^{(0)} \cup V^{(0)}_{fixed}} w(v) 2^{-k^{(0)}_v} \\
 &\leq &&2 \cdot \left( \sum_{e'=\{v, v'\} \in E'} w(e') 2^{-(k_v+k_{v'})} + \sum_{v\in V} w(v) 2^{-k_v} \right).
\end{align*}

\begin{claim}
 For every $u \in U_{good}$, $|N_H(u) \cap S| \cdot 2^{-K} \in (\sum_{v \in N_H(u)} 2^{-k_v} \pm 0.01)$.
\end{claim}
\begin{proof}

For each iteration $i$, let us define the set of \textit{discarded neighbors} of node $u$ in this iteration as $\Gamma^{(i)}(u)= (N_{H^{(i)}}(u) \cap S^{(i)}) \setminus (N_{H^{''(i)}}(u) \cap S^{(i)})$. 

Let us also define $\bar{V}^{(i)} = V^{(i)}\cup V^{(i)}_{fixed}$ for each $i$. From item (IV) above, for every node $u\in U^{(i)}_{good}$,  we have that
$$\sum_{v \in (N_{H}(u)\cap \bar{V}^{(i+1)}) \setminus (\bigcup_{j=0}^{i}\Gamma^{(j)}(u))} 2^{-\max\{K, k(v)-i-1\}} \in (1\pm \gamma^{(i)})\sum_{v \in (N_{H}(u)\cap \bar{V}^{(i)}) \setminus (\bigcup_{j=0}^{i-1}\Gamma^{j}(u))} 2^{-\max\{K, k(v)-i\}} \pm \gamma^{(i)}.$$
From this, applying it iteratively, we can conclude that 
\begin{align*} 
\sum_{v \in (N_{H}(u)\cap S) \setminus (\bigcup_{j=0}^{I}\Gamma^{(j)}(u))} 2^{-K} = 
\sum_{v \in (N_{H}(u)\cap \bar{V}^{(I+1)}) \setminus (\bigcup_{j=0}^{I}\Gamma^{(j)}(u))} 2^{-K}
\in 
\prod_{i=0}^{I}(1\pm \gamma^{(i)}) \cdot \sum_{v \in N_{H}(u)} 2^{-k(v)} \pm \sum_{i=0}^{I} \gamma^{(i)}
\end{align*}

In addition, from item (III) above, for every node $u\in U^{(i)}_{good}$, we have that $|\Gamma^{(i)}(u)| \leq \gamma^{(i) } 2^{K}.$ Hence, for every node $u\in U_{good}$, we can conclude that $|\bigcup_{i} \Gamma^{(i)}(u)| \leq \sum_{i=0}^{I} \gamma^{(i) } 2^{K} $.

From the above two, we see that 

\begin{align*} \sum_{v \in (N_{H}(u)\cap S)}  2^{-K} &\geq && \sum_{v \in (N_{H}(u)\cap S) \setminus (\bigcup_{j=0}^{I}\Gamma^{(j)}(u))} 2^{-K} \\
&\geq &&
\prod_{i=0}^{I}(1 -  2\gamma^{(i)}) \sum_{v \in N_{H}(u)} 2^{-k(v)} - (\sum_{i=0}^{I} 2\gamma^{(i)}) \end{align*}

Given the initial upper bound $\sum_{v\in N_{H}(u)} 2^{-k(v)}\leq 10$ and since $\prod_{i=0}^{I}(1 -  2\gamma^{(i)})\geq 1-10^{-4}$ and $\sum_{i=0}^{I} 2\gamma^{(i)}) \leq 10^{-4}$, we can conclude that $\sum_{v \in (N_{H}(u)\cap S)}  2^{-K} \geq \sum_{v \in N_{H}(u)} 2^{-k(v)}-0.01$.

Similarly, we also have that

\begin{align*} \sum_{v \in (N_{H}(u)\cap S)}  2^{-K} &\leq && \sum_{v \in (N_{H}(u)\cap S) \setminus (\bigcup_{j=0}^{I}\Gamma^{(j)}(u))} 2^{-K} + |\bigcup_{i} \Gamma^{(i)}(u)| \cdot 2^{-K} \\ 
& \leq &&
\prod_{i=0}^{I}(1 +  2\gamma^{(i)}) \sum_{v \in N_{H}(u)} 2^{-k(v)} + (\sum_{i=0}^{I} \gamma^{(i)}) + \sum_{i=0}^I 2\gamma^{(i) } \end{align*}

Again, given the initial upper bound $\sum_{v\in N_{H}(u)} 2^{-k(v)}\leq 10$ and since $\prod_{i=0}^{I}(1 +  2\gamma^{(i)})\leq 1+10^{-4}$ and $\sum_{i=0}^{I} 3\gamma^{(i)} \leq 10^{-4}$, we can conclude that $\sum_{v \in (N_{H}(u)\cap S)}  2^{-K} \leq \sum_{v \in N_{H}(u)} 2^{-k(v)}+0.01$.

Hence, having both sides of the bound, we conclude that $|N_{H}(u)\cap S|\cdot 2^{-K} \in \sum_{v \in N_{H}(u)} 2^{-k(v)}\pm 0.01$

\end{proof}
It remains to discuss the work and depth of the algorithm.
The overall work is  

\begin{align*}
& &&\sum_{i=0}^I(|E(H^{(i)})| + |U^{(i)}| + |V^{(i)}| + |E(G'^{(i)})|)\poly(\log \log N)/(\gamma^{(i)})^{20} \\
&\leq &&\sum_{i=0}^I|U|\poly(\log \log N)/(\gamma^{(i)})^{20} + 
\sum_{i=0}^I(|E(H^{(i)})| + |V^{(i)}| +  |E(G'^{(i)})|)\poly(\log \log N)/(\gamma^{(i)})^{20} \\
&\leq &&\sum_{i=0}^I\frac{|E(H)|}{\lceil 10\log^{25}(N) \rceil}\poly(\log \log N)/(\gamma^{(i)})^{20} \\
& &&+  \sum_{i=0}^I \left(\max(0.9^i(|E(H)| + |V(H)| +  |E(G'^{(i)})|), \poly(\log N) \right) \poly(\log \log N)/(\gamma^{(i)})^{20} \\
&= &&(n+m + m')\poly(\log \log N) + \poly(\log N)
\end{align*}

and the overall depth is $I \cdot \poly(\log N) = \poly(\log N)$.
\end{proof}

\begin{lemma}[Low Probability 1/2 Sampling]
\label{lem:mis_low_prob_half}
Let $H$ be an $n$-vertex $m$-edge bipartite graph with bipartition $V(H) = U \sqcup V$, where each node has a unique identifier in $\{1, 2, \dots 2n\}$. Let $N\geq n$ be a given upper bound. For every $u \in U$, let $imp_u \in \mathbb{R}_{\geq 0}$ and for every $v \in V$, let $k_v \in \{K+1,K+2,\ldots, \lceil \log N \rceil\}$, where $K=\lceil 100\log\log N\rceil$, such that for every $u \in U$, we have $\sum_{v \in N_H(u)}2^{-k_v} \leq 10$. Consider also an additional graph $G'=(V, E')$ on the same set $V$ of vertices, with $m'$ edges, where each edge $e'\in E'$ has a weight $w(e')\in \mathbb{R}_{\geq 0}$ and each vertex $v\in V$ has a weight $w(v)\in \mathbb{R}_{\geq 0}$. Also, consider a given $\gamma \in [\Theta(1/\log N), 0.01)$.

There exists a deterministic parallel algorithm with work $(m+m'+n)\poly(\log \log N)/\gamma^{20}$ and depth $\poly(\log N)$ that computes a bipartite graph $H'\subseteq H$ with $V(H')=U_{good}\sqcup V'$, with $U_{good} \subseteq U$ and $S\subseteq V$ such that

\begin{itemize}
\item $|E(H')|+|S|\leq (2/3) \left(|E(H)|+|V| \right) + \poly(\log N)$, 
\item $|E'\cap \binom{S}{2}| \leq (2/3) |E'| + \poly(\log N)$,
\item $\sum_{u \in U_{good}} imp_u \geq (1-\gamma)\sum_{u \in U} imp_u$,
\item for every node $u\in U_{good}$, we have $|N_{H'}(u)| \geq |N_{H}(u)| -\gamma2^{K}$, 
\item for every node $u \in U_{good}$, we have $\sum_{v \in N_{H''}(u)} 2^{-k_v} \in ((1/2)(1\pm \gamma) \sum_{v \in N_H(u)}2^{-k_v} \pm \gamma/2)$, and
\item $\sum_{e'\in \binom{S}{2}\cap E'} w(e') \cdot 4  + \sum_{v\in S} w(v) \cdot 2 \leq (1+\gamma) \cdot \left( \sum_{e'=\{v, v'\} \in E'} w(e')  + \sum_{v\in V} w(v)  \right)$.
\end{itemize}

\end{lemma}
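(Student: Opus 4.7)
The plan is to port the framework of \cref{lem:low_prob_half} wholesale for the four bullets that appear in both lemmas (bullets 1, 3, 4, 5): partition $V=\bigsqcup_j V^{(j)}$ by the value of $k_v$, drop up to $b-1$ of each $u$'s neighbors so that $|N_{H'^{(j)}}(u)|$ is divisible by $b:=\lfloor\min((1/\gamma)^{6},\gamma 2^{K-1}/\lceil\log N\rceil)\rfloor$, bucket each $N_{H'^{(j)}}(u)$ into size-$b$ pieces, and keep the same three squared-deviation potentials $\Phi_1,\Phi_2,\Phi_3$. Encoded as utilities/costs on the auxiliary multigraph $\bar{H}$ of $O(mb)$ edges and handed to \cref{lem:local_rounding} with $\eps=\Theta(1/b)$, these deliver bullets 1, 3, 4, 5 exactly as in \cref{lem:low_prob_half}.

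The two new bullets (2 and 6) involve $G'$, and I would handle them by augmenting the same call to \cref{lem:local_rounding} with additional utility/cost terms. For bullet 2 the target is only a $(2/3)$-shrinkage of $|E'\cap\binom{S}{2}|$; adding a uniform normalized edge-cost $1/|E'|$ per $e'\in E'$ suffices, since $1/2$-sampling already achieves expectation $|E'|/4$ and the rounding slack contributes just $O(\eps)$, comfortably below $2/3$.

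Bullet 6 is the delicate case, because the inequality $4\sum_{\binom{S}{2}\cap E'}w(e')+2\sum_{v\in S}w(v)\leq(1+\gamma)\Sigma$ (with $\Sigma=\sum_v w(v)+\sum_{e'}w(e')$) is \emph{multiplicatively} tight. Simply setting $util_4(v)=-2w(v)/\Sigma$ and $cost_4(e')=4w(e')/\Sigma$ and summing into the combined rounding would only give a constant-factor bound, because the rounding guarantee of \cref{lem:local_rounding} controls only a single scalar and the bucket potentials can absorb most of the slack. To recover the multiplicative tightness I would bucket the $G'$-side as well: partition $V$ into weight classes $V^{(w,\ell)}=\{v:w(v)\in[2^\ell,2^{\ell+1})\}$ with size-$b$ buckets inside each class, and for each $v$ bucket $N_{G'}(v)\cap V^{(w,\ell')}$ into size-$b$ pieces. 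The resulting squared-deviation potential $\Phi_4$, sitting on an auxiliary multigraph of $O(m'b)$ edges, has expectation $\Theta(1)$; adding it to the combined rounding and applying the same bad-bucket analysis as in \cref{lem:low_prob_half} yields per-weight-class $(1\pm\gamma)$-approximations of the relevant sums, which together prove bullet 6.

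The main obstacle is precisely this edge-side bucketing: $\sum_{\{v,v'\}\in E'\cap\binom{S}{2}}w(\{v,v'\})$ is genuinely quadratic in the indicator of $S$, so it does not decompose as cleanly as the vertex sums of \cref{lem:low_prob_half}. My approach is to rely on the identity $\sum_{\{v,v'\}\in E'\cap\binom{S}{2}}w(\{v,v'\})=\tfrac12\sum_{v\in S}\sum_{v'\in N_{G'}(v)\cap S}w(\{v,v'\})$: the per-$v$ neighborhood bucketing controls the inner sum within each weight class of $v'$, and the vertex-side bucketing controls the outer indicator $v\in S$, with a union bound over the $O(\log(\Sigma/\min w))$ nontrivial weight classes. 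Verifying that the bad-bucket fractions combine to an overall $O(\gamma)$ additive loss in bullet 6, and that the normalization between $\Phi_4$ and $\Phi_1,\Phi_2,\Phi_3$ in the single invocation of \cref{lem:local_rounding} does not disturb any of the bucket conclusions, is the tedious but routine part of the proof. With $b=(1/\gamma)^6$ and combined multigraph size $O((m+m')b)$, the overall work from the single call to \cref{lem:local_rounding} is $(m+m'+n)\poly(\log\log N)/\gamma^{20}$ and the depth is $\poly(\log N)$, as claimed.
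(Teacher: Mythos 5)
Your treatment of the four bullets shared with \cref{lem:low_prob_half} matches the paper, but your handling of the two new bullets---which are the entire point of this lemma---has a genuine gap. For bullet 2, you add a cost $1/|E'|$ per edge of $E'$ and claim the slack is "just $O(\eps)$." It is not: \cref{lem:local_rounding} only controls the \emph{aggregate} objective, so the conclusion you can draw about the $E'$-term is $\frac{1}{|E'|}|E'\cap\binom{S}{2}| \leq \frac14 + \sum_{i}\E[\Phi_i(S)] + O(\eps\sum cost)$, where the sum ranges over the other (nonnegative) potentials; since each $\E[\Phi_i]=1$, this gives only $|E'\cap\binom{S}{2}|\lesssim 3|E'|$, which is useless. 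This is exactly the "absorption" phenomenon you correctly diagnose for bullet 6 but then fail to apply to bullet 2. The paper instead invokes a separate edge-bucketing subroutine (\cref{lem:edge_buckets}) that groups $E'$ into size-$b$ buckets with distinct special endpoints and adds a fourth squared-deviation potential $\Phi_4$ on those special endpoints, from which the bad-bucket counting yields $|E'\cap\binom{S}{2}|\leq |\{e: v^*(e)\in S\}|\leq (1+0.11)|E'|/2+\poly(\log N)$.

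For bullet 6 you reject the linear utility/cost encoding because of absorption and replace it with a weight-class-plus-neighborhood bucketing of $G'$. This both misses the simple fix and introduces new problems. The paper's fix is to keep the linear encoding but \emph{rescale} it by $100/\gamma$: with $\Phi_5(S)=\frac{100}{\gamma}\cdot\frac{4\sum_{e'\in\binom{S}{2}\cap E'}w(e')+2\sum_{v\in S}w(v)}{\Sigma}$ one has $\E[\Phi_5]=100/\gamma$, the aggregate bound $\Phi_5(S)\leq 5+100/\gamma$ then translates into the desired $(1+\gamma/20)\Sigma$ bound, and the extra $\eps\cdot\frac{100}{\gamma}$ in the slack term is negligible since $\eps=\Theta(1/b)=\Theta(\gamma^6)$. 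Your bucketing alternative, by contrast, faces obstacles you do not resolve: the number of weight classes $O(\log(\Sigma/\min w))$ is unbounded in terms of $n$ and $N$ for arbitrary nonnegative reals; the up-to-$b-1$ leftovers per class carry weight that cannot be charged to any additive slack (unlike the $H$-side, where $2^{-k_v}\leq 2^{-K}$ makes dropped neighbors cost only $\gamma/2$); and controlling the quadratic term $\sum_{e'\in\binom{S}{2}}w(e')$ via per-vertex neighborhood buckets requires restricting the outer sum to $v\in S$, which the bucket potentials do not give you pointwise. One further point you wave at but should not: once $\Phi_5$ (or any potential of expectation $\Theta(1/\gamma)$) is in the mix, the aggregate bound for $\Phi_1,\Phi_2,\Phi_3$ degrades from $O(1)$ to $O(1/\gamma)$, and the paper compensates by loosening the bad-node thresholds (e.g., from $b^{0.3}$ to $b^{0.2}$) using $\gamma=\Theta(b^{-1/6})$; this recalibration is necessary, not optional bookkeeping.
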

\begin{proof}
We partition $V$ into parts $V^{(j)}$ for $j\in \{K+1, \ldots, \lceil \log N \rceil\}$ where we put each node $v\in V$ in $V^{(k_v)}$. Also, let $H^{(j)}$ be the induced subgraph by $U\sqcup V^{(j)}$.
We can compute this partition using the sorting lemma (\cref{lem:sorting}) with $O(n \log \log N)$ work and $\poly(\log N)$ depth.

Let $b := \lfloor \min\{(1/\gamma)^{6}, \gamma 2^{K-1}/\lceil\log N \rceil\}\rfloor$. We define a graph $H'^{(j)}$ by dropping a few edges from $H^{(j)}$: For each node $u \in U$ and each $j\in \{K+1, \ldots, \lceil \log N \rceil\}$, we set $N_{H'^{(j)}}(u)$ by dropping up to $b-1$ neighbors of $u$ in $H^{(j)}$, such that $|N_{H'^{(j)}}(u)|$ is divisible by $b$. We let $H'$ be the graph by the union of the edges of $H'^{(j)}$ for all $j$ and $V(H') = U \sqcup V$. 

Then, for each $j$ and $u \in U$, we partition $N_{H'^{(j)}}(u)$ arbitrarily into buckets $N_{H'^{(j)}}(u) = B^{(j)}_1(u) \sqcup B^{(j)}_2(u) \sqcup \ldots \sqcup B^{(j)}_{|N_{H'^{(j)}}(u)|/b}(u)$ of size $b$. Notice that $|N_{H^{(j)}}(u) \setminus N_{H'^{(j)}}(u)| < b$ and therefore $|N_{H'}(u)| - |N_H(u)| \geq - \lceil \log(N)\rceil \cdot b \geq - \gamma  \cdot 2^K/2 $. This is permitted in the third item of the lemma. Furthermore, the total probability contribution of these dropped neighbors in $\sum_{v \in N_{H}(u)} 2^{-k_v}$ is at most $\gamma/4$, and thus this makes $\pm \gamma/2$ additive error in the fourth item of the lemma, which is permitted. 

Our derandomization will use five potential functions for choosing the set $S\subseteq V$: 

\begin{align*}
 \Phi_1(S) &:= &&\frac{4}{\sum_{u\in U} \sum_{j=K+1}^{\lceil\log N\rceil} |N_{H'^{(j)}}(u)|} \cdot \sum_{u \in U} \sum_{j=K+1}^{\lceil\log N\rceil} \left( \sum_{i=1}^{|N_{H'^{(j)}}(u)|/b}\left( |S \cap B^{(j)}_i(u)| -  b/2\right)^2 \right) 
\end{align*}

\begin{align*}
 \Phi_2(S) &:= &&\big(\frac{1}{\sum_{u\in U} imp_u}\big) \cdot \\
 & &&\sum_{u \in U} \left(4 \cdot \frac{imp_{u}}{\sum_{j=K+1}^{\lceil\log N\rceil} |N_{H'^{(j)}}(u)| 2^{-j}} \cdot \sum_{j=K+1}^{\lceil\log N\rceil} 2^{-j} \left( \sum_{i=1}^{|N_{H'^{(j)}}(u)|/b}\left( |S \cap B^{(j)}_i(u)| -  b/2\right)^2 \right) \right) 
\end{align*}

Additionally, let $ B'_1\sqcup B'_2\ldots \sqcup  B'_{\lfloor|V|/{b}\rfloor} \subseteq V$ be a partition of (all except up to $b-1$) $V$ into buckets of size $b$ and

\begin{align*}
 \Phi_3(S) := \frac{4}{b\lfloor|V|/{b}\rfloor} \sum_{i=1}^{\lfloor|V|/{b}\rfloor} \left( |S \cap B^{'}_i| -  b/2\right)^2.
\end{align*}

Also, we invoke \cref{lem:edge_buckets} with input $G'$ to compute in $O((n+m')\log \log N)$ work and $\poly(\log N)$ depth
buckets $B''_1 \sqcup B''_2 \sqcup \ldots \sqcup B''_k \subseteq E'$ where each bucket contains $b$ edges and at most $O(b^5)$ edges in $E'$ are not contained in any bucket. Moreover, for each bucket $B \subseteq E'$, and each edge $e = \{v,v'\} \in B$, one endpoint $v^*(e) \in \{v,v'\}$ of this edge is marked as special such that the special endpoints $v^*(e)$ of different edges $e \in B$ in this bucket are distinct. We define

\begin{align*}
    \Phi_4(S) =   \frac{4}{b \cdot k}\sum_{i=1}^{k}\left(||S \cap \{v^*(e) \colon e \in B''_i\}| - b/2 \right)^2.
\end{align*}

Finally, we define

\begin{align*}
    \Phi_5(S) =  \frac{100}{\gamma} \cdot \frac{\sum_{e'\in \binom{S}{2}\cap E'} w(e') \cdot 4  + \sum_{v\in S} w(v) \cdot 2 }{\left( \sum_{e'=\{v, v'\} \in E'} w(e')  + \sum_{v\in V} w(v)   \right)}.
\end{align*}

Notice that for $S$ chosen randomly by including each $V$ node in it with probability $1/2$, we have $\E[\Phi_1(S)]=\E[\Phi_2(S)]=\E[\Phi_3(S)]=\E[\Phi_4(S)]=1$ and $\E[\Phi_5(S)] = \frac{100}{\gamma}$. Our derandomization is much more sensitive to deviations in $\Phi_{5}(S)$ from its expectation, and this is the reason that we have normalized the potential functions such that $\Phi_{5}(S)$ is a $\Theta(1/\gamma) = \Theta(b^{1/6})$ factor larger than the others. 

Using \cref{lem:local_rounding}, by decomposing the potential into vertex utilities and edge costs, and by setting $\eps = \Theta(1/(b-1))$ with a small leading constant, we can compute with work $b \cdot (m + m' + n)\poly(\log \log N)/\eps = (m+m'+n)\poly(\log \log N)/\gamma^{30}$ and depth $\poly(\log N)$ a subset $S \subseteq V$ such that $\Phi_1(S) + \Phi_2(S) + \Phi_3(S)  + \Phi_4(S) + \Phi_5(S) \leq 5 + \frac{100}{\gamma}$. From this, we can make several conclusions.

We first use the above and in particular $\Phi_1\leq 5 + \frac{100}{\gamma}$ to conclude that $\sum_{u\in U} |N_{H'}(u) \cap S| \leq (0.51) \sum_{u\in U} |N_{H}(u)|$. Let us call a bucket $B^{(j)}_i(u)$ bad if $||B^{(j)}_i(u)\cap S|-b/2|\geq b^{0.8}$. Notice that for each bad bucket, we have $\left( |S \cap B^{(j)}_i(u)| -  b/2\right)^2 \geq b^{1.6}$. On the other hand, we know that $\Phi_1(S)\leq 5 + \frac{100}{\gamma}$ where \begin{align*}
 \Phi_1(S) &:= &&\frac{4}{\sum_{u\in U} \sum_{j=K+1}^{\lceil\log N\rceil} |N_{H'^{(j)}}(u)|} \cdot \sum_{u \in U} \sum_{j=K+1}^{\lceil\log N\rceil} \left( \sum_{i=1}^{|N_{H'^{(j)}}(u)|/b}\left( |S \cap B^{(j)}_i(u)| -  b/2\right)^2 \right). 
 \end{align*}
Hence, overall, at most $\left(\sum_{u\in U} \sum_{j=K+1}^{\lceil\log N\rceil} |N_{H'^{(j)}}(u)|/b\right)/\Theta(\gamma b^{0.6}) = \sum_{u\in U} |N_{H'}(u)|/\Theta(\gamma \cdot b^{1.6})$ buckets are bad. Therefore, we can conclude that

\begin{align*}
    \sum_{u\in U} |N_{H'}(u) \cap S| \leq  ((1/2+1/b^{0.2}+\Theta(1/(\gamma \cdot b^{0.6})))\sum_{u\in U} |N_{H}(u)| \leq (0.51) \sum_{u\in U} |N_{H}(u)|.
\end{align*}
 
Next, we use $\Phi_2(S) \leq 5 + \frac{100}{\gamma}$ to conclude that for nearly all nodes in $U$, weighted by importance, the probability summation in their neighborhood goes down by a $1/2$ factor. Let $\mathcal{B}^{(j)}(u)$ be the collection of all $i$ such that bucket $B^{(j)}_{i}(u)$ is bad, with the definition of bad bucket being the same as before, i.e., $||B^{(j)}_i(u)\cap S|-b/2|\geq b^{0.8}$. Let us call a node $u$ bad if $\sum_{j=K+1}^{\lceil\log N\rceil} \sum_{i\in \mathcal{B}^{(j)}(u)} b 2^{-j}$ is more than $\left(\sum_{j=K+1}^{\lceil\log N\rceil} \sum_{i=1}^{|N_{H'^{(j)}}(u)|/b} b 2^{-j}\right)/b^{0.2}$. Let $U_{good}$ be the set of all $u\in U$ that are not bad.

For any node $u\in U_{good}$, we have 
\begin{align*} 
& && \sum_{v \in N_{H'}(u)\cap S}2^{-k_v} = \left(\sum_{j=K+1}^{\lceil\log N\rceil} \sum_{i=1}^{|N_{H'^{(j)}}(u)|/b} |B^{(j)}_i(u)\cap S|  2^{-j}\right) \\
&\geq&& (1-1/b^{0.2})(1/2-1/b^{0.2}) \left(\sum_{j=K+1}^{\lceil\log N\rceil} \sum_{i=1}^{|N_{H'^{(j)}}(u)|/b} b 2^{-j}\right) \\
&\geq&& (1/2-3/b^{0.2}) \left(\sum_{j=K+1}^{\lceil\log N\rceil} \sum_{i=1}^{|N_{H'^{(j)}}(u)|/b} b 2^{-j}\right)\\
& = && (1/2-3/b^{0.2}) \sum_{v \in N_{H'}(u)}2^{-k_v} \\
& \geq  &&  (1/2)(1-\gamma) \sum_{v \in N_{H'}(u)}2^{-k_v} \\
& \geq &&(1/2)(1-\gamma) \sum_{v \in N_{H}(u)}2^{-k_v} - \gamma/2
\end{align*}
and also that
\begin{align*} 
& && \sum_{v \in N_{H'}(u)\cap S}2^{-k_v} = \left(\sum_{j=K+1}^{\lceil\log N\rceil} \sum_{i=1}^{|N_{H'^{(j)}}(u)|/b} |B^{(j)}_i(u)\cap S|  2^{-j}\right) \\
&\leq&& ((1-1/b^{0.2})(1/2+1/b^{0.2})+(1/b^{0.2})) \left(\sum_{j=K+1}^{\lceil\log N\rceil} \sum_{i=1}^{|N_{H'^{(j)}}(u)|/b} b 2^{-j}\right) \\
&\leq&& (1/2+3/b^{0.2}) \left(\sum_{j=K+1}^{\lceil\log N\rceil} \sum_{i=1}^{|N_{H'^{(j)}}(u)|/b} b 2^{-j}\right)\\
& \leq && (1/2+3/b^{0.2}) \sum_{v \in N_{H}(u)}2^{-k_v} \\
& \leq && (1/2)(1+\gamma) \sum_{v \in N_{H}(u)}2^{-k_v}.
\end{align*}

Now, we argue that $\sum_{u\in U_{good}} imp_{u} \geq (1-\gamma)\sum_{u\in U} imp_{u}$. For each bad bucket $B^{(j)}_i(u)$, we have $\left( |S \cap B^{(j)}_i(u)| -  b/2\right)^2 \geq b^{1.6}$. And a node $u$ is called bad if we have $$\sum_{j=K+1}^{\lceil\log N\rceil} \sum_{i\in \mathcal{B}^{(j)}(u)} b 2^{-j} \geq \left(\sum_{j=K+1}^{\lceil\log N\rceil} \sum_{i=1}^{|N_{H'^{(j)}}(u)|/b} b 2^{-j}\right)/b^{0.2}.$$ Hence, for any bad node $u$, we have 
\begin{align*}
   \left(\frac{imp_{u}}{\sum_{j=K+1}^{\lceil\log N\rceil} |N_{H'^{(j)}}(u)| 2^{-j}} \cdot \sum_{j=K+1}^{\lceil\log N\rceil} 2^{-j} \left( \sum_{i=1}^{|N_{H'^{(j)}}(u)|/b}\left( |S \cap B^{(j)}_i(u)| -  b/2\right)^2 \right) \right) \geq imp_{u} \cdot \Theta(b^{0.4})
\end{align*}
Given that $\Phi_2(S)\leq 5 + \frac{100}{\gamma}$, we conclude $\sum_{u\in U_{good}} imp_{u} \geq (1-1/\Theta(\gamma \cdot b^{0.4}))\sum_{u\in U} imp_{u} \geq (1-\gamma)\sum_{u\in U} imp_{u}$.

Next, we use $\Phi_3(S)\leq 5 + \frac{100}{\gamma}$ to conclude that $|S|\leq (2/3) |V| + b \leq (2/3)|V| + \poly(\log N)$. Consider the buckets $B'_1\sqcup B'_2\ldots \sqcup  B'_{\lfloor|V|/{b}\rfloor} \subseteq V$ that we had. Let us call bucket $B'_{i}$ bad if $||B'_{i}\cap S|-b/2|\geq 0.1b$. Notice that for each bad bucket, we have $(|B'_{i}\cap S|-b/2)^2\geq 0.01b^2$. Since $\Phi_3(S)= \frac{4}{b\lfloor|V|/{b}\rfloor} \sum_{i=1}^{\lfloor|V|/{b}\rfloor} \left( |S \cap B^{'}_i| -  b/2\right)^2\leq 5 + \frac{100}{\gamma}$, we know that at most $\Theta(1/(\gamma \cdot b))<0.01$ fraction of these buckets can be bad. Hence, we conclude that $|S|\leq (1 + 0.1 + 0.01)|V|/2 + \poly(\log N)$.

Next, we use $\Phi_4(S) \leq 5 + \frac{100}{\gamma}$ to conclude that $|E'\cap \binom{S}{2}| \leq (2/3) |E'| + \poly(\log N).$

Consider the buckets $B''_1\sqcup B''_2\ldots \sqcup  B''_{k} \subseteq E'$. Let us call bucket $B''_{i}$ bad if $||S \cap \{v^*(e) \colon e \in B''_i\}| - b/2| \geq 0.1b$. Notice that for each bad bucket, we have $\left(|S \cap \{v^*(e) \colon e \in B''_i\}| - b/2\right)^2 \geq 0.01b^2$. Since $\Phi_4(S) =   \frac{4}{b \cdot k}\sum_{i=1}^{k}\left(||S \cap \{v^*(e) \colon e \in B''_i\}| - b/2 \right)^2 \leq 5 + \frac{100}{\gamma}$, we know that at most $\Theta(1/(\gamma \cdot b))<0.01$ fraction of these buckets can be bad. Hence, we conclude that $|E' \cap \binom{S}{2}|\leq (1 + 0.1 + 0.01)|E'|/2 + \poly(\log N)$.

Finally, we use $\Phi_5(S) \leq 5 + \frac{100}{\gamma}$ to conclude that
 \begin{align*}\sum_{e'\in \binom{S}{2}\cap E'} w(e') \cdot 4  + \sum_{v\in S} w(v) 2 \leq (1+\gamma) \cdot \left( \sum_{e'=\{v, v'\} \in E'} w(e')  + \sum_{v\in V} w(v) \right).\end{align*}

We have 
 \begin{align*}
    \Phi_5(S) =  \frac{100}{\gamma} \cdot \frac{\sum_{e'\in \binom{S}{2}\cap E'} w(e') \cdot 4  + \sum_{v\in S} w(v) 2}{\left( \sum_{e'=\{v, v'\} \in E'} w(e')  + \sum_{v\in V} w(v)   \right)} \leq 5 + \frac{100}{\gamma}
\end{align*}
and therefore

\begin{align*}
&\sum_{e'\in \binom{S}{2}\cap E'} w(e') \cdot 4 \cdot  + \sum_{v\in S} w(v) 2\\
&\leq
\left(\frac{\gamma}{100}\right) \cdot \left(5 + \frac{100}{\gamma} \right) \left( \sum_{e'=\{v, v'\} \in E'} w(e')  + \sum_{v\in V} w(v)  \right) \\
&\leq (1+\gamma) \cdot \left( \sum_{e'=\{v, v'\} \in E'} w(e')  + \sum_{v\in V} w(v) \right).
\end{align*}

\end{proof}

\begin{lemma}
    \label{lem:edge_buckets}
    Consider any $n$-node $m$-edge graph $G'=(V, E')$ where nodes have identifiers in $\{1,2, \dots, n\}$, let $N\geq n$ be a given upper bound, and consider a given parameter $b\leq \poly(\log N)$. There is a deterministic parallel algorithm with work $O((m+n)\log \log N)$ and depth $\poly(\log N)$ that bundles all except  $O(b^5)$ edges of $E'$ into buckets of size exactly $b$ with the following property: in each bucket $B\subseteq E'$, for each edge $e=\{v, v'\}\in B$, one endpoint $v^*(e)\in \{v, v'\}$ of this edge is marked as special such that the special endpoints $v^*(e)$ of different edges $e\in B$ in this bucket are distinct.
\end{lemma}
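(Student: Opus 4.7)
The plan is a two-phase bucketing built on a single arbitrary orientation of $E'$. I would first orient each edge $\{u,v\}\in E'$ from the smaller identifier to the larger; for every vertex $v$, let $d^{\mathrm{out}}(v)$ be the resulting out-degree and assign its out-edges arbitrary ranks $1,\ldots,d^{\mathrm{out}}(v)$ (for instance, in adjacency-list order). Both steps are prefix sums on the adjacency lists, costing $O(m+n)$ work and $O(\log n)$ depth.

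In Phase~A (star buckets), for each vertex $v$ I would partition the first $b\lfloor d^{\mathrm{out}}(v)/b\rfloor$ out-edges of $v$ into $\lfloor d^{\mathrm{out}}(v)/b\rfloor$ consecutive buckets of exactly $b$ edges. Each such bucket is a star centred at $v$; declaring $v^*(e)$ to be the head of every $e$ in the bucket yields $b$ distinct special endpoints (the distinct out-neighbours of $v$ in that block). After Phase~A, every vertex has at most $b-1$ uncovered out-edges, so the residual subgraph $G''$ has maximum out-degree strictly less than $b$; this is the key structural reduction that will make the second phase cheap.

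In Phase~B (pseudoforest strata), I would re-rank each vertex's remaining out-edges inside $G''$ using values in $\{1,\ldots,b-1\}$ and, for each rank $r$, let $G''_r$ be the set of edges of residual out-rank $r$. Since no vertex is the tail of two edges of $G''_r$, the subgraph $G''_r$ has out-degree at most $1$ everywhere under the fixed orientation, i.e.\ it is a pseudoforest. I would sort the edges of $G''$ by $r$ via \Cref{lem:SortBhatt} (valid since $b\le\poly(\log N)\le N$) and inside each $G''_r$ emit $\lfloor |G''_r|/b\rfloor$ consecutive buckets of exactly $b$ edges, declaring the tail of every edge to be its special endpoint. Because any subset of a pseudoforest is a pseudoforest, the tails inside each Phase-B bucket are pairwise distinct, so the special-endpoint condition holds.

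The only edges never placed in a bucket are the tops of the Phase-B strata, contributing at most $\sum_{r=1}^{b-1}\bigl(|G''_r|\bmod b\bigr) < (b-1)^2 \le b^5$ lost edges. The dominant cost is the single integer sort of the at most $m$ residual edges by a key in $\{0,\ldots,b-1\}\subseteq\{0,\ldots,N\}$, which by \Cref{lem:SortBhatt} uses $O((m+n)\log\log N)$ work and $O(\log n+\log\log N)$ depth; every other step (orientation, two rank-labellings, bucket-id assignment via prefix sums, block emission) amounts to $O(m+n)$ work in $\poly(\log N)$ depth. The only point that needs careful verification is that Phase~A really does cap the residual maximum out-degree at $b-1$: this is precisely what forces the number of Phase-B strata to be at most $b-1$ and hence bounds the total leftover by $O(b^2)$, comfortably within the $O(b^5)$ budget.
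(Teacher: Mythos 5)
Your proof is correct, and it follows the same two-phase outline as the paper's argument (star-bucket the edges at high-degree vertices, then handle the bounded-degree residual with one integer sort), but your second phase uses a genuinely different combinatorial grouping. The paper sorts the residual \emph{vertices} by degree, groups $b$ equal-degree vertices together, and for each group of degree-$d$ vertices forms $d$ buckets by taking the $i$-th edge of each vertex in the group as a transversal; distinctness of special endpoints comes from the $b$ vertices of the group being distinct, and the leftover is $O(b^3)$. You instead fix an orientation, stratify the residual \emph{edges} by their out-rank $r\in\{1,\dots,b-1\}$ at the tail, observe that each stratum has out-degree at most one (so tails within any block are automatically distinct), and chop each stratum into consecutive blocks of $b$; the leftover is at most $(b-1)^2$. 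Your version buys two small things: the orientation in Phase~A makes each edge claimed by exactly one endpoint, sidestepping the (easily fixed but unaddressed) possibility in the paper's Phase~1 that an edge between two high-degree vertices is bucketed twice; and your use of \Cref{lem:SortBhatt} is cleanly justified for keys up to $b-1\le\poly(\log N)$, whereas the paper invokes \Cref{lem:sorting}, whose statement only covers keys up to $\lceil\log N\rceil$ and needs the obvious extension to $\poly(\log N)$-range keys. Both arguments meet the $O((m+n)\log\log N)$ work, $\poly(\log N)$ depth, and $O(b^5)$ leftover budgets.
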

\begin{proof}
For every node $v$ of degree $d(v)\geq b$, bucket its edges into buckets of size $b$, with the exception of at most $b-1$ remaining edges. In each bucket, mark the other endpoint as the special end. Remove all these bucketized edges, for all such $v$. We remain with up to $n$ nodes, each of degree at most $b-1< \poly(\log n)$. Sort these nodes by degree, using \Cref{lem:sorting}, with $O(n\log\log N)$ work. In particular, all nodes of any degree $d\in\{1,2, \dots b-1\}$ are together in one pile. Let $n_d$ be the number of nodes of degree $d$. We partition $\{1,2, \dots b\lceil n_d/b\rceil\}$ into parts of size $b$ in the straightforward way $\{1, 2, ... b\} \sqcup \{b+1, b+2, \dots, 2b\} \sqcup \ldots \{b(\lceil n_d/b\rceil-1)+1, b(\lceil n_d/b\rceil-1)+1, \dots, b\lceil n_d/b\rceil\}$. For each part, for each $i\in\{1,2, \dots, d\}$, create one new edge bucket by putting in it the $i^{th}$ edge of each node of this part. These nodes are declared the special endpoints of these edges (notice that they are distinct inside each bucket). What remains is at most $n_d - b\lceil n_d/b\rceil <b$ nodes of degree $d$, and thus at most $db$ edges. Over all degrees $d\in\{1,2,\dots, b-1\}$, this is at most $O(b^3)$ edges that are left outside buckets. The depth bound is $\poly(\log N)$, and the  work bound is $O((m+n)\log \log N)$ because of the usage of the sorting subroutine (after which everything else is linear work).
\end{proof}

\subsection{High Probability Regime}
\begin{lemma}[High Probability Regime]
\label{lem:mis_high_probability}
Let $H$ be an $n$-vertex $m$-edge bipartite graph with bipartition $V(H) = U \sqcup V$, where each node has a unique identifier in $\{1, 2, \dots 2n\}$. Let $N\geq n$ be a given upper bound. For every $u \in U$, let $imp_u \in \mathbb{R}_{\geq 0}$ and for every $v \in V$, let $k_v \in \{1,2,\ldots, K\}$, where $K=\lceil 100\log\log N\rceil$, such that for every $u \in U$, we have $\sum_{v \in N_H(u)}2^{-k_v} \leq 40$. Consider also an additional graph $G'=(V, E')$ on the same set $V$ of vertices, with $m'$ edges, where each edge $e'\in E'$ has a weight $w(e')\in \mathbb{R}_{\geq 0}$.

There exists a constant $C\geq 1$ and a deterministic parallel algorithm with work $(m+m'+n)\poly(\log \log N)$ and depth $\poly(\log N)$ that computes two subsets $S \subseteq V$ and $U_{good} \subseteq U$ such that: 

\begin{itemize}
\item $\sum_{u \in U_{good}} imp_u \geq (0.99)\sum_{u \in U} imp_u$,
\item for every $u\in U_{good}$, we have $|N_H(u) \cap S|  \in [ (\sum_{v \in N_H(u)}2^{-k_v}) - 0.1, (\sum_{v \in N_H(u)}2^{-k_v}) + C]$, and

\item $\sum_{e'\in \binom{S}{2}\cap E'} w(e') \leq C \cdot \sum_{e'=\{v, v'\} \in E'} w(e') \cdot 2^{-(k_v+k_{v'})}$.
\end{itemize}
\end{lemma}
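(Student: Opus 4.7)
I would mirror the iterative halving structure in the proof of Lemma~\ref{lem:hitting_set_high_probability}, augmenting the inner $1/2$-sampling subroutine so that it additionally controls a weighted edge invariant of $E'$. Concretely, I would set $V^{(0)}_{alive}=V$, $k^{(i)}_v=\min(k_v,K-i)$, $V^{(i)}=\{v\in V^{(i)}_{alive}:k^{(i)}_v=K-i\}$, iterate for $i=0,1,\ldots,I:=K-50$ with $\gamma^{(i)}=1/(100(K-i)^2)$, and output $S:=V^{(I+1)}_{alive}$, $U_{good}:=U^{(I)}_{good}$. The first two bullets (the importance bound and the hitting-set interval) then transfer verbatim from the analysis of Lemma~\ref{lem:hitting_set_high_probability}, since the new potential added below does not interfere with the hitting-set potentials.

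For the third bullet, my plan is to introduce the ``remaining survival probability'' $q^{(i)}_v:=2^{-(I-\max(i,K-k_v)+1)}$ when $k_v\geq K-I=50$ and $q^{(i)}_v:=1$ otherwise, and track the invariant
\[
J^{(i)} \;:=\; \sum_{e'=\{v,v'\}\in\binom{V^{(i)}_{alive}}{2}\cap E'} w(e')\,q^{(i)}_v q^{(i)}_{v'}.
\]
A direct check gives $q^{(0)}_v\cdot 2^{k_v}\leq 2^{50}$, so $J^{(0)}\leq 4^{50}\sum_{e'}w(e')\cdot 2^{-(k_v+k_{v'})}$; and $q^{(I+1)}_v=1$ for every $v\in S$, so $J^{(I+1)}=\sum_{e'\in\binom{S}{2}\cap E'}w(e')$. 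It therefore suffices to enforce $J^{(i+1)}\leq(1+\gamma^{(i)})J^{(i)}$ per iteration; telescoping then delivers the third bullet with $C=O(4^{50}\prod_i(1+\gamma^{(i)}))=O(1)$.

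To achieve the per-step invariant, I would augment Lemma~\ref{lem:high_probability_half} with an additional potential in the spirit of $\Phi_5$ from Lemma~\ref{lem:mis_low_prob_half}, normalized by $\Theta(1/\gamma^{(i)})$ so that the $\eps$-slack of Lemma~\ref{lem:local_rounding} translates into a $(1+\gamma^{(i)})$ multiplicative error. The enforced per-iteration bound is
\[
4\sum_{e'\in\binom{S^{(i)}}{2}\cap E'\cap\binom{V^{(i)}}{2}}\bar w^{(i)}(e')+2\sum_{v\in S^{(i)}}\bar w^{(i)}(v)\;\leq\;(1+\gamma^{(i)})\Bigl(\sum_{e'\in\binom{V^{(i)}}{2}\cap E'}\bar w^{(i)}(e')+\sum_{v\in V^{(i)}}\bar w^{(i)}(v)\Bigr),
\]
where $\bar w^{(i)}(e'):=w(e')q^{(i)}_v q^{(i)}_{v'}$ and $\bar w^{(i)}(v):=\sum_{v'\in N_{G'}(v)\cap(V^{(i)}_{alive}\setminus V^{(i)})}w(\{v,v'\})q^{(i)}_v q^{(i)}_{v'}$. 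Using that $q^{(i+1)}_v=2q^{(i)}_v$ when $v\in V^{(i)}$ and $q^{(i+1)}_v=q^{(i)}_v$ otherwise, I would decompose $\binom{V^{(i+1)}_{alive}}{2}$ into pairs inside $S^{(i)}$, pairs crossing $S^{(i)}$ to $V^{(i)}_{alive}\setminus V^{(i)}$, and pairs entirely inside $V^{(i)}_{alive}\setminus V^{(i)}$, then add $\sum_{\binom{V^{(i)}_{alive}\setminus V^{(i)}}{2}}\bar w^{(i)}(e')$ to both sides; this gives $J^{(i+1)}\leq(1+\gamma^{(i)})J^{(i)}$ immediately.

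The main obstacle will be keeping the total work at $(m+m'+n)\poly(\log\log N)$, since a priori each iteration must touch all $E'$-edges with an endpoint in $V^{(i)}$, and a single such edge could be revisited $\Omega(\log\log N)$ times across iterations. I would handle this by porting the bucket potential of Lemma~\ref{lem:edge_buckets} (the one driving $\Phi_4$ in Lemma~\ref{lem:mis_low_prob_half}) into the strengthened $1/2$-sampling call, forcing the number of $E'$-edges within the actively halved set to shrink by a constant factor per iteration. The resulting geometric series dominates the $(K-i)^{O(1)}$ blow-up from $1/\gamma^{(i)}$, and since $K=O(\log\log N)$, the total work sums to $(m+m'+n)\poly(\log\log N)$ with depth $(I+1)\cdot\poly(\log N)=\poly(\log N)$.
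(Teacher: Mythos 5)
Your proposal matches the paper's proof in all essentials: the same iterative halving, the same augmentation of the $1/2$-sampling step with an extra $\Theta(1/\gamma^{(i)})$-normalized potential on the reweighted $E'$-edges (with cross edges to $V^{(i)}_{alive}\setminus V^{(i)}$ folded into vertex weights), and the same telescoping that loses a factor $(1+\gamma^{(i)})$ per step plus a $2^{O(K-I)}$ factor from the endpoint normalization; your $q_v^{(i)}$ is just $2^{O(1)}\cdot 2^{-k_v^{(i)}}$, so the invariants are identical up to a global constant. The one point to correct is your final paragraph: the work ``obstacle'' you describe does not exist in this regime, since there are only $I+1=O(\log\log N)$ iterations and $1/\gamma^{(i)}=\poly(\log\log N)$, so touching every edge of $E'$ in every iteration already costs only $(m+m'+n)\poly(\log\log N)$; accordingly, the paper's high-probability $1/2$-sampling lemma carries no $\Phi_4$-style shrinkage potential (that device is needed only in the low-probability regime with its $\Theta(\log N)$ iterations), and your added edge-bucket potential is harmless but unnecessary.
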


\begin{proof}
During the process, in each iteration $i$, we will have a set $V^{(i)}_{alive}$ of $V$-vertices that remain alive, with the interpretation that $V^{(0)}_{alive} = V$, and per $i$, we have $V^{(i+1)}_{alive} \subseteq V^{(i)}_{alive}$. For each $v\in V^{(i)}_{alive}$, we define $k^{(i)}_{v} = \min\{k_{v}, K-i\}$. We would like to determine $V^{(i)}_{alive}$ such that, for each $i$, and for nearly all $u\in U$, we have that $\sum_{v\in N_{H}(u)\cap V^{(i+1)}_{alive}} 2^{-k^{(i+1)}_{v}}$ remains close to $\sum_{v\in N_{H}(u)\cap V^{(i)}_{alive}} 2^{-k^{(i)}_{v}}$ up to a small additive error such that these additive errors add up to a small constant over all iterations.

\medskip

As initialization, we define $U^{(0)} = U$, $V^{(0)} = \{v \in V \colon k_v = K\}$, $H^{(0)} = H[U^{(0)} \sqcup V^{(0)}]$ and $G'^{(0)} = G'[V^{(0)}]$.

For $i = 0,1,\ldots, I$ for $I=K-20$, in round $i$ invoke 
\cref{lem:mis_high_probability_half} with input:
\begin{itemize}
    \item $H^{(i)}$ with bipartition $U^{(i)} \sqcup V^{(i)}$
    \item $\gamma^{(i)} = 1/(100(K-i)^{2})$
    \item $imp_u$ for every $u \in U^{(i)}$
    \item $G'^{(i)}$ 
    \item For every $\{v,v'\} \in E(G'^{(i)})$, $w^{(i)}(\{v,v\}) = w(\{v,v'\}) \cdot  2^{-k_v^{(i)}}\cdot 2^{-k_{v'}^{(i)}}$
    \item For every $v \in V^{(i)}$, $w^{(i)}(v) = \sum_{v' \in N_{G'}(v) \cap \left(V^{(i)}_{alive} \setminus V^{(i)} \right)} w(\{v,v'\}) \cdot 2^{-(k^{(i)}_v + k^{(i)}_{v'})}$
\end{itemize}
As a result, we obtain two sets $S^{(i)} \subseteq V^{(i)}$ and $U_{good}^{(i)} \subseteq U^{(i)}$ satisfying

\begin{itemize}
\item[(I)] $\sum_{u \in U^{(i)}_{good}} imp_u \geq (1-\gamma^{(i)})\sum_{u \in U^{(i)}} imp_u$, and
\item[(II)] for every $u \in U^{(i)}_{good}$, $|N_{H^{(i)}}(u) \cap S^{(i)}| \in (1 \pm \gamma^{(i)})\frac{|N_{H^{(i)}}(u)|}{2} \pm (1/\gamma^{(i)})^7$.
\item[(III)] $\sum_{e'\in \binom{S^{(i)}}{2}\cap E'} 4 w^{(i)}(e') + \sum_{v\in S^{(i)}} 2 w^{(i)}(v) \leq (1+\gamma^{(i)}) \left(
\sum_{e' \in \binom{V^{(i)}}{2} \cap E'} w^{(i)}(e') + \sum_{v\in 
V^{(i)}} w^{(i)}(v)\right)$.
\end{itemize}
Then, we set
$V^{(i+1)}_{alive} = V^{(i)}_{alive} \setminus \left(V^{(i)} \setminus S^{(i)} \right)$, and we define $H^{(i+1)} = H[U^{(i)}_{good} \sqcup \{v \in V^{(i+1)}_{alive} \colon k^{(i+1)}_v = K-(i+1)\}]$, $G'^{(i+1)} = G'[\{v \in V_{alive}^{(i+1)} \colon k_v^{(i+1)} = K - (i+1)\}]$, and go to iteration $i+1$ of the for loop.

At the end of the for loop, we set the final output as $S = V^{(I+1)}_{alive}$, and $U_{good} = U_{good}^{(I)}$.

We start with proving the third bullet point, namely that there exists an absolute constant $C$ satisfying

\[\sum_{e'\in \binom{S}{2}\cap E'} w(e') \leq C \cdot \sum_{e'=\{v, v'\} \in E'} w(e') \cdot 2^{-(k_v+k_{v'})}.\]

To that end, we first show that

\[\sum_{\{v,v'\} \in \binom{V^{(i+1)}_{alive}}{2}\cap E'}w(\{v,v'\}) 2^{-k^{(i+1)}_v} \cdot 2^{-k^{(i+1)}_{v'}}  \leq  (1+\gamma^{(i)})\sum_{\{v,v'\} \in \binom{V^{(i)}_{alive}}{2}\cap E'} w(\{v,v'\}) 2^{-k^{(i)}_v} \cdot 2^{-k^{(i)}_{v'}}\]

Indeed, we have

\begin{align*}
& &&\sum_{\{v,v'\} \in \binom{V^{(i+1)}_{alive}}{2}\cap E'}w(\{v,v'\}) 2^{-k^{(i+1)}_v} \cdot 2^{-k^{(i+1)}_{v'}} \\ 
&= &&\sum_{\{v,v'\} \in \binom{V^{(i+1)}_{alive} \setminus S^{(i)}}{2}\cap E'}w(\{v,v'\}) 2^{-k^{(i)}_v} \cdot 2^{-k^{(i)}_{v'}} \\
& &&+ 2 \cdot \sum_{\{v,v'\} \in \binom{V^{(i+1)}_{alive}}{2}\cap E' \colon |\{v,v'\} \cap S^{(i)}|= 1}w(\{v,v'\}) 2^{-k^{(i)}_v} \cdot 2^{-k^{(i)}_{v'}} \\
& &&+ 4 \cdot \sum_{\{v,v'\} \in \binom{S^{(i)}}{2}\cap E'}w(\{v,v'\}) 2^{-k^{(i)}_v} \cdot 2^{-k^{(i)}_{v'}} \\
&=&& \sum_{\{v,v'\} \in \binom{V^{(i+1)}_{alive} \setminus S^{(i)}}{2}\cap E'}w(\{v,v'\}) 2^{-k^{(i)}_v} \cdot 2^{-k^{(i)}_{v'}} 
+ \sum_{v \in S^{(i)}} 2w^{(i)}(v) 
+ \sum_{e' \in \binom{S^{(i)}}{2} \cap E'}4 w^{(i)}(e') \\
&\leq&& \sum_{\{v,v'\} \in \binom{V^{(i+1)}_{alive} \setminus S^{(i)}}{2}\cap E'}w(\{v,v'\}) 2^{-k^{(i)}_v} \cdot 2^{-k^{(i)}_{v'}} + (1+\gamma^{(i)}) \left(
\sum_{e' \in \binom{V^{(i)}}{2} \cap E'} w^{(i)}(e') + \sum_{v\in 
V^{(i)}} w^{(i)}(v)\right) \\
&\leq&& (1+\gamma^{(i)})\sum_{\{v,v'\} \in \binom{V^{(i)}_{alive}}{2}\cap E'} w(\{v,v'\}) 2^{-k^{(i)}_v} \cdot 2^{-k^{(i)}_{v'}}.
\end{align*}

Therefore,

\begin{align*}
\sum_{e'\in \binom{S}{2}\cap E'} w(e')  &= \sum_{e'\in \binom{V_{alive}^{(I+1)}}{2}\cap E'} w(e') \\
&\leq 10^{100}\sum_{\{v,v'\}\in \binom{V_{alive}^{(I+1)}}{2}\cap E'} w(e') \cdot 2^{-k_v^{(I+1)}} \cdot 2^{-k_{v'}^{(I+1)}} \\
&\leq 10^{100} \left(\prod_{i=0}^I (1 + \gamma^{(i)})\right) \cdot
\sum_{\{v,v'\}\in \binom{V_{alive}^{(0)}}{2}\cap E'} w(e') \cdot 2^{-k_v^{(0)}} \cdot 2^{-k_{v'}^{(0)}} \\
&\leq C \cdot \sum_{e'=\{v, v'\} \in E'} w(e') \cdot 2^{-(k_v+k_{v'})}
\end{align*}
for $C$ being a sufficiently large constant.

We claim that for each $i$, for each node $u \in U^{(i)}_{good}$, we have 
\begin{align}
\label{eqnDrift2}
    \sum_{v\in N_{H}(u)\cap V^{(i+1)}_{alive}} 2^{-k^{(i+1)}_{v}} \in \sum_{v\in N_{H}(u)\cap V^{(i)}_{alive}} 2^{-k^{(i)}_{v}} \pm 50 \gamma^{(i)}
\end{align} To prove this claim, we will show by a simultaneous induction that we also have the following at all times:
\begin{align}
\label{eqnUB2}
\sum_{v\in N_{H}(u)\cap V^{(i)}_{alive}} 2^{-k^{(i)}_{v}} \leq 45.
\end{align}

Notice that \Cref{eqnUB2} is a guarantee about $V^{(i)}_{alive}$ and \Cref{eqnDrift2} is a guarantee about $V^{(i+1)}_{alive}$.

For $i=0$, the guarantee of \Cref{eqnUB2} for $V^{(0)}_{alive}$ is trivially implied by the initial assumption of  $\sum_{v \in N_H(u)}2^{-k_v} \leq 40$. Suppose that the two hold up to iteration $i-1$. We now conclude them for iteration $i$.

From (II), we have that $|N_{H^{(i)}}(u) \cap S^{(i)}| \in (1 \pm \gamma^{(i)})\frac{|N_{H^{(i)}}(u)|}{2} \pm (1/\gamma^{(i)})^7$. Hence, 
\begin{align*}
    \sum_{v\in N_{H}(u)\cap V^{(i+1)}_{alive}} 2^{-k^{(i+1)}_{v}} \in (1\pm \gamma^{(i)})\sum_{v\in N_{H}(u)\cap V^{(i)}_{alive}} 2^{-k^{(i)}_{v}} \pm \frac{(1/\gamma^{(i)})^7}{2^{(K-(i+1))}}.
\end{align*} 
Now notice that $\gamma^{(i)} = 1/(100(K-i)^{2})$ and hence we know $\frac{(1/\gamma^{(i)})^7}{2^{(K-(i+1))}} \leq 5\gamma^{(i)}$. Furthermore, from the inductive assumption, we knew that $\sum_{v\in N_{H}(u)\cap V^{(i)}_{alive}} 2^{-k^{(i)}_{v}} \leq 45$. Using these two, and the above inequality, we conclude that 
\begin{align*}
    \sum_{v\in N_{H}(u)\cap V^{(i+1)}_{alive}} 2^{-k^{(i+1)}_{v}} \in \sum_{v\in N_{H}(u)\cap V^{(i)}_{alive}} 2^{-k^{(i)}_{v}} \pm 50 \gamma^{(i)}
\end{align*}
which gives \Cref{eqnDrift2} for iteration $i$. In addition, from \Cref{eqnDrift2} for all iterations $0$ to $i$, we know that

\begin{align*}
   \sum_{v\in N_{H}(u)\cap V^{(i+1)}_{alive}} 2^{-k^{(i+1)}_{v}} \leq \sum_{v\in N_{H}(u)\cap V} 2^{-k_{v}} \pm (50 \sum_{j=0}^{i} \gamma^{(j)}) \in \sum_{v\in N_{H}(u)\cap V} 2^{-k_{v}} + 5 \leq 45
\end{align*}
which implies \Cref{eqnUB2} for $i+1$.

Hence, we get

\begin{align*}
  |N_H(u) \cap S| = |N_H(u) \cap V^{(I+1)}_{alive}| \leq 2^{20}  \left(\sum_{v \in N_H(u) \cap V^{(I+1)}_{alive}} 2^{-k_v^{(I+1)}}  \right) \leq 2^{20} \cdot 45\leq 10^{25}.
  \end{align*}

  and

  \begin{align*}
  |N_H(u) \cap S| &= |N_H(u) \cap V^{(I+1)}_{alive}| \\
  &\geq \sum_{v \in N_H(u) \cap V^{(I+1)}_{alive}} 2^{-k_v^{(I+1)}} \\
  &\geq \sum_{v \in N_H(u)} 2^{-k_v} - \left( 50 \sum_{j=0}^I \gamma^{(j)}\right)\\
  &\geq  \left(\sum_{v \in N_H(u)} 2^{-k_v} \right) - 0.01.
  \end{align*}

Next, we show that $\sum_{u \in U_{good}} \geq 0.99\sum_{u \in U} imp_u$.

Indeed, we have

\begin{align*}
 \sum_{u \in U_{good}} imp_u = \sum_{u \in U^{(I)}_{good}} imp_u \geq \left( 1 - \sum_{j=0}^I \gamma^{(j)}\right) \sum_{u \in U} imp_u \geq 0.99 \sum_{u \in U} imp_u.
\end{align*}

The work can be upper bounded by
\[(m+n + m')\poly(\log \log N)/\left(\sum_{j=0}^I\left(\gamma^{(j)}\right)^{30}\right) = (m+n+m')\poly(\log \log N)\]
and the depth can be upper bounded by 
\[I \cdot \poly(\log N) = \poly(\log N).\]

\end{proof}

\begin{lemma}[High Probability 1/2 Sampling]
\label{lem:mis_high_probability_half}
Let $H$ be an $n$-vertex $m$-edge bipartite graph with bipartition $V(H) = U \sqcup V$, where each node has a unique identifier in $\{1, 2, \dots 2n\}$, let $N\geq n$ be a given upper bound, and let $\gamma \in [1/\log(N), 0.01)$. For every $u \in U$, let $imp_u \in \mathbb{R}_{\geq 0}$. Consider also an additional graph $G'=(V, E')$ on the same set $V$ of vertices, with $m'$ edges, where each edge $e'\in E'$ has a weight $w(e')\in \mathbb{R}_{\geq 0}$ and each vertex $v\in V$ has a weight $w(v)\in \mathbb{R}_{\geq 0}$. 


There exists a deterministic parallel algorithm with work $(m+m'+n)\poly(\log \log N)/\gamma^{30}$ and depth $\poly(\log N)$ that computes two subsets $S \subseteq V$ and $U_{good} \subseteq U$ such that:

\begin{itemize}
\item $\sum_{u \in U_{good}} imp_u \geq (1-\gamma)\sum_{u \in U} imp_u$ 
\item for every $u \in U_{good}$, $|N_H(u) \cap S| \in (1 \pm \gamma)\frac{|N_H(u)|}{2} \pm (1/\gamma)^{7}$,
\item $\sum_{e'\in \binom{S}{2}\cap E'} 4 w(e') + \sum_{v\in S} 2 w(v) \leq (1+\gamma) \left(
\sum_{e'=\{v, v'\} \in E'} w(e') + \sum_{v\in 
V} w(v)\right)$.
\end{itemize}
\end{lemma}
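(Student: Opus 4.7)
The plan is to adapt the proof of \Cref{lem:high_probability_half} by augmenting it with an additional potential function controlling the edge and vertex weights in $G'$, in the same spirit as potential $\Phi_5$ in the proof of \Cref{lem:mis_low_prob_half}. Set $b := \lceil (1/\gamma)^6 \rceil$. For each $u \in U$, drop up to $b-1$ neighbors from $N_H(u)$ so that the remaining neighborhood size is a multiple of $b$ (this introduces only a $\pm b = \pm(1/\gamma)^7$ additive error in $|N_H(u) \cap S|$, which is permitted). Then partition $N_H(u) = B_1(u) \sqcup B_2(u) \sqcup \ldots \sqcup B_{|N_H(u)|/b}(u)$ into size-$b$ buckets.

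Following \Cref{lem:high_probability_half}, define the bucket potential
\[
\Phi_1(S) := \frac{4}{\sum_{u \in U} imp_u} \sum_{u \in U} \frac{imp_u}{|N_H(u)|} \sum_{i=1}^{|N_H(u)|/b}\bigl(|S \cap B_i(u)| - b/2\bigr)^2,
\]
which has expectation $1$ under uniform $1/2$-sampling of $S$. To control the edge-weight condition, define additionally
\[
\Phi_2(S) := \frac{100}{\gamma} \cdot \frac{\sum_{e' \in \binom{S}{2} \cap E'} 4 w(e') + \sum_{v \in S} 2 w(v)}{\sum_{e' \in E'} w(e') + \sum_{v \in V} w(v)},
\]
which has expectation $100/\gamma$ under uniform $1/2$-sampling. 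I would then express $\Phi_1 + \Phi_2$ in the utility/cost language of \Cref{lem:local_rounding}: each vertex $v$ receives utility contributions from its incident bucket-pairs and from $2 w(v)$ (with the $100/\gamma$ factor), and each pair of vertices in a common bucket, together with each edge $e' \in E'$, contributes a cost. An auxiliary multigraph $\bar H$ with $n + |V|$ vertices and $O(mb + m')$ edges captures these terms. By invoking \Cref{lem:local_rounding} with $\eps = \Theta(\gamma / (b-1))$ (a small enough constant factor of $\gamma/(b-1)$), we obtain in $(m + m' + n)\poly(\log\log N)/\gamma^{30}$ work and $\poly(\log N)$ depth a subset $S$ with $\Phi_1(S) + \Phi_2(S) \leq 2 + 101/\gamma$.

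From $\Phi_1(S) \leq 2 + 101/\gamma$, I call a bucket $B_i(u)$ bad if $||B_i(u) \cap S| - b/2| \geq b^{0.8}$, and a node $u$ bad if more than $1/b^{0.2}$ fraction of its buckets are bad. Exactly as in \Cref{lem:high_probability_half}, each bad node contributes at least $imp_u \cdot \Theta(b^{0.4})$ to $\Phi_1$, so the total importance of bad nodes is at most $\Theta(1/(\gamma b^{0.4})) \sum_u imp_u \leq \gamma \sum_u imp_u$, giving the first bullet. For any $u \in U_{good}$, the bucket bound yields $|N_H(u) \cap S| \in (1/2)(1 \pm \gamma)|N_H(u)| \pm (1/\gamma)^7$, which is the second bullet. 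Finally, from $\Phi_2(S) \leq 2 + 101/\gamma$, multiplying through by $\gamma/100 \cdot (\sum_{e'} w(e') + \sum_v w(v))$ yields
\[
\sum_{e' \in \binom{S}{2} \cap E'} 4 w(e') + \sum_{v \in S} 2 w(v) \leq \Bigl(\frac{2\gamma}{100} + \frac{101}{100}\Bigr)\Bigl(\sum_{e' \in E'} w(e') + \sum_{v \in V} w(v)\Bigr) \leq (1+\gamma)\Bigl(\sum_{e' \in E'} w(e') + \sum_{v \in V} w(v)\Bigr),
\]
establishing the third bullet.

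The main subtlety I expect is the normalization of $\Phi_2$ relative to $\Phi_1$: the third bullet requires a tight $(1+\gamma)$ multiplicative guarantee rather than a constant factor, which is why $\Phi_2$ is inflated by a $\Theta(1/\gamma)$ factor and $\eps$ in the invocation of \Cref{lem:local_rounding} must be chosen small enough (shrunk by $\Theta(\gamma)$) so that the additive slack $\eps \sum_{e \in \bar E} cost(e)$ contributed by rounding does not overwhelm the $\gamma/100$ margin in the $\Phi_2$ bound. Once these constants are set correctly, everything else is a direct transplant of the argument from \Cref{lem:high_probability_half} combined with the edge-weight accounting used in \Cref{lem:mis_low_prob_half}.
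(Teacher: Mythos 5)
Your approach is exactly the paper's: the same bucket potential $\Phi_1$ normalized to expectation $1$, an added potential for the $G'$ edge/vertex weights inflated by a $\Theta(1/\gamma)$ factor (the paper uses $\frac{10}{\gamma}$ where you use $\frac{100}{\gamma}$), a translation into the utility/cost language of \Cref{lem:local_rounding} on an auxiliary multigraph on vertex set $V$ with $O(mb+m')$ edges, and the same bad-bucket/bad-node analysis for the first two bullets. You also correctly identify the one real subtlety, namely that the third bullet needs a $(1+\gamma)$ rather than $O(1)$ guarantee, which is why the weight potential must be inflated relative to $\Phi_1$.

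However, your final chain of inequalities for the third bullet is arithmetically broken. You bound $\Phi_1(S)+\Phi_2(S)\leq 2+\frac{101}{\gamma}$ and then compute $\frac{\gamma}{100}\left(2+\frac{101}{\gamma}\right)=\frac{101}{100}+\frac{2\gamma}{100}$, claiming this is at most $1+\gamma$; but that requires $\gamma\geq\frac{1}{98}$, whereas the lemma assumes $\gamma<0.01$. The problem is that by rounding the slack up from ``$+O(\gamma)$'' to ``$+1+\frac{1}{\gamma}$'' you discarded exactly the precision the argument needs: an extra additive $\frac{1}{\gamma}$ in the bound on $\Phi_2$ becomes an extra $\frac{1}{100}$ multiplicative loss, which exceeds the $\gamma$ budget. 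The repair is immediate and is what the paper does: keep the tight accounting $\Phi_1(S)+\Phi_2(S)\leq \E[\Phi_1]+\E[\Phi_2]+\eps\sum cost = 1+\frac{100}{\gamma}+O(\gamma)$, so that $\Phi_2(S)\leq \frac{100}{\gamma}+O(1)$ and the multiplicative factor is $\frac{\gamma}{100}\left(\frac{100}{\gamma}+O(1)\right)=1+O(\gamma/100)\leq 1+\gamma$. (The paper achieves this with $\eps=\frac{1}{4(b-1)}$, getting $\Phi_2\leq 3+\frac{10}{\gamma}$ and hence a factor $1+0.3\gamma$; your smaller $\eps=\Theta(\gamma/(b-1))$ is also fine and only affects the depth by a $\poly(\log N)$ factor.) With that correction, the rest of your sketch goes through as in the paper.
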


\begin{proof}
Let $b := \lceil (1/\gamma)^{6}\rceil$. For each node $u \in U$, we slightly adjust its neighborhood $N_H(u)$ by dropping up to $b-1$ of the neighbors, so that the remaining $|N_H(u)|$ is a multiple of $b$---notice that this causes at most an additive $\pm b \in \pm (1/\gamma)^{7}$ difference in the output size of $|N_H(u) \cap S|$, which is permitted in the second condition. For the remaining part of $N_{H}(u)$, we partition it arbitrarily into buckets $N_H(u) = B_1(u) \sqcup B_2(u) \sqcup \ldots \sqcup B_{|N_H(u)|/b}(u)$ of size $b$.
We want a derandomization such that, for nearly all buckets $B_i(u)$ over different $u$ (normalized and weighted by importance $imp(u)$, to be made precise), it holds that $|B_{i}(u) \cap S|$ is very close to $b/2$. In particular, we will use $(|B_{i}(u) \cap S| - b/2)^2$ as a measure of the drift from the $b/2$ target in this bucket, and perform the derandomization such that an aggregate of these drifts over all buckets and all $u$ is minimized. Concretely, our derandomization will try to maintain the potential 
\begin{align*}
 \Phi_1(S) &:= && \frac{4}{\sum_{u \in U} imp_u}\sum_{u \in U} \frac{imp_u}{|N_H(u)|} \cdot \left( \sum_{i=1}^{|N_H(u)|/b}\left( |S \cap B_i(u)| -  b/2\right)^2 \right) \\
 &=  &&\frac{4}{\sum_{u \in U} imp_u} \sum_{u \in U} \frac{imp_u}{|N_H(u)|} \cdot \sum_{i=1}^{|N_H(u)|/b}\left( 2\binom{|S \cap B_i(u)|}{2} -  (b-1) |S \cap B_i(u)| + b^2/4 \right) 
\end{align*}
as if set $S$ is chosen by including each node $v\in V$ in it with probability $1/2$. Notice that if $S$ is chosen randomly by including each $v\in V$ in $S$ with probability $1/2$, we would get 
\begin{align*}
\E[\Phi_1(S)] &= \frac{4}{\sum_{u \in U} imp_u} \sum_{u \in U} \frac{imp_u}{|N_H(u)|} \cdot \sum_{i=1}^{|N_H(u)|/b}\left( 2 \E[\binom{|S \cap B_i(u)|}{2}] -  (b-1) \E[|S \cap B_i(u)|] + b^2/4 \right)  \\
&=  \frac{4}{\sum_{u \in U} imp_u}\sum_{u \in U} \frac{imp_u}{|N_H(u)|} \sum_{i=1}^{|N_H(u)|/b}\left((b)(b - 1)/4  - (b-1) b/2 + b^2/4\right) \\
&= \frac{4}{\sum_{u \in U} imp_u} \sum_{u \in U} \frac{imp_u}{|N_H(u)|} \sum_{i=1}^{|N_H(u)|/b}\frac{b}{4} \\
&= \frac{4}{\sum_{u \in U} imp_u}\sum_{u \in U} imp_u \cdot \frac{1}{4} \\
&= 1.
\end{align*}

\begin{align*}
 \Phi_2(S) = \frac{10}{\gamma} \left(\frac{1}{\sum_{e'=\{v, v'\} \in E'} w(e') + \sum_{v\in 
V} w(v)} \right)
\left(\sum_{e'\in \binom{S}{2}\cap E'} 4 w(e') + \sum_{v\in S} 2 w(v) \right),
\end{align*}

Notice that if $S$ is chosen randomly by including each $v \in V$ in $S$ with probability $1/2$, we would get $\E[\Phi_2(S)] = \frac{10}{\gamma}$.

We next translate this potential to the utility/cost language of \Cref{lem:local_rounding} using an appropriate auxiliary mutli-graph $\bar{H}=(V, \bar{E})$. 
In particular, for each node $v \in V$, let us define
\[util_1(v) := \frac{4}{\sum_{u \in U} imp_u}(b-1)\sum_{u \in N_H(v)} \frac{imp_u}{|N_H(u)|}.\]

Notice that we have
\begin{align*}
\sum_{v \in V} util_1(v)
=\frac{4}{\sum_{u \in U} imp_u}\sum_{v \in V} (b-1)\sum_{u \in N_H(v)} \frac{imp_u}{|N_H(u)|}
= 4(b-1).
\end{align*}
Furthermore, for every bucket $B_i(u)$ and two distinct vertices $v, v' \in B_i(u)$, let us add a new edge between $v$ and $v'$ in the edge multi-set $\bar{E}$ of the auxiliary graph $\bar{H}$ (notice that there might be several parallel such edges, one for each $u$). We set the cost of this newly added edge as
\[cost_1(\{v,v'\}) := \frac{8}{\sum_{u \in U} imp_u} \frac{imp_u}{|N_H(u)|} \]
Notice that we have

\begin{align*}
\sum_{e\in \bar{E}} cost_1(e) &= \sum_{u \in U} \sum_{i = 1}^{|N_H(u)|/b}\sum_{v,v' \in \binom{B_i(u)}{2}}  \frac{8}{\sum_{u \in U} imp_u}\cdot \frac{imp_u}{|N_H(u)|} \\
&= \sum_{u\in U} \frac{|N_H(u)|}{b} \cdot \binom{b}{2} \cdot \frac{8}{\sum_{u \in U} imp_u} \cdot \frac{imp_u}{|N_H(u)|} \\
&= 4(b-1).
\end{align*}

We can write $\Phi_1(S) = \big(\frac{4}{\sum_{u \in U} imp_u}\sum_{u\in U}\frac{imp_u}{|N_H(u)|} (\sum_{i=1}^{|N_{H}(u)|/b} b^2/4)\big) - \big(\sum_{v \in S} util_1(v)\big) + \big(\sum_{e \in \bar{E}\cap \binom{S}{2}} cost_1(e)\big).$

For each node $v \in V$, let us define

\[util_2(v) := - \frac{10}{\gamma} \left(\frac{1}{\sum_{e'=\{v, v'\} \in E'} w(e') + \sum_{v\in 
V} w(v)} \right) 2 w(v)\]

and for every edge $e' \in E'$, let us define

\[cost_2(e') := \frac{10}{\gamma} \left(\frac{1}{\sum_{e'=\{v, v'\} \in E'} w(e') + \sum_{v\in 
V} w(v)} \right) 4 w(e')\]

Notice that we have

\[\sum_{e' \in E'}cost_2(e') \leq \frac{10}{\gamma}.\]

We can write $\Phi_2(S) = - \sum_{v \in S} util_2(v) + \sum_{e' \in E' \cap \binom{S}{2}} cost_2(e')$.

Using \cref{lem:local_rounding} with $\eps = 1/(4(b-1))$, we compute a subset $S \subseteq V$ satisfying 

\begin{align*}
    \Phi_1(S) + \Phi_2(S) &\leq&& \big(\frac{4}{\sum_{u\in U} imp_u}\sum_{u\in U}\frac{imp_u}{|N_H(u)|} (\sum_{i=1}^{|N_{H}(u)|/b} b^2/4)\big) \\
    & &&- \big(\frac{1}{2}\sum_{v \in V} util_1(v)\big) + \frac{1}{4}\big(\sum_{e \in \bar{E}} cost_1(e)\big)  + \eps \sum_{e \in \bar{E}} cost_1(e) \\
    & &&+ \left(-\frac{1}{2}\sum_{v \in V} util_2(v) + \frac{1}{4}\sum_{e'\in E'} cost_2(e') + \eps\sum_{e'\in E'} cost_2(e')\right) \\
    &=&& 1 + \eps \sum_{e \in \bar{E}} cost_1(e) + \frac{10}{\gamma} + \frac{1}{4(b-1)} \cdot \frac{10}{\gamma} \\
    &=&& 1 + \frac{1}{4(b-1)} \cdot 4(b-1) +  \frac{10}{\gamma} + \frac{1}{4(b-1)} \cdot \frac{10}{\gamma}\ \\    
    &\leq&& 3 + \frac{10}{\gamma}.
\end{align*}

We first verify that  $\sum_{e'\in \binom{S}{2}\cap E'} 4 w(e') + \sum_{v\in S} 2 w(v) \leq (1+\gamma) \left(
\sum_{e'=\{v, v'\} \in E'} w(e') + \sum_{v\in 
V} w(v)\right)$.

We have

\begin{align*} \Phi_2(S) = \frac{10}{\gamma} \left(\frac{1}{\sum_{e'=\{v, v'\} \in E'} w(e') + \sum_{v\in 
V} w(v)} \right)
\left(\sum_{e'\in \binom{S}{2}\cap E'} 4 w(e') + \sum_{v\in S} 2 w(v) \right) \leq 3 + \frac{10}{\gamma}\end{align*}

and therefore

\begin{align*}
\sum_{e'\in \binom{S}{2}\cap E'} 4 w(e') + \sum_{v\in S} 2 w(v) 
&\leq \frac{\gamma}{10}\left(3 + \frac{10}{\gamma} \right)\left(
\sum_{e'=\{v, v'\} \in E'} w(e') + \sum_{v\in 
V} w(v)\right) \\
&\leq (1+\gamma) \left(
\sum_{e'=\{v, v'\} \in E'} w(e') + \sum_{v\in 
V} w(v)\right).
\end{align*}

Let us call a bucket $B_i(u)$ bad if $||B_i(u)\cap S|-b/2|\geq b^{0.8}$. Let us call a node $u$ bad if more than $1/b^{0.2}$ fraction of its buckets are bad, and let $U_{good}$ be the set of nodes that are not bad in this sense. For any node $u\in U_{good}$, we can conclude that $|N_{H}(u)\cap S| \geq (1-1/b^{0.2})(1/2-1/b^{0.2}) |N_{H}(u)|\geq (1/2-3/b^{0.2}) |N_{H}(u)| \geq (1/2)(1-\gamma)|N_{H}(u)|$, and also that $|N_{H}(u)\cap S| \leq ((1-1/b^{0.2})(1/2+1/b^{0.2})+(1/b^{0.2}))|N_{H}(u)| \leq (1/2+3/b^{0.2}) |N_{H}(u)| \leq (1/2)(1+\gamma)|N_{H}(u)|$.  

Recall that
\begin{align*}
\Phi_1(S) &:= &&\frac{4}{\sum_{u \in U} imp_u}\sum_{u \in U} \frac{imp_u}{|N_H(u)|} \cdot \left( \sum_{i=1}^{|N_H(u)|/b}\left( |S \cap B_i(u)| -  b/2\right)^2 \right)
\end{align*}
and that for our chosen set $S$, we have $\Phi_1(S)\leq 3 + \frac{10}{\gamma}$. Notice that for each bad bucket, we have $\left( |S \cap B_i(u)| -  b/2\right)^2 \geq b^{1.6}$. Thus, for each bad node, its contribution to the potential is at least $\frac{4}{\sum_{u \in U} imp_u}imp_{u} b^{0.4}$. Hence, we know that $\sum_{u\in U\setminus U_{good}} imp_{u} \leq (1/(b^{0.4}))\left(3 + \frac{10}{\gamma}\right) \sum_{u\in U} imp_{u} \leq (1/(b^{0.2}))\sum_{u\in U} imp_{u}$ and therefore $imp_{u} \geq (1-\gamma) \sum_{u\in U} imp_{u}$. 

It remains to discuss the work and depth of the algorithm. Note that the auxiliary multi-graph that we give as input to \cref{lem:local_rounding} has at most $n$ vertices and $m \cdot b + m'$ edges, and we can construct the multi-graph in $O(mb + m')$ work and $\poly(\log N)$ depth. With the same work and depth, we can also calculate the node utilities and edge costs.
The invocation \cref{lem:local_rounding} takes $(mb + m' + n)\poly(\log \log N)$ work and $\poly(\log N/\eps)$ depth.
As $b := \lceil (1/\gamma)^6\rceil$ and $\eps = 1/(4(b-1))$, we can conclude that the overall work is 
 at most $(m+n + m')\poly(\log \log N)/\gamma^{30}$ and the depth is $\poly(\log N)$.

\end{proof}
\subsection{Wrap Up}
\label{subsec:MISwrap}
We are now ready to present the proof of \Cref{lem:coreMIS}. For convenience, we first restate the lemma.
\MIShittingSet*

\begin{proof}[Proof of \Cref{lem:coreMIS}] 
Let $K :=\lceil 100\log \log N\rceil$. 
Let $H^{(low)} = H[U^{(low)} \sqcup V^{(low)}]$ and $G'^{(low)} = G'[V^{(low)}]$ with

\[V^{(low)} = \{v \in V \colon k_v \in \{K+ 1, K+2,\ldots, \lceil \log(N)\rceil\}\}\] 

and

\[U^{(low)} = \{u \in U \colon |N_H(u) \cap V^{(low)}| \geq \lceil 10 \log^{25} (N) \rceil\}.\]

We also define for every $\{v,v'\} \in E(G'^{(low)})$
\[w^{(low)}(\{v,v'\})= w(\{v,v'\})\]

and for ever $v \in V^{(low)}$, we define

\[w^{(low)}(v) = \sum_{v' \in N_{G'}(v) \setminus V^{(low)}} w(\{v,v'\}) \cdot 2^{-k_{v'}}.\]

We invoke \cref{lem:mis_set_low_probability} with input $H^{(low)}$, $G'^{(low)}$, and edge and vertex weights $w^{(low)}$ and as a result we obtain two subsets $S^{(low)} \subseteq V^{(low)}$ and $U^{(low)}_{good} \subseteq U^{(low)}$ such that 

\begin{itemize}
    \item $\sum_{U^{(low)}_{good}} imp_u \geq 0.99\sum_{u \in U^{(low)}} imp_u$
    \item for every $u \in U^{(low)}_{good}$, $|N_H(u) \cap S^{(low)}| \cdot 2^{-K} \in (\sum_{v \in N_H(u) \cap V^{(low)}} 2^{-k_v}) \pm 0.01$
    \item \begin{align*}&\sum_{e'\in \binom{S^{(low)}}{2}\cap E'} w^{(low)}(e') 2^{-2K} + \sum_{v\in S^{(low)}} w^{(low)}(v) 2^{-K} \\
    &\leq 2 \cdot \left( \sum_{e'=\{v, v'\} \in \binom{V^{(low)}}{2} \cap E'} w^{(low)}(e') 2^{-(k_v+k_{v'})} + \sum_{v\in V^{(low)}} w^{(low)}(v) 2^{-k_v} \right) \end{align*}
\end{itemize}

Let $H^{(high)} = H[U^{(high)} \sqcup V^{(high)}]$ and  $G'^{(high)} = G'[V^{(high)}]$ with

\[V^{(high)} = (V \setminus V^{(low}) \cup S^{(low)} \] 

and

\[U^{(high)} = U^{(low)}_{good} \cup (U \setminus U^{(low)}).\]

We also define for every $v \in V^{(high)}$,

\[k^{(high)}_v = \min(k_v, K).\]

Before invoking \cref{lem:mis_high_probability}, we have to verify that for every $u \in U^{(high)}$, it holds that

\[\sum_{v \in N_H(u) \cap V^{(high)}} 2^{- \min(k_v,K)} \leq 40.\]

First, consider the case that $u \in U \setminus U^{(low)}$, and therefore $|N_H(u) \cap V^{(low)}| < \lceil 10\log^{25}(N)\rceil$. We get

\begin{align*}
    \sum_{v \in N_H(u) \cap V^{(high)}} 2^{- \min(k_v,K)} \leq \sum_{v \in N_H(u)} 2^{- k_v} + |N_H(u) \cap V^{(low)}| \cdot 2^{-K} \leq 10 + \frac{ \lceil 10\log^{25}(N)\rceil}{2^{K}} \leq 40.
\end{align*}

Next, consider the case that $u \in U^{(low)}_{good}$. We get

\begin{align*}
 \sum_{v \in N_H(u) \cap V^{(high)}} 2^{- \min(k_v,K)} 
 &= \sum_{v \in N_H(u) \setminus V^{(low)}} 2^{-k_v} + |N_H(u) \cap S^{(low)}| 2^{-K} \\
 &\leq \sum_{v \in N_H(u) \setminus V^{(low)}} 2^{-k_v} + \sum_{v \in N_H(u) \cap V^{(low)}} 2^{-k_v} + 0.01 \\
 &\leq 10 + 0.01 \leq 40. 
\end{align*}

Thus, we can now invoke \cref{lem:mis_high_probability} and as a result we obtain two subsets $S \subseteq V^{(high)}$ and $U_{good} \subseteq U^{(high)}$ such that 

\begin{itemize}
\item $\sum_{u \in U_{good}} imp_u \geq 0.99\sum_{u \in U^{(high)}} imp_u$ \\
\item  for each $u \in U_{good}$,  we have 
\begin{align*} &|N_H(u) \cap S| \in \\
 &[ (\sum_{v \in N_H(u) \cap V^{(high)}}2^{-\min(k_v, K)}) - 0.1, (\sum_{v \in N_H(u) \cap V^{(high)}}2^{-\min(k_v, K)}) + 10^{25}]
 \end{align*}
\item $\sum_{e'\in \binom{S}{2}\cap E'} w(e') \leq C_{L\ref{lem:mis_high_probability}} \cdot \sum_{e'=\{v, v'\} \in  \binom{V^{(high)}}{2}\cap E'} w(e') \cdot 2^{-(k^{(high)}_v+k^{(high)}_{v'})}$
\end{itemize}
for some constant $C_{L\ref{lem:mis_high_probability}} > 0$.

We start to prove the third bullet point. We have

\begin{align*}
& &&\sum_{e'=\{v, v'\} \in  \binom{V^{(high)}}{2}\cap E'} w(e') \cdot 2^{-(k^{(high)}_v+k^{(high)}_{v'})} \\
&=&&\sum_{e'=\{v, v'\} \in  \binom{V \setminus V^{(low)}}{2}\cap E'} w(e') \cdot 2^{-(k^{(high)}_v+k^{(high)}_{v'})} \\
& &&+ \sum_{e'=\{v, v'\} \in  \binom{S^{(low)}}{2}\cap E'} w(e') \cdot 2^{-(k^{(high)}_v+k^{(high)}_{v'})} \\
& &&+ \sum_{v \in S^{(low)}}\sum_{v' \in N_{G'}(v) \setminus V^{(low)}} w(e') \cdot 2^{-(k^{(high)}_v+k^{(high)}_{v'})}\\
&=&&\sum_{e'=\{v, v'\} \in  \binom{V \setminus V^{(low)}}{2}\cap E'} w(e') \cdot 2^{-(k_v+k_{v'})} \\
& &&+ \sum_{e'=\{v, v'\} \in  \binom{S^{(low)}}{2}\cap E'} w(e') \cdot 2^{-2K} \\
& &&+ \sum_{v \in S^{(low)}}\sum_{v' \in N_{G'}(v) \setminus V^{(low)}} w(e') \cdot 2^{-k_{v'}} \cdot 2^{-K} \\
&=&&\sum_{e'=\{v, v'\} \in  \binom{V \setminus V^{(low)}}{2}\cap E'} w(e') \cdot 2^{-(k_v+k_{v'})} \\
&&&+\sum_{e'\in \binom{S^{(low)}}{2}\cap E'} w^{(low)}(e') 2^{-2K} + \sum_{v\in S^{(low)}} w^{(low)}(v) 2^{-K} \\
&\leq&& \sum_{e'=\{v, v'\} \in  \binom{V \setminus V^{(low)}}{2}\cap E'} w(e') \cdot 2^{-(k_v+k_{v'})} \\
&&&+ 2 \cdot \left( \sum_{e'=\{v, v'\} \in \binom{V^{(low)}}{2} \cap E'} w^{(low)}(e') 2^{-(k_v+k_{v'})} + \sum_{v\in V^{(low)}} w^{(low)}(v) 2^{-k_v} \right) \\
&\leq && 2 \cdot \sum_{e'=\{v, v'\} \in   E'} w(e') \cdot 2^{-(k_v+k_{v'})}
\end{align*}

Therefore,

\begin{align*}
\sum_{e'\in \binom{S}{2}\cap E'} w(e') &\leq C_{L\ref{lem:mis_high_probability}} \cdot \sum_{e'=\{v, v'\} \in  \binom{V^{(high)}}{2}\cap E'} w(e') \cdot 2^{-(k^{(high)}_v+k^{(high)}_{v'})} \\
&\leq (C_{L\ref{lem:mis_high_probability}} \cdot 2) \cdot \sum_{e'=\{v, v'\} \in   E'} w(e') \cdot 2^{-(k_v+k_{v'})}.
\end{align*}

Note that

 \begin{align*}
 \sum_{u \in U_{good}} imp_u 
 &\geq 0.99 \sum_{u \in U^{(high)}} imp_u  \\
 &= 0.99 \left(\sum_{u \in U^{(low)}_{good}} imp_u + \sum_{u \in U \setminus U^{(low)}} imp_u \right) \\
 &\geq 0.99 \left(0.99 \cdot\sum_{u \in U^{(low)}} imp_u + \sum_{u \in U \setminus U^{(low)}} imp_u \ \right) \\
 &\geq 0.9 \sum_{u \in U} imp_u.
 \end{align*}

Thus, it remains to verify that for every $u \in U_{good}$, $|N_H(u) \cap S| \in [1,10^{30}]$. First, let's consider a $u \in U_{good} \cap U^{(low)}_{good}$. We have

\begin{align*}
   |N_H(u) \cap S| 
   &\leq \sum_{v \in N_H(u) \cap V^{(high)}}2^{-\min(k_v, K)} + 10^{25} \\
   &= \sum_{v \in N_H(u) \setminus V^{(low)}} 2^{-k_v} + |N_H(u) \cap S^{(low)}| 2^{-K}  + 10^{25}\\
 &\leq \sum_{v \in N_H(u) \setminus V^{(low)}} 2^{-k_v} + \sum_{v \in N_H(u) \cap V^{(low)}} 2^{-k_v} + 0.01  + 10^{25} \\
 &\leq 10^{30}.
\end{align*}

and

\begin{align*}
   |N_H(u) \cap S| 
   &\geq \sum_{v \in N_H(u) \cap V^{(high)}}2^{-\min(k_v, K)} -0.1 \\
   &= \sum_{v \in N_H(u) \setminus V^{(low)}} 2^{-k_v} + |N_H(u) \cap S^{(low)}| 2^{-K} -0.1\\
 &\geq \sum_{v \in N_H(u) \setminus V^{(low)}} 2^{-k_v} + \sum_{v \in N_H(u) \cap V^{(low)}} 2^{-k_v} - 0.01  - 0.1 \\
 &\geq 5 - 0.01 - 0.1 \\
 &\geq 1.
\end{align*}

Next, consider a $u \in U_{good} \setminus U^{(low)}$.
We have

\begin{align*}
 |N_H(u) \cap S| 
   &\leq \sum_{v \in N_H(u) \cap V^{(high)}}2^{-\min(k_v, K)} + 10^{25}\\ 
   &\leq \sum_{v \in N_H(u)} 2^{- k_v} + |N_H(u) \cap V^{(low)}| \cdot 2^{-K}  + 10^{25}\\
   &\leq 10 + \frac{ \lceil 10\log^{25}(N)\rceil}{2^{K}} +10^{25}\leq 10^{30}
\end{align*}

and

\begin{align*}
 |N_H(u) \cap S| 
   &\geq \sum_{v \in N_H(u) \cap V^{(high)}}2^{-\min(k_v, K)} -0.1 \\
   &\geq \sum_{v \in N_H(u) \setminus V^{(low)}} 2^{-k_v} - 0.1 \\
   &\geq \sum_{v \in N_H(u)} 2^{-k_v} - 2^{-K}|N_H(u) \cap V^{(low)}| - 0.1 \\
   &\geq 5 - 2^{-K} \cdot \lceil 10\log^{25}(N)\rceil -0.1 \\
   &\geq 1.
\end{align*}

Both the invocation of \cref{lem:mis_set_low_probability} and \cref{lem:mis_high_probability} take $(m+n+m') \poly(\log \log N) + \poly(\log N)$ work and $\poly(\log N)$ depth, so the overall work is $(m+n+m') \poly(\log \log N) + \poly(\log N)$  and the overall depth is $\poly(\log N)$.
\end{proof}

\bibliographystyle{alpha}
\bibliography{ref}
\appendix
\section{A lemma on iterations of multiplicative and additive losses}
During our gradual rounding processes, per iteration, we incur multiplicative and additive losses. Iterating these can result in an untidy loss bound. The next lemma, which has a simple proof, allows us to state the aggregate loss effect in a simpler way (with one factor coming from all the multiplicative terms and an additive term from additive losses, both relaxed by a $2$ factor). 

\begin{lemma}
\label{lem:iterativeLoss}
    Consider $\gamma^{(i)}\geq 0$ for $i\in [1, L]$ such that $\sum_{i=1}^{L} \gamma^{(i)} \leq 1/2$. Consider an arbitrary value $Z\geq 0$, and define a recursive function as follows: set $f(Z, 0)=Z$, and for all $i\geq 1$, set $f(Z, i) = (1+\gamma^{(i)})f(Z, i-1) + \gamma^{(i)}$. Then, for all $i\in [0, L]$, we have
    $$f(Z, i) \leq g(Z, i):= 
    Z\prod_{j=1}^{i}(1+2\gamma^{(j)}) + \sum_{j=1}^{i} 2\gamma^{(j)}.$$ 

    Similarly, define another recursive function as follows: set $f'(Z, 0)=Z$, and for all $i\geq 1$, set $f'(Z, i) = (1-\gamma^{(i)})f'(Z, i-1) - \gamma^{(i)}$. Then, for all $i\in [0, L]$, we have
    $$f'(Z, i) \geq g'(Z, i):= 
    Z\prod_{j=1}^{i}(1-2\gamma^{(j)}) - \sum_{j=1}^{i} 2\gamma^{(j)}.$$
\end{lemma}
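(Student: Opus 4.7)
Both bounds are proved by a straightforward induction on $i$. I will focus on the upper bound first; the lower bound is completely symmetric. The base case $i=0$ is immediate since by definition $f(Z,0)=Z=g(Z,0)$, and similarly $f'(Z,0)=Z=g'(Z,0)$.

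For the inductive step on the upper bound, assume $f(Z,i-1)\le g(Z,i-1)$. Using the recursion and monotonicity (since $1+\gamma^{(i)}\ge 0$), we get
\[
f(Z,i)\;\le\;(1+\gamma^{(i)})g(Z,i-1)+\gamma^{(i)}.
\]
So it suffices to show $(1+\gamma^{(i)})g(Z,i-1)+\gamma^{(i)}\le g(Z,i)$. Writing both sides out and subtracting, the coefficient of $Z$ on the right-minus-left side is
\[
(1+2\gamma^{(i)})\prod_{j=1}^{i-1}(1+2\gamma^{(j)})-(1+\gamma^{(i)})\prod_{j=1}^{i-1}(1+2\gamma^{(j)})\;=\;\gamma^{(i)}\prod_{j=1}^{i-1}(1+2\gamma^{(j)})\;\ge\;0,
\]
and the constant part (after cancellations) equals
\[
2\gamma^{(i)}-\gamma^{(i)}-\gamma^{(i)}\sum_{j=1}^{i-1}2\gamma^{(j)}\;=\;\gamma^{(i)}\Bigl(1-2\sum_{j=1}^{i-1}\gamma^{(j)}\Bigr)\;\ge\;0,
\]
where the last inequality is the only place the hypothesis $\sum_{j=1}^{L}\gamma^{(j)}\le 1/2$ is used. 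Adding these two nonnegative quantities completes the inductive step.

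For the lower bound, the induction is analogous. Assuming $f'(Z,i-1)\ge g'(Z,i-1)$, we use that $1-\gamma^{(i)}\ge 1-\sum_j\gamma^{(j)}\ge 1/2\ge 0$ to preserve the inequality when multiplying, so
\[
f'(Z,i)\;\ge\;(1-\gamma^{(i)})g'(Z,i-1)-\gamma^{(i)}.
\]
Expanding and comparing against $g'(Z,i)$, the difference (LHS minus RHS) in the $Z$-coefficient is $\gamma^{(i)}\prod_{j=1}^{i-1}(1-2\gamma^{(j)})\ge 0$ (nonnegativity of the product again follows from $\sum_j\gamma^{(j)}\le 1/2$), and the difference in the constant part simplifies to $\gamma^{(i)}\bigl(1+2\sum_{j=1}^{i-1}\gamma^{(j)}\bigr)\ge 0$. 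This closes the induction.

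The only subtlety, and hence the ``main obstacle'' (though it is quite mild), is keeping track of the sign of the intermediate factors to justify monotonicity of the recursion and the nonnegativity of $\prod_{j=1}^{i-1}(1\pm 2\gamma^{(j)})$; the hypothesis $\sum_{j=1}^{L}\gamma^{(j)}\le 1/2$ handles both uses simultaneously. Beyond that, the argument is a routine one-step algebraic verification per iteration.
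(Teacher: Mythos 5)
Your proof is correct and follows essentially the same route as the paper's: induction on $i$, substituting the inductive hypothesis into the recursion and then verifying that the resulting expression is dominated by $g(Z,i)$ (resp.\ dominates $g'(Z,i)$), with the hypothesis $\sum_j \gamma^{(j)} \le 1/2$ entering exactly where the leftover term $\gamma^{(i)}\bigl(1 - 2\sum_{j<i}\gamma^{(j)}\bigr)$ must be nonnegative. Your bookkeeping of the $Z$-coefficient and constant part separately (using $Z \ge 0$) is just a slightly more explicit presentation of the paper's chain of inequalities, so there is nothing substantive to distinguish the two.
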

\begin{proof} We first prove, by induction on $i$, that $f(Z, i) \leq g(Z, i)$. For the base case of $i=0$, we have $f(Z, 0) = g(Z, 0) = Z$. For the inductive step of $i\geq 1$, suppose $f(Z, i-1) \leq g(Z, i-1)$. We prove that $f(Z, i) \leq g(Z, i)$. In particular, we have 
\begin{align*}
f(Z, i) 
&= &&(1+\gamma^{(i)})f(Z, i) + \gamma^{(i)} \\ 
&\leq &&(1+\gamma^{(i)})g(Z, i) + \gamma^{(i)} \\
&=&&(1+\gamma^{(i)}) \left(Z \prod_{j=1}^{i-1}(1+2\gamma^{(j)}) + \sum_{j=1}^{i-1} 2\gamma^{(j)}\right) + \gamma^{(i)}\\
&=&& Z(1+\gamma^{(i)})\prod_{j=1}^{i-1}(1+2\gamma^{(j)}) + \sum_{j=1}^{i-1} 2\gamma^{(j)} + \gamma^{(i)} \left(1+\sum_{j=1}^{i-1} 2\gamma^{(j)}\right) \\
&\leq && Z(1+\gamma^{(i)})\prod_{j=1}^{i-1}(1+2\gamma^{(j)}) + \sum_{j=1}^{i-1} 2\gamma^{(j)} + 2\gamma^{(i)} \\
& \leq  && Z\prod_{j=1}^{i}(1+2\gamma^{(j)}) + \sum_{j=1}^{i} 2\gamma^{(j)} \\
&=&& g(Z, i).
\end{align*} 

We next prove, again by induction on $i$, that $f'(Z, i)\geq g'(Z, i)$. For the base case of $i=0$, we have $f'(Z, 0) = g'(Z, 0) = Z$. For the inductive step of $i\geq 1$, suppose $f'(Z, i-1) \geq g'(Z, i-1)$. We prove that $f'(Z, i) \geq g'(Z, i)$. In particular, we have

\begin{align*}
f'(Z, i) 
&= &&(1-\gamma^{(i)})f'(Z, i) - \gamma^{(i)} \\ 
&\geq &&(1-\gamma^{(i)})g'(Z, i) - \gamma^{(i)} \\
&=&&(1-\gamma^{(i)}) \left(Z \prod_{j=1}^{i-1}(1-2\gamma^{(j)}) - \sum_{j=1}^{i-1} 2\gamma^{(j)}\right) - \gamma^{(i)}\\
&=&& Z(1-\gamma^{(i)})\prod_{j=1}^{i-1}(1-2\gamma^{(j)}) - \sum_{j=1}^{i-1} 2\gamma^{(j)} - \gamma^{(i)} \left(1-\sum_{j=1}^{i-1} 2\gamma^{(j)}\right) \\
&\geq && Z(1-\gamma^{(i)})\prod_{j=1}^{i-1}(1-2\gamma^{(j)}) - \sum_{j=1}^{i-1} 2\gamma^{(j)} - 2\gamma^{(i)} \\
& \geq  && Z\prod_{j=1}^{i}(1-2\gamma^{(j)}) - \sum_{j=1}^{i} 2\gamma^{(j)} \\
&=&& g'(Z, i).
\end{align*}
\end{proof}

\end{document}